\renewenvironment{proof}[1][\normalfont\bfseries\proofname]{\par
  \pushQED{\qed}%
  \normalfont \topsep6\p@\@plus6\p@\relax
  \trivlist
  \item\relax
        {\itshape
    #1\@addpunct{.}}\hspace\labelsep\ignorespaces
}{%
  \popQED\endtrivlist\@endpefalse
}
\def\thm@space@setup{%
  \thm@preskip=2\parskip \thm@postskip=0pt
}
\newtheorem{definition}{Definition}[section]
\newtheorem{lemma}{Lemma}[section]
\newtheorem{theorem}{Theorem}[section]
\newtheorem{fact}{Fact}[section]
\newcommand{\K}{\mathcal{K}}
\newcommand{\C}{\mathcal{C}}
\renewcommand{\P}{\mathcal{P}}
\newcommand{\Q}{\mathcal{Q}}
\newcommand{\V}{\mathcal{V}}
\newcommand{\T}{\mathcal{T}}
\newcommand{\I}{\mathcal{I}}
\newcommand{\E}{\mathcal{E}}
\newcommand{\A}{\mathcal{A}}
\newcommand{\R}{\mathcal{R}}
\newcommand{\J}{\mathcal{J}}
\renewcommand{\L}{\mathcal{L}}
\renewcommand{\H}{\mathcal{H}}
\definecolor{myred}{HTML}{B2182B}
\definecolor{myblue}{HTML}{4063FF}
\begin{document}
\title{Sparse dynamic discretization discovery via arc-dependent time discretizations}

\author{Madison Van Dyk\thanks{\noindent Dept. of Combinatorics and Optimization, University of Waterloo, Waterloo, ON, N2L 3G1, Canada. 
\newline \indent ~~Our work was sponsored by the NSERC Discovery Grant Program, grant number RGPIN-03956-2017,
\newline 
\indent~ and by an Amazon post-internship fellowship.} \and Jochen Koenemann$^*$}
\date{}
\maketitle

\begin{abstract}
\parskip=.5\baselineskip

\noindent While many time-dependent network design problems can be formulated as time-indexed formulations with strong relaxations, the size of these formulations depends on the discretization of the time horizon and can become prohibitively large. The recently-developed dynamic discretization discovery (DDD) method allows many time-dependent problems to become more tractable by iteratively solving instances of the problem on smaller networks where each node has its own discrete set of departure times. However, in the current implementation of DDD, all arcs departing a common node share the same set of departure times. This causes DDD to be ineffective for solving problems where all near-optimal solutions require many distinct departure times at the majority of the high-degree nodes in the network. Region-based networks are one such structure that often leads to many high-degree nodes, and their increasing popularity underscores the importance of tailoring solution methods for these networks. 

\noindent To improve methods for solving problems that require many departure times at nodes, we develop a DDD framework where the set of departure times is determined on the arc level rather than the node level. We apply this arc-based DDD method to instances of the service network design problem (SND). We show that an arc-based approach is particularly advantageous when instances arise from region-based networks, and when candidate paths are fixed in the base graph for each commodity. Moreover, our algorithm builds upon the existing DDD framework and achieves these improvements with only benign modifications to the original implementation. 

\noindent \textit{Key words:} time-expanded networks, dynamic discretization discovery, service network design, arc-dependent discretization, region-based networks
\end{abstract}

\section{Introduction}

Practical problems arising in supply chain optimization often involve both routing and scheduling decisions. In a typical delivery network, each shipment is transported from its origin to its destination via intermediate warehouses. At each warehouse, shipments may be unloaded from the inbound trailer and loaded onto an outbound trailer. This physical (flat) network can be expressed as a directed graph $D = (N,A)$ with the node set $N$ representing the warehouse locations, and the arc set $A$ representing the available transportation legs between locations. 

In less-than-truckload (LTL) transportation, shipments are small in size relative to the capacity of the trailers, and as a result transportation costs incentivize the consolidation of shipments onto shared trailers. In order to accomplish this consolidation, one must coordinate both the physical route of the shipments as well as the departure time for each leg of the journey. The optimization problem often considered to model this problem is called the \emph{service network design} (SND) problem. In practical applications, there are often additional restrictions on the physical routes of each shipment. For example, in problems considered by Hewitt \cite{enhanced} and Lara et al.~\cite{amazon}, each shipment must be routed on one of a (possibly singleton) set of designated paths (in space). We will refer to SND with the addition that each commodity has a designated feasible subgraph in space as \emph{service network design with restricted routes} (SND-RR). 

Practical temporal problems are solved over some specified time horizon $T$ with an associated time discretization $[T]:= \{0,1, \hdots, T\}$ indicating the times at which decisions can be made (i.e. the times at which trucks can depart warehouses in SND). Temporal network design problems are often modeled using \emph{time-indexed formulations} which have variables and constraints indexed by each time point in a given discretization of the time horizon \cite{SkutellaSurvey}. For example, for an instance of SND where the time horizon is a single day, an hourly discretization would result in a variable for each arc for each hour of the day, representing the possibility for trucks to depart each facility on the hour. These time-indexed formulations can also be thought of as static integer flow formulations in the corresponding \emph{(fully) time-expanded network}, $D_T$, which has copies of each node and arc in $D$ for each time $t \in [T]$. We refer to nodes and arcs in $D_T$ as timed nodes and timed arcs respectively. While time-indexed formulations often have strong LP relaxations, these formulations become impractical to solve for fine time discretizations since the number of variables (and size of $D_T$) grows linearly in $T$ \cite{FordFulkerson58, FordFulkerson59}. 

While an optimal solution may require a large number of departure times in the network, it may be the case that each individual node only requires a small number of departure times. The recently developed \emph{dynamic discretization discovery} (DDD) method of Boland et al.~\cite{Boland1} allows many time-dependent (minimization)\footnote{The DDD framework applies to both minimization and maximization problems. Without loss of generality we will consider minimization problems in this paper.} problems to become more tractable by taking advantage of this observation. A DDD algorithm solves a series of mixed-integer programs (MIPs) defined on \emph{partially time-expanded networks}, denoted $D_S$, that include only a subset of the timed nodes in the fully time-expanded network and shortened timed arcs.  The partially time-expanded networks and corresponding MIPs are constructed to ensure that they provide a lower bound on the optimal value of the original problem. If a solution to the partially time-expanded network cannot be converted to a solution to the original instance of equal cost, the partially time-expanded network is refined by adding timed nodes and timed arcs based on the current solution. In effect, each node begins with a small subset of departure times in $[T]$, and in each iteration the set of departure times at each node is modified non-uniformly until an optimal solution is found. Thus, the DDD framework has the potential to determine which time-points are needed at each node in an optimal solution, without ever constructing the fully time-expanded network. 

In the current implementation of DDD, in each iteration, all arcs departing a common node share the same set of departure times. This is problematic for temporal problems where all near-optimal solutions require a large number of departure times at many nodes in the network. For such problems, in the final iteration of DDD, $D_S$ must contain a large proportion of the timed nodes in $D_T$. Thus, in the current implementation of DDD this leads to a final formulation with a size close to that of the time-indexed formulation on the fully time-expanded network. However, for many problems, while each node may require a large number of departure times in a near-optimal solution, each \emph{arc} may only require a small number of departure times. For these problems, an optimal solution can be expressed on a time-expanded network with a small number of timed arcs relative to the arc set in the fully time-expanded network. 

Temporal network design problems frequently have optimal solutions with departure times that are ``node-dense'' but ``arc-sparse'' when many of the arcs in the network are incident to high-degree nodes. Such high-degree nodes frequently arise in the region-based hub-and-spoke network structure. In a hub-and-spoke network, locations are divided into regions and  each region is represented by a hub. Shipments that have origins and destinations in two different regions must travel between regions via the hubs, resulting in high-degree nodes that have significant throughput. These region-based networks are popular structures in both ground and air transportation networks. Two noteworthy examples are the distribution networks of UPS and FedEx, which have operated with a hub-and-spoke structure for decades \cite{bowen}. An increased focus on these networks is also timely; in the most recent Letter to Shareholders at Amazon, CEO Andy Jassy announced the recent shift of the fulfillment network from a ``national fulfillment network to a  regionalized network model'' that still allows for national shipments when necessary and that optimizes connections between regions \cite{amazon_regional}. The increasing popularity of these networks underscores the importance of tailoring solution methods for region-based networks.  

\subsection{Our Contributions}
To improve methods for solving problems that require many departure times at nodes, we enhance the DDD paradigm by developing an approach where the set of departure times are determined on the arc level rather than the node level. First, we construct a partition of the arcs in the flat network $D$ so that arcs that can be used by a fixed commodity in a feasible solution, which depart the same node,  must be contained in the same part. This partitioning groups arcs into subsets that have the same set of departure times in each iteration of DDD. In the worst case, the finest possible partition results in the trivial partition where each part is the set of outbound arcs at a fixed node, $v$, denoted $\delta^+(v)$. However, in Section \ref{sec:applications} we show that many graph structures allow for finer arc partitions. With this construction we are able to implement a more conservative refinement step in the DDD algorithm. Specifically, a time point can be added to the set of available departure times for arcs in a single part rather than the full set of arcs in $\delta^+(v)$ as is the case of the original DDD approach. This more effective refinement strategy decreases the size of the resulting formulations in each iteration. 

To build a DDD algorithm based on this arc partition, we create an auxiliary flat network and modify the construction of the lower bound and refinement steps. To form the auxiliary graph, $G$, we replicate each node $v$ in $D$, for each part in the partition of $\delta^+(v)$, which allows us to store the set of departure times for the arcs in that part. This replication process leads to multiple copies of each arc $uv$ in the flat network $D$, roughly equal to the number of parts in the partition of $\delta^+(v)$.

We then create a new time-indexed formulation and DDD algorithm where the flat network is the auxiliary graph $G$. Since the auxiliary network has multiple arcs representing the same underlying arc in the flat network $D$, we cannot simply apply the original DDD algorithm to the resulting routing problem with flat network $G$ instead of $D$. Instead, careful attention must be taken to ensure cost savings obtained by consolidating flow onto a common arc are captured at each step of the DDD algorithm. We will refer to our DDD algorithm based on the auxiliary network as an \emph{arc-based} DDD algorithm, and the original approach as a \emph{node-based} DDD algorithm.

Our arc-based DDD approach broadens the application of the DDD paradigm to problems that require a large number of departure times at each node but not each arc. We apply this arc-based approach to the problem of SND-RR, which is a variant of service network design where each commodity has a designated feasible network in space through which it must be routed. We show that this arc-based approach offers speed-up for region-based networks following a hub-and-spoke structure. We also demonstrate that our arc-based DDD algorithm offers improvement over the node-based approach when applied to problems where shipment paths are fixed in the flat (physical) network, as was the case in \cite{amazon}. Moreover, since our algorithm builds upon the existing DDD framework it achieves these improvements with only benign modifications to the original implementation.

{\color{black} In the worst case, the problem structure only permits the trivial arc partition where each part is the set of outbound arcs at a fixed node. However, even in this case we show that our arc-based DDD approach improves over the node-based approach when release times and deadlines are limited to a handful of node-dependent \emph{critical times}, which naturally emerge from shift structures at warehouses. This improvement is possible since the structure of the auxiliary graph allows additional arrival times to be added to nodes for commodities terminating at that node, without introducing any additional departure times (and subsequently introducing additional variables).}

Our main contributions are as follows:
\begin{enumerate}[noitemsep]
\item We develop an arc-based DDD approach that decreases the size of the formulation solved in each iteration for many network structures, including hub-and-spoke networks.
\item We prove that the arc-based DDD approach generalizes the node-based approach (Theorem \ref{theorem:arc_based_equiv}). Specifically, given any initial partial network and refinement process defining a node-based DDD algorithm, there is a corresponding initial partial (auxiliary) network and refinement process for the arc-based DDD approach which leads to the same algorithm.
\item We implement an arc-based DDD algorithm for the problem of SND-RR and apply the algorithm to families of instances based on the original construction outlined in \cite{Boland1}. We consider three sets of instances:
\begin{enumerate}
	\item SND-RR with designated paths: each commodity has a designated path in the flat (physical) network;
	\item SND on hub-and-spoke networks: the flat network has a hub-and-spoke structure and commodity routes are unrestricted;
	\item {\color{black} SND with critical times: the flat network is randomly chosen as in \cite{Boland1}, commodity routes are unrestricted, and each node has a set of critical times (release times and deadlines). }\\	
\end{enumerate}
Our computational results are summarized in Table \ref{tab:comp_summary}, where the variable and constraint counts are with respect to the lower bound formulation in the final iteration of each DDD algorithm. Each entry indicates the average percentage reduction in that field when solving an instance with the arc-based DDD approach compared to solving the instance with the node-based DDD approach.

\begin{table}[h!]
\centering
\resizebox{.75\textwidth}{!}{\begin{tabular}{l|c|c|c}
\toprule
Family of instances &  runtime & variable count & constraint count \\
\hline 
SND-RR with designated paths & 57\% & 45\% & 40\% \\
SND on hub-and-spoke networks & 42\% & 34\% & 31\% \\
SND with critical times & 12\% & 20\% & 19\% \\
\bottomrule
\end{tabular}}%
\caption{Average $\%$ decrease for arc-based DDD  over  node-based DDD.}
\label{tab:comp_summary}
\end{table}
\FloatBarrier
\end{enumerate}

The implementation of the arc-based and node-based DDD approaches as well as the benchmark instances can be found at \href{https://github.com/madisonvandyk/SND-RR}{\textsf{https://github.com/madisonvandyk/SND-RR}}.

\section{Related work}\label{sec:lit_rev}
Temporal network design problems were first introduced by Ford and Fulkerson \cite{FordFulkerson58, FordFulkerson59} in the context of network flow theory. The authors demonstrated that these problems can be reframed as static network flow problems in the corresponding time-expanded network. In the case of multicommodity flows, Hall et al.~\cite{Hall2003} provide hardness proofs as well as polytime solvable instances.
Fleischer et al.~\cite{Fleischer2003, FleischerSkutella2007} provide guarantees on the cost increase of the optimal solution when we allow a coarser network and only include node copies for every $\Delta$ units of time. In this $\Delta$-condensed approach, each node shares the same set of departure times $\{0, \Delta, 2\Delta, \ldots, T\}$, as opposed to a partially time-expanded network in DDD where the departure times available to each node is non-uniform. Wang and Regan \cite{WangRegan2} show that iteratively refining a time window discretization for TSP-TW will converge to an optimal solution. Similarly, Dash et al.~\cite{Dash} iteratively refine a set of time periods based on a preprocessing scheme in contrast to the dynamic scheme in DDD. 

The DDD framework was first introduced and applied to the service network design problem by Boland et al.~\cite{Boland1}. Initial applications of the DDD paradigm addressed connectivity problems such as the shortest path problem \cite{DDDpaths} and the traveling salesman problem with time windows \cite{DDD_TSP}. Subsequently, the approach was extended to handle a variety of problems with varying costs and constraints \cite{flexible, LagosDDD, Pottel, VanDyk, Vu_hewitt}. For a complete presentation of the DDD framework for connectivity problems, we refer the reader to the survey of Boland and Savelsbergh \cite{DDD_survey}. 

Strategies to improve the DDD algorithm focus on limiting the number of iterations until termination, as well as speeding up each individual iteration. Marshall et al.~\cite{Interval_DDD} introduce the \emph{interval-based dynamic discretization discovery algorithm} (DDDI) which was demonstrated to find solutions to instances of SND faster than traditional DDD.  In DDDI, the time discretization in each iteration is interpreted as a set time intervals at each node, rather than the set of departure times at each node. This interpretation allows for a more effective refinement strategy so that in each iteration, the current optimal flat solution in the partial network (set of physical paths and consolidation of commodities onto arcs) is infeasible in the subsequent iteration. Scherr et al.~\cite{DDDAuto} suggest removing timed nodes if they are no longer required for a high quality solution. However, no removal strategy has been implemented or studied in the literature. Hewitt \cite{enhanced} explores speed-up techniques for DDD which include enhancements such as a two-phase implementation of DDD and the addition of valid cuts to strengthen the relaxed model in each iteration. 

Despite noteworthy improvements to the DDD paradigm, all of the previously proposed algorithm enhancements maintain the structure that all arcs departing a common node share the same set of departure times in each iteration. This is true even in the case of DDDI, where all arcs departing a common node share the same set of time intervals. While in DDDI each commodity has its own partial network, these networks are constructed by removing timed arcs that could not be used by that commodity in any solution and so the issue remains. Thus, even with these enhancements, the DDD framework is ineffective for problems where all near-optimal solutions require a large number of departure times at the majority of the nodes in the network. For this reason, our work tackles a previously unaddressed issue. Since our arc-based approach is orthogonal to the previously proposed improvements to DDD, our experimental results compare to the original DDD implementation for the service network design (SND) problem presented in \cite{Boland1}. We note that the same arc-based approach naturally extends to the setting of DDDI. 

SND problems frequently arise in fulfillment planning \cite{Crainic2, wieberneit} and involve determining paths in space for shipments, as well as scheduling their departure time on each leg. Boland et al.~\cite{Boland1} first applied the DDD paradigm to instances of SND, and subsequently many enhancements to the framework have focused on variants of this problem \cite{enhanced, flexible, Interval_DDD}. In practical instances of SND, there are additional operational realities impacting the set of feasible solutions for the physical networks. We define the \emph{service network design problem with restricted routes} (SND-RR) as the problem of SND where the physical path of each commodity must be contained in a predefined subgraph of the flat network. Hewitt \cite{enhanced} considers the problem of SND where each shipment must follow a path chosen from a predefined set of candidate paths. Similarly, Lara et al.~\cite{amazon} solve large fixed-charge problems where each shipment has a \emph{single} candidate path. 

Operational realities are often expressed by the structure of the underlying physical network. For instance, many low-cost and efficient networks follow a region-based structure. In particular, a \emph{hub-and-spoke network} is a common region-based network structure used in airline and fulfillment networks including the distribution networks of FedEx and UPS \cite{bowen}. 
The prevalence of these region-based networks highlights the importance of designing algorithms that are effective for these particular graph structures. Recently, Lara et al.~\cite{amazon} gave a Lagrangian decomposition-based heuristic approach for their temporal fixed charge problem where the decomposition roughly follows regional lines. Our arc-based DDD approach offers an alternative exact algorithm for solving temporal network design problems with region-based structures. 

\subsection{Roadmap}
This paper is organized as follows. In Section \ref{sec:background}, we define the problem of SND-RR, and describe a DDD algorithm for this problem that is analogous to the original implementation for SND in \cite{Boland1}. In Section \ref{sec:auxiliary} we define an auxiliary network based on a partitioning of the arc set in the flat network, and we present a time-indexed formulation for SND-RR on this auxiliary network. In Section \ref{sec:arc-based}, we define a DDD algorithm based on this new time-indexed formulation and prove its correctness. We also prove that this arc-based DDD approach generalizes the original node-based approach. Specifically, given any node-based implementation of DDD, we can select an initial partial (auxiliary) network and refinement process for the arc-based approach to obtain the same algorithm. In Section \ref{sec:applications} we outline the potential for the arc-based DDD approach to improve over the node-based approach for region-based networks, and when paths are designated in the base graph for each commodity. Finally, in Section \ref{sec:computational_results} we present computational results demonstrating the improvement of our arc-based discretization over the original node-based approach.

\section{Problem statement and background}\label{sec:background}
In this section we define time-indexed formulations and the problem of service network design with restricted routes. We then provide an overview of a DDD algorithm for solving SND-RR which is analogous to that of the original DDD algorithm for SND. This original DDD implementation serves as the baseline for our experiments in Section \ref{sec:computational_results} since our strategy to reduce DDD iteration sizes via an arc-based approach is orthogonal to the previously proposed enhancements to DDD.

\subsection{Time-expanded networks}
There are two main model types used to solve temporal network design problems: continuous formulations and time-indexed formulations. \emph{Continuous formulations} use continuous variables to model timing decisions\footnote{Instead of having binary variables for each commodity, time, and arc triplet, continuous formulations have a variable for each commodity and arc pair, and its value is equal to the time the commodity traverses the arc.}, whereas \emph{time-indexed formulations} have variables and constraints indexed by each time point a decision could be made \cite{LagosDDD}. A  time-indexed formulation for a (discrete-time) temporal problem is obtained by expressing the temporal problem as a static problem in the corresponding \emph{time-expanded network}. 

Let $D = (N, A)$ be a directed graph with nodes $N$ and arcs $A$, along with arc transit times $\tau$. Let $T$ be the time horizon under consideration. The corresponding time-expanded network, $D_T = (N_T, A_T \cup H_T)$, consists of a copy of each node $v \in N$ for each time point $t \in [T]:=\{0,1, \ldots, T\}$. The \emph{movement} arcs, $A_T$, consist of a copy of each arc in $A$ for each departure time, and the \emph{storage} arcs, $H_T$, connect each node copy to the next consecutive copy. Specifically, 
\begin{align*}
	N_T & := \{(v,t): v \in N, t \in [T]\},\\
	A_T & := \{((v,t), (w, t+ \tau_{vw})): (v,t) \in N_T, 			vw \in A, t+ \tau_{vw} \leq T\},\mbox{ and }\\
	H_T & := \{((v,t), (v, t+1): v \in V, t \in [T-1]\}.
\end{align*}
An example is presented in Figure \ref{fig:full_diagram_D_T}. We will often refer to arcs and nodes in a time-expanded network as \emph{timed arcs} and \emph{timed nodes} respectively. We will write timed nodes with their associated times as $(v,t)$, and the corresponding node in the base graph is $v$. Similarly, we write timed arcs as $((v,t), (w,t'))$, and the corresponding arc in the base graph is $vw$. Additionally, we will refer to paths in the time-expanded network as \emph{trajectories}. We will often refer to paths in the $D$ as \emph{flat paths}. 

\begin{figure}[!htb]
\centering
\includegraphics[width=1\textwidth]{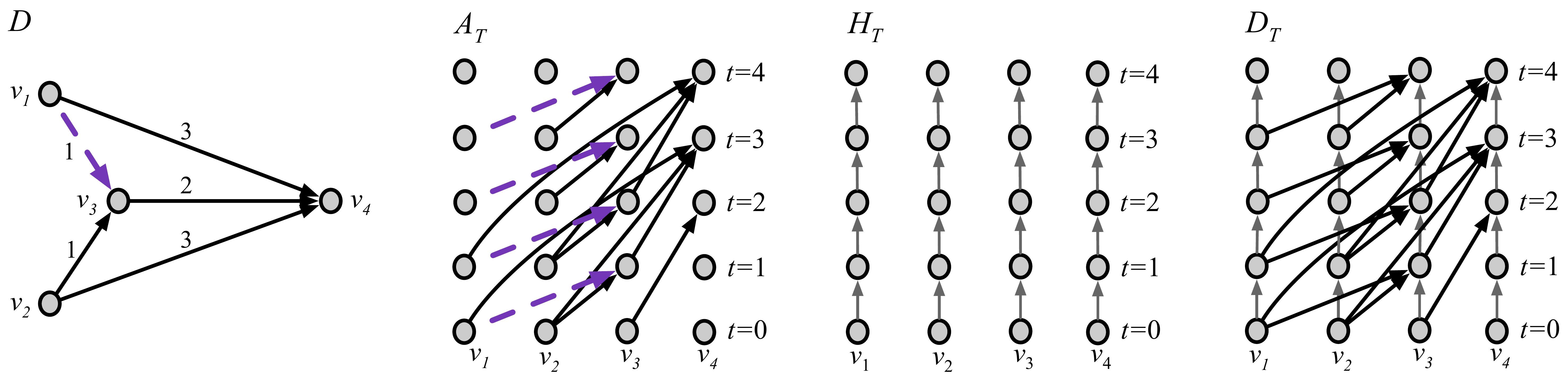}
\caption{Base graph $D$ and the construction of $D_T$ when $T = 4$.}
\label{fig:full_diagram_D_T}
\end{figure}
\FloatBarrier

A temporal network design problem with time horizon $T$ can be modeled as a static network design problem in $D_T$ after assigning appropriate costs, capacities, and demands. We present one such model for the service network design problem in the following section.


\subsection{Service Network Design}
In an instance of \emph{service network design} (SND), we are given a directed graph $D = (N,A)$ with node set $N$ and arc set $A$, and a set of commodities $\mathcal{K}$. We refer to $D$ as the \emph{flat} network. Each arc $vw \in A$ has an associated transit time $\tau_{vw} \in \mathbb{N}_{> 0}$, a commodity-dependent per-unit-of-flow cost $c^k_{vw} \in \mathbb{R}_{> 0}$ for each $k \in \K$, a fixed cost $f_{vw} \in \mathbb{R}_{> 0}$, and a capacity $u_{vw} \in \mathbb{N}_{> 0}$. We interpret $u_{vw}$ as the capacity of trucks scheduled on arc $vw$, and $f_{vw}$ as the per-truck-cost on $vw$. Each commodity $k \in \mathcal{K}$ has a source $o_k \in N$ and sink $d_k \in N$, along with a demand $q_k$ that must be routed along a single trajectory from $o_k$ to $d_k$ (i.e. it is unsplittable). Let $r_k$ denote the time commodity $k$ becomes available at its origin (its release time) and let $l_k$ denote the deadline for commodity $k$. Let $T = \max_{k \in \K} l_k$ be the latest deadline among the commodities. We may assume that the earliest release time is time 0 by shifting the time horizon. An $o_k, d_k$-trajectory in $D_T$ is \emph{feasible} for commodity $k$ if it departs $o_k$ no earlier than $r_k$ and arrives at $d_k$ no later than $l_k$. The goal of SND is to determine a feasible trajectory in $D_T$ for each commodity in order to minimize the total fixed and variable cost of the resources required to ship the commodities along the trajectories. 

We define \emph{service network design with restricted routes} (SND-RR) as the problem of SND with the additional input of a designated subgraph $D^k \subseteq D$ for each commodity $k \in \K$. In SND-RR,  an $o_k, d_k$-trajectory $Q$ in $D_T$ is \emph{feasible} for commodity $k$ if it departs $o_k$ no earlier than $r_k$, arrives at $d_k$ no later than $l_k$,  and the flat path corresponding to the trajectory $Q$ is fully contained in $D^k$. 

\subsubsection*{Time-indexed formulation for SND-RR}
We now define a time-indexed formulation for SND-RR analogous to the formulation presented by Boland et al.~in \cite{Boland1}. For each $k \in \K$, let $D^k_T = (N^k_T, A_T^k \cup H_T^k)$ denote the time-expanded network for $D^k$ with time horizon $T$. For each commodity $k \in \K$ and each timed arc $a \in A_T^k \cup H_T^k$, let $x_a^k$ be a binary variable which is equal to 1 if commodity $k$ is scheduled to travel along timed arc $a$ in its assigned trajectory. For each $a \in A_T$, let $y_a$ denote the number of trucks scheduled along timed arc $a$. For each timed arc $a = ((v,t),(w,t')) \in A_T$, let $u_a := u_{vw}$, $c_a := c_{uv}$, and $f_a := f_{uv}$. Let $\delta^+_{D_T} (v,t)$ and $\delta^-_{D_T} (v,t)$ denote the outgoing and incoming timed arcs at $(v,t)$ in $D_T$. That is, $\delta^+_{D_T} (v,t) = \{a \in A_T \cup H_T: a = ((v,t),(w,t'))\}$, and $\delta^-_{D_T} (v,t) = \{a \in A_T \cup H_T: a = ((w,t'),(v,t))\}$. 

The following integer program (IP) models SND-RR. 

\begin{align}\tag{SND-RR($D_T$)}\label{IP:D_T}
    \min~~ & \sum_{a \in A_T} f_a y_a + \sum_{a \in A_T} \sum_{k \in \K} c^k_a  q_k x_a^k \\
    \vspace{.5cm}
    \mbox{s.t.} ~~
	& x^k(\delta_{D_T^k}^{+}(v,t)) - x^k(\delta_{D_T^k}^{-}(v,t)) = 	\begin{cases}
        1 ~(v,t) = (o_k, r_k) \\
        -1 ~(v,t) = (d_k, l_k) \\
        0 ~\mbox{otherwise} \\
    \end{cases} \quad \forall k \in \mathcal{K}, (v,t) \in N^k_T 		\label{const1:D_Tflow}\\    
	\vspace{.5cm}
    & \sum_{k \in \mathcal{K}} q_k x_a^k \leq u_a y_a \quad \forall a \in A_T \label{const1:D_Tcapacity}\\
    & x_a^k \in \{0,1\} \quad \forall k \in \mathcal{K}, a \in A_T^k \cup H_T^k \label{var1:flow} \\
	& y_a \in \mathbb{N}_{\geq 0} \quad \forall k \in \mathcal{K}, a \in A_T \label{var2:flow}
\end{align}
Constraint ($\ref{const1:D_Tflow}$) ensures that each commodity is assigned a feasible trajectory. Constraint ($\ref{const1:D_Tcapacity}$) ensures that for each timed arc there is sufficient capacity purchased to accommodate the trajectories scheduled to traverse that timed arc. The objective function optimizes for the total fixed and variable cost. Note that when $D^k = D$ for all $k \in \K$, this is the same as the timed-indexed formulation for SND in \cite{Boland1}.


\subsection{Node-based DDD framework}
The downside to modeling temporal problems with time-expanded networks is that this time-expansion often makes the resulting static problem impractical to solve since the network grows linearly in $T$ \cite{FordFulkerson58, FordFulkerson59}. In an effort to avoid solving these large time-indexed formulations, a DDD algorithm solves a series of MIPs defined on smaller \emph{partially time-expanded networks} that appropriately \emph{underestimate} the transit times of the arcs in $A_T$. A partially time-expanded network with respect to $D$ and $T$ is any directed graph $D_S = (N_S, A_S \cup H_S)$ where $N_S \subseteq N_T$ and $A_S \subseteq \{((v,t), (w, t')): (v,t) \in N_T, vw \in A, t' \leq t+ \tau_{vw} \leq T\}$, and $H_S$ connects each node copy to its next copy in $N_S$. We will refer to fully time-expanded networks and partially time-expanded networks as \emph{full} and \emph{partial} networks respectively. In this paper, $D_T = (N_T, A_T \cup H_T)$ denotes the full network with time horizon $T$, and $D_S = (N_S, A_S \cup H_S)$ is any partial network. 

In a DDD algorithm, the partial networks and corresponding MIPs are constructed to ensure that they provide a lower bound on the optimal value of the original problem. In the context of SND-RR, we denote the lower bound formulation corresponding to a partial network $D_S$ as SND-RR($D_S$). We define SND-RR($D_S$) so that it is the formulation corresponding to routing the commodities in $D_S$ rather than $D_T$, with the same induced costs. Since this formulation is analogous to the formulation presented in \cite{Boland1}, we defer the details to Appendix \ref{app:lower_bound}. If a solution to the partial network cannot be converted to a solution to the original instance of equal cost, the partial network is refined by adding timed nodes and timed arcs based on the current solution. An overview of this method is presented in Algorithm \ref{alg:ddd_outline1}. 

\begin{algorithm}
	\DontPrintSemicolon
	\KwIn{Base network $D = (N,A)$, commodity set $\mathcal{K}$ with subgraph $D^k$ for each $k \in \K$}
	\textbf{Initialization}: $D_S \leftarrow D_0$, where SND-RR($D_S$) provides a lower bound for SND-RR($D_T$)\\
	\While{not solved}{
	    \textbf{Lower bound}: Solve SND-RR($D_S$) and obtain a solution $(\hat{x}, \hat{y})$ in $D_S$ \\
        \textbf{Upper bound/termination}: determine if $\hat{x}$ can be converted to a solution to SND-RR($D_T$) with equal cost \\
        \If{yes}{
            Stop. An optimal solution has been found for SND-RR($D_T$).}
        \textbf{Refinement}: update $D_S$ while ensuring SND-RR($D_S$) provides a lower bound for SND-RR($D_T$).
	}
	\caption{$\mathtt{SND\mbox{-}RR}(D, \mathcal{K}$)}
	\label{alg:ddd_outline1}
\end{algorithm}	
\FloatBarrier

We now describe the specific initialization, lower and upper bound, and refinement steps in the original DDD implementation for SND presented in \cite{Boland1}. The only change is that instead of solving SND($D_S$) in each iteration, we solve SND-RR($D_S$). While this is a minimal change, we provide a proof of correctness for completeness in Appendix \ref{app:lower_bound}. Note, SND($D_S$) is equivalent to SND-RR($D_S$) when $D^k = D$ for all $k \in \K$. We will refer to this DDD algorithm as the \emph{node-based} DDD approach. 

\subsubsection{Initialization and lower bound}
When $D^k = D$ for all $k \in \K$, Boland et al.~prove that so long as the partial network $D_S$ satisfies the following two properties, an optimal solution of SND-RR($D_S$) provides a lower bound on the optimal value of SND-RR($D_T$) (Theorem 2 in \cite{Boland1}).

\begin{itemize}[noitemsep]
            \item[(P1)]\textbf{Timed nodes}: For all $k \in \mathcal{K}$, $(o_k, r_k)$ and $(d_k, l_k)$ are in $N_S$;

\hspace{2.42cm} For all $v \in N$, $(v,0)$, and $(v,T)$ are in $N_S$;
            \item[(P2)]\textbf{Arc copies}: For all $vw \in A$, for all $(v,t) \in N_S$ with $t + \tau_{vw} \leq T$, we have
            $((v,t), (w, t')) \in A_S$ where \[t' = \max \{r: r \leq t + \tau_{vw}, (w, r) \in N_S\}.\]
        \end{itemize}
Note that property (P2) gives a natural method to construct $D_S$ given $N_S$. In its standard implementation, in the first iteration of a DDD algorithm the partial network is initialized to be the network consisting only of the timed nodes stated in (P1). Throughout this paper we will assume that any partial network has no additional arcs beyond those stated in (P2). Thus, the partial networks we work with are completely determined by the flat network and the set of timed nodes. 

The following generalization of Theorem 2 in \cite{Boland1} follows directly from a restatement of the proof and the observation that physical routes of each commodity are unchanged. We provide a sketch of this proof and defer the formal proof to Appendix \ref{app:lower_bound}.

\begin{theorem}\label{theorem:LB}
When the partial network $D_S$ satisfies properties (P1) and (P2), an optimal solution of SND-RR($D_S$) provides a lower bound on the optimal value of SND-RR($D_T$).
\end{theorem}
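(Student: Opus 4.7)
The plan is to show that any feasible solution to SND-RR($D_T$) induces a feasible solution to SND-RR($D_S$) of no greater cost, which implies that the optimal value of SND-RR($D_S$) lower bounds that of SND-RR($D_T$). Because the only departure from Boland et al.'s original argument for SND (Theorem 2 in \cite{Boland1}) is the additional restriction that the flat route of each commodity's trajectory lies in $D^k$, the strategy will closely follow theirs, with the key additional observation that the constructed $D_S$-trajectories use \emph{exactly} the same flat paths as the original $D_T$-trajectories, and so inherit the restricted-route feasibility for free.

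First, I would fix an optimal solution $(x^*, y^*)$ to SND-RR($D_T$) and, for each commodity $k \in \K$, extract the $(o_k, r_k)$-$(d_k, l_k)$ trajectory $Q_k^*$ and its underlying flat path $P_k^* \subseteq D^k$. The goal is to build a trajectory $\tilde Q_k$ in $D_S$ that traces exactly $P_k^*$. Starting at $(o_k, r_k)$, which lies in $N_S$ by (P1), I would traverse the arcs of $P_k^*$ in order. Whenever the construction reaches some $(v,s) \in N_S$ with the next flat arc $vw$ to be traversed, property (P2) guarantees the existence of an arc $((v,s),(w,s')) \in A_S$ with $s' \leq s + \tau_{vw}$, which the trajectory uses. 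By induction on the position along the flat path, one shows that the arrival time at each intermediate node in $\tilde Q_k$ is no later than in $Q_k^*$. Upon reaching $d_k$, storage arcs in $H_S$ can be appended until the trajectory reaches $(d_k, l_k) \in N_S$ (which exists by (P1)), yielding a feasible trajectory for commodity $k$ in $D_S$ whose underlying flat path lies entirely in $D^k$.

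Next, I would define $\tilde x$ from these trajectories and set $\tilde y_a = \lceil \sum_{k \in \K} q_k \tilde x_a^k / u_a \rceil$ for each $a \in A_S$. Feasibility of $(\tilde x, \tilde y)$ in SND-RR($D_S$) then follows from the construction together with the definition of $\tilde y$. For the cost comparison, variable costs $c_a^k$ depend only on the underlying flat arc, so the variable-cost component of $(\tilde x, \tilde y)$ equals that of $(x^*, y^*)$. For the fixed costs, the crux is that each timed arc $a \in A_S$ representing a flat arc $vw$ absorbs the flow from a collection of timed arcs in $D_T$ representing the same $vw$, so $\tilde y_a$ is bounded above by the sum of the corresponding $y^*_{a'}$; aggregating over $a \in A_S$ bounds the total fixed cost by that of $(x^*, y^*)$.

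The main obstacle I expect is the inductive step in building $\tilde Q_k$: verifying that the "earlier-arrival" invariant is preserved under the greedy use of (P2), and that the partial trajectory never gets stuck because a required timed copy of some flat node is missing before the deadline. This requires careful bookkeeping on the mapping between timed arcs in $A_T$ and timed arcs in $A_S$, and this same bookkeeping then powers the fixed-cost bound. Relative to the original proof for SND, the only genuinely new ingredient is the observation that the constructed trajectories preserve flat paths in $D^k$, which is immediate from the construction; this is what makes the generalization to SND-RR essentially free, and justifies deferring the formal verification to the appendix.
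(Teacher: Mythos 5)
Your overall strategy (map each $D_T$-trajectory to a $D_S$-trajectory with the same flat path, then compare costs) is the right one and matches the paper's, but the specific map you construct has a genuine flaw that breaks the fixed-cost comparison. You build $\tilde Q_k$ greedily: upon \emph{arriving} at $(v,s)\in N_S$ you immediately depart along the (P2)-arc out of $(v,s)$, never inserting storage arcs at intermediate nodes. This makes the departure time of commodity $k$ on flat arc $vw$ in $D_S$ a function of $k$'s entire path history, not of the timed arc it used in $D_T$. Consequently two commodities that were consolidated on the same timed arc $a'=((v,t),(w,t'))\in A_T$ can arrive at $v$ at different (rounded, compounded) times in your construction and end up on \emph{different} timed copies of $vw$ in $A_S$. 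Their flow then hits two separate capacity constraints, each with its own integer round-up of $\tilde y$, and the total fixed cost of $(\tilde x,\tilde y)$ can strictly exceed that of $(x^*,y^*)$ (e.g., two commodities of demand $5$ sharing one truck of capacity $10$ in $D_T$ now require two trucks in $D_S$). Your claim that ``each timed arc $a\in A_S$ absorbs the flow from a collection of timed arcs in $D_T$'' presupposes that all flow on a given $a'\in A_T$ lands on a single $a\in A_S$, which your construction does not guarantee. Since the whole point of the relaxation is to preserve (or improve) consolidation, this is not a bookkeeping detail but the crux of the theorem.

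The paper avoids this by defining the map $\mu$ on timed arcs of $A_T$ directly, independent of the commodity: the departure time $t$ of $a'=((v,t),(w,t'))$ is rounded down to $\hat t=\max\{s: s\le t,\ (v,s)\in N_S\}$ and the head to $\hat t'=\max\{s: s\le \hat t+\tau_{vw},\ (w,s)\in N_S\}$, so every commodity using $a'$ is sent to the \emph{same} arc $\mu(a')\in A_S$ and consolidation is preserved by construction. The price is that one must then verify the mapped arcs can be stitched into a trajectory with holdover arcs, i.e.\ that $\hat t_i^{\,\mathrm{in}}\le \hat t_i^{\,\mathrm{out}}$ at every intermediate node (Claim 1 in the paper), which follows from $t_i^{\,\mathrm{in}}\le t_i^{\,\mathrm{out}}$ in $Q_k^*$ and monotonicity of the round-down. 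To repair your proof, replace the greedy ``depart on arrival'' rule with this round-down of the \emph{original} departure times; the rest of your argument (flat paths preserved, hence $D^k$-feasibility and equal variable cost, and the transit-time constraint inherited from feasibility of $Q_k^*$ in $D_T$) then goes through as you describe.
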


\begin{proof}[Proof sketch]
Let $(\bar{x}, \bar{y})$ be a feasible solution to SND-RR($D_T$) with cost $C$, and let $\mathcal{Q} = \{Q_k\}_{k \in \K}$ denote the corresponding set of trajectories. Let $\mu: A_T \rightarrow A_S$ be the map defined so that for each timed arc $a \in A_T$, 
\begin{equation}\label{mu_eq_D_S}
    a = ((v, t), (w,t')) \quad \rightarrow  \quad \mu(a) = ((v,\hat{t}), (w, \hat{t}')), 
\end{equation}  where $\hat{t} = \max\{s: s \leq t, (v, s) \in N_S\}$, and $\hat{t}' = \max\{s: s \leq \hat{t} + \tau_{vw}, (v, s) \in N_S\}$. Observe that $\hat{t}'$ is dependent on $\hat{t}$ rather than $t'$. We obtain trajectories $\hat{\Q} = \{\hat{Q}_k\}_{k \in \K}$ by mapping each movement arc $a \in A_T$ to $\mu(a) \in A_S$, and forming trajectories by adding holdover arcs. Due to properties (P1) and (P2), the resulting trajectories are feasible in $D_S$. The underlying paths in the flat network for $\hat{\Q}$ and $\Q$ are the same for each commodity, and whenever two trajectories share the same timed arc $\Q$, the two trajectories share the same timed arc in $\hat{\Q}$. As a result, the cost of $\hat{\Q}$ is at most $C$. 
\end{proof}


\subsubsection{Upper bound and refinement} 

Let $(\hat{x}, \hat{y})$ be an optimal solution to SND-RR($D_S$) with corresponding trajectories $\Q = \{Q_k\}_{k \in \K}$ in $D_S$, and paths $\P = \{P_k\}_{k\in\K}$ in $D$. We obtain a feasible solution to \ref{IP:D_T} with the same flat paths $\P$ (and hence same variable cost as $\Q$) by specifying a departure time for each arc in $P_k$, for each commodity $k \in \K$, that satisfies the release time and deadline of $k$ and respects the actual transit time of the arcs. In addition, if the set of departure times is constructed so that every pair of commodities dispatched together on an arc $vw$ in $\Q$ still shares the same departure time for arc $vw$, the corresponding feasible solution has the same fixed cost as $\Q$. Thus, the paths $\P$ along with a set of departure times that satisfies the aforementioned feasibility and consolidation requirements provides an optimal solution to \ref{IP:D_T}. 

In order to ensure there is a set of departure times that is feasible for the set of paths $\P$ proposed in each iteration (satisfies release times, deadlines, and actual transit times), Boland et al.~add the following constraint to SND-RR($D_S$):
\begin{align}
\sum_{a \in A_S} \tau_a x_a^k \leq l_k - r_k, \quad \forall k \in \K, \label{const:feasible_D_T}
\end{align}
where for $a = ((v,t), (w,t')) \in A_S$, $\tau_a$ denotes the transit time of arc $vw$ rather than $t' - t$. Constraint (\ref{const:feasible_D_T}) ensures that for each trajectory in $Q$ there is a feasible trajectory in $D_T$ with the same underlying path in the flat network.

The continuous formulation presented in \cite{Boland1} has a variable $t^k_{v}$ for each commodity $k \in \K$ and each non-destination node $v$ in $P_k$ that models the departure time for $k$ on the arc leaving $v$ in $P_k$. The feasibility of the departure times is captured in the following constraints, where for fixed $k \in \K$, $v_p$ is the $p$th node in $P_k$. 
\begin{align}
t_{v_p}^k + \tau_{v_p v_{p+1}} & \leq t_{v_{p+1}}^k \quad \forall k \in \K, p \in [|P_k| - 1] \label{const:cont1A}\\
r_k & \leq t_{o_k}^k \quad \forall k \in \K \label{const:cont2A}\\
t_{v_{|P_k| - 1}} + \tau_{v_{|P_k| - 1} d_k} & \leq l_k \quad \forall k \in \K, \label{const:cont3A}
\end{align} 

For each arc $vw \in A$, let $J_{vw}$ denote the set of pairs of commodities that traverse arc $vw$ at the same time in $\Q$ (i.e. use the same timed copy of $e$ in $\Q$), and let $\J := \{J_{vw}\}_{vw \in A}$. Boland et al.~introduce a variable $\delta_{vw}^{k_1 k_2}$ for each pair of commodities $(k_1, k_2) \in J_{vw}$, for each arc $vw \in A$, along with the following set of constraints.
\begin{align}
\delta_{vw}^{k_1 k_2} & \geq t_v^{k_1} - t_v^{k_2} \quad \forall (k_1, k_2) \in J_{vw}, \forall {vw} \in A \label{const:cont5A}\\
\delta_{vw}^{k_1 k_2} & \geq t_v^{k_2} - t_v^{k_1} \quad \forall (k_1, k_2) \in J_{vw}, \forall {vw} \in A.\label{const:cont6A}
\end{align}
Observe that $\delta_{{vw}}^{k_1 k_2}$ must always be non-negative, and $\delta_{{vw}}^{k_1 k_2}$ is zero precisely when the departure times for commodities $k_1$ and $k_2$ on arc ${vw}$ are the same. Thus, if there is a vector $(\bar{t}, \bar{\delta})$ that satisfies constraints (\ref{const:cont1A})-(\ref{const:cont6A}) with $\bar{\delta} = \mathbf{0}$, then we have found an optimal solution to \ref{IP:D_T}. If not, then the current partial network, $D_S$, must be modified. 

To update the partial network, we want to add a small number of timed nodes that correct the current set of infeasible trajectories. Boland et al.~stipulate that in the continuous formulation, the departure times for commodities with feasible trajectories in $\Q$ (those that consisted only of timed arcs with realistic transit times), are unchanged. Let $\K^F \subseteq \K$ denote the set of commodities with feasible trajectories in $\Q$, and let $\T^F = \{\T_k\}_{k \in \K^F}$ where $\T_k = \{\bar{t}_{v_p}^k\}_{p \in [|P_k|-1]}$ is the set of departure times for the arcs in $P_k$ given by $\Q$. The following constraints ensure the departure times for the set $\K^F$ are unchanged. 
\begin{align}
t_{v_p}^k & = \bar{t}_{v_{p}}^k, \quad \forall k \in \K^F, p \in [|P_k| - 1] \label{const:cont4A}
\end{align}

In total, the continuous formulation presented in \cite{Boland1} is as follows. 

\begin{align}\tag{LP-UB$(\P, \J, \K^F, \T^F)$}\label{LP:UB}
 \min~~ & \sum_{{vw} \in A} \sum_{(k_1, k_2) \in J_{vw}} \delta_{{vw}}^{k_1 k_2} \\
    \vspace{.5cm}
    \mbox{s.t.} ~~
	& (\ref{const:cont1A}) - (\ref{const:cont4A})
\end{align}
Due to the addition of constraint (\ref{const:feasible_D_T}), in each iteration there is a feasible solution $(\bar{t}, \bar{\delta})$ to the continuous formulation \ref{LP:UB}, which corresponds to a feasible solution $(\bar{x}, \bar{y})$ to \ref{IP:D_T}. If $\bar{\delta} = \mathbf{0}$, then $(\bar{x}, \bar{y})$ is an optimal solution to \ref{IP:D_T}. Otherwise, a non-empty subset of the $\delta$-variables have positive value. For each such variable $\delta_{vw}^{k_1 k_2}$ with positive value, one of $k_1$ and $k_2$ must a trajectory in $D_S$ which has a short timed arc by construction of $\K^F$ and the addition of constraint (\ref{const:cont4A}). Let $\C := \{k_1 \in \K \setminus \K^F: \exists {vw} \in A, k_2 \in \K, \bar{\delta}_{vw}^{k_1 k_2} > 0\}$ capture this set of commodities. We see that $\C$ is a non-empty, and each commodity in $\C$ has a trajectory in $D_S$ defined by $\hat{x}$ with a short timed arc. In the refinement step, for each $k \in \C$ a timed node is added to lengthen the short timed arc with the earliest departure time in the trajectory $Q_k$.

\section{An auxiliary graph and new formulation for SND-RR}\label{sec:auxiliary}

The current construction of partial networks makes DDD ineffective for solving problems that require many departure times at a large number of nodes. For such problems, in order to express an optimal solution on a partial network $D_S = (N_S, A_S \cup H_S)$, $N_S$ must include a large portion of $N_T$. {\color{black} In each iteration, the partial network $D_S$ is constructed given the set $N_S$ according to property (P2):

\begin{enumerate}
\item[(P2)] For all $vw \in A$, for all $(v,t) \in N_S$ with $t + \tau_{vw} \leq T$, we have $((v,t), (w, t')) \in A_S$ where \[t' = \max \{r: r \leq t + \tau_{vw}, (w, r) \in N_S\}.\]
\end{enumerate}

Since this construction results in a timed copy of each arc in $vw \in A$ for each timed copy of $v$ in $N_S$, when $N_S$ is large, this forces $A_S$ to be large as well. As a result, solving the lower bound formulation in the final iteration would be comparable to solving the full time-indexed formulation in the first place.}

For example, consider the graph structure in Figure   \ref{fig:star_motivation} with $m$ arcs departing a node $v$, all with unit transit times. For each $i \in [m]$, suppose there is a commodity $k$ with origin $v$, destination $v_i$, and release time $i$. Due to the current structure of partial networks, the initial partial network $D_0 = (N_0, A_0 \cup H_0)$ would have a copy of node $v$ at all times in $[m]$. As a result, $D_0$ would have a copy of each arc for each departure time in $[m]$. Thus, we would have $A_0 \approx A_T$, and so the size of \ref{IP:D_S} is approximately the same as \ref{IP:D_T}. However, observe that an optimal solution can be expressed on a small subgraph of $D_T$: include a single copy of arc $v v_i$ departing $v$ at time $i$ for each $i \in [m]$. This simple instance highlights the issue with high-degree nodes in the network and the current construction of partial networks. Extensions of the original node-based DDD algorithm in \cite{enhanced, Interval_DDD} consist of refinement processes that can lead to similar problematic structures.

\begin{figure}[!htb]
\centering
\includegraphics[width=.2\textwidth]{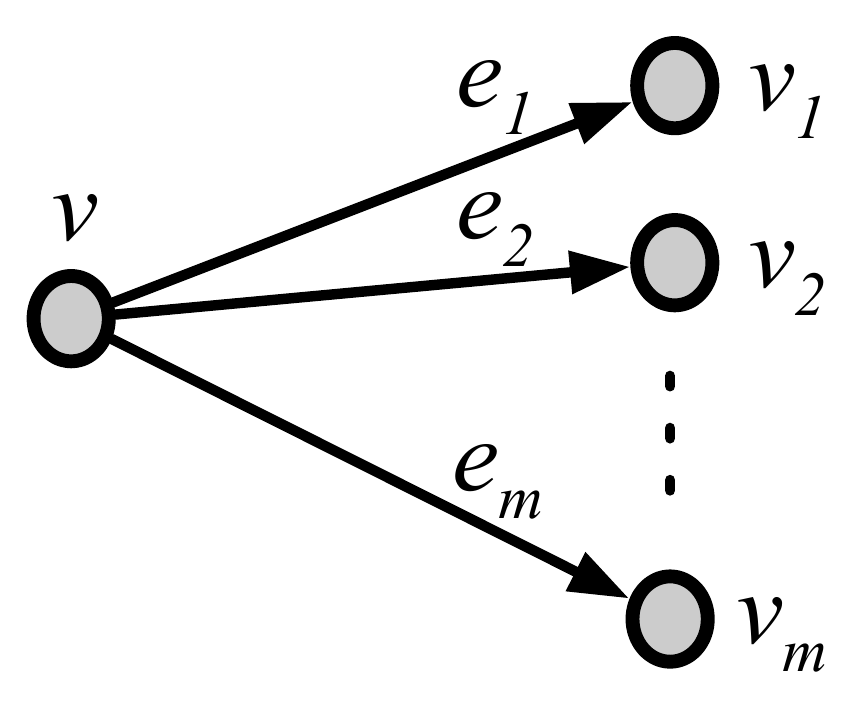}
\caption{}
\label{fig:star_motivation}
\vspace{-.25cm}
\end{figure}
\FloatBarrier

We would like to modify the family of partial networks and corresponding lower bound formulations so that permitting $d$ departure times at a node $v$ does not necessarily lead to $d$ timed copies of each arc departing $v$. In the current DDD approach, the partial network in each iteration is defined by the time discretization on the node set, denoted $\T_N = \{\T_v\}_{v \in N}$, where $\T_v \subseteq \{0, 1, \ldots, T\}$ is the set of departure times for all arcs with tail $v$. Instead, we would like each iteration to be defined by a discretization on the arc set, denoted $\T_A = \{\T_{uv}\}_{uv \in A}$, where $\T_{uv}\subseteq \{0, 1, \ldots, T\}$ is the set of departure times for arc $uv$. Furthermore, we would like this modification to still follow the main structure of the original DDD approach: 

\begin{enumerate}[noitemsep]
\item A time discretization, $\T$, dictates a set of departure times rather than a set of time intervals.
\item $\T$ and the base graph $D$ define a partial network $D_S = (N_S, A_S \cup H_S)$, and the corresponding lower bound formulation routes the commodities through $D_S$ rather than $D_T$.
\item As the discretization becomes finer the corresponding lower bound converges to the optimal value of \ref{IP:D_T}.
\item The partial network is uniquely defined by the current discretization and independent of previous iterations.
\end{enumerate}
One way to prove that a formulation (IP1) is a relaxation of another (IP2) is to exhibit a function that maps a feasible solution from IP2 to a feasible solution to IP1 which has no greater cost. In the DDD literature, the map used is $\mu$ as defined in Equation (\ref{mu_eq_D_S}), which takes each trajectory in $D_T$ and rounds down the departure time for each arc. Note that $\mu$ preserves the physical path of each trajectory, and we would like any new relaxation to have an analogous map that preserves all physical routes and relaxes time.

Consider the following instance, where the base graph $D$ is given in Figure \ref{fig:P2A}, and  a partial network $D_S$ satisfying property (P2) is given in Figure \ref{fig:P2B}. A natural first attempt to construct a partial network with arc-dependent departure times is to simply remove a subset of the timed arcs from the standard construction of $D_S$. In Figure \ref{fig:P2C}, we remove the timed arc $((v_2, 1), (v_4, 3))$. Observe that in the resulting graph, there is no feasible trajectory for the flat path $v_1, v_2, v_4$. The issue is that the copy of arc $v_1 v_2$ arrives at $v_2$ \emph{later} than the copy of $v_2 v_4$ departs $v_2$. Thus, there is no map from $D_T$ to $D_S$ for this construction that would allow us to preserve the physical path of each trajectory in $D_T$.
\vspace{-.2cm}
\begin{figure}[htbp]
    \centering
    \begin{subfigure}[b]{0.3\textwidth}
        \includegraphics[width=.9\textwidth]{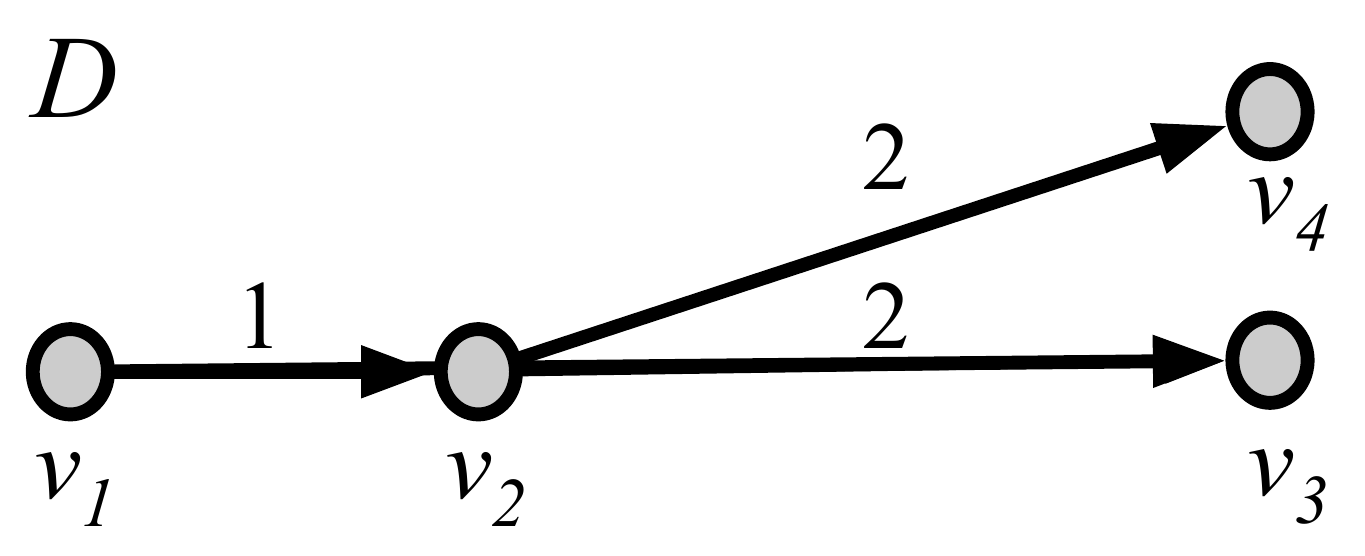}
        \caption{Base graph $D$}
        \label{fig:P2A}
    \end{subfigure}
    \begin{subfigure}[b]{0.3\textwidth}
        \includegraphics[width=.9\textwidth]{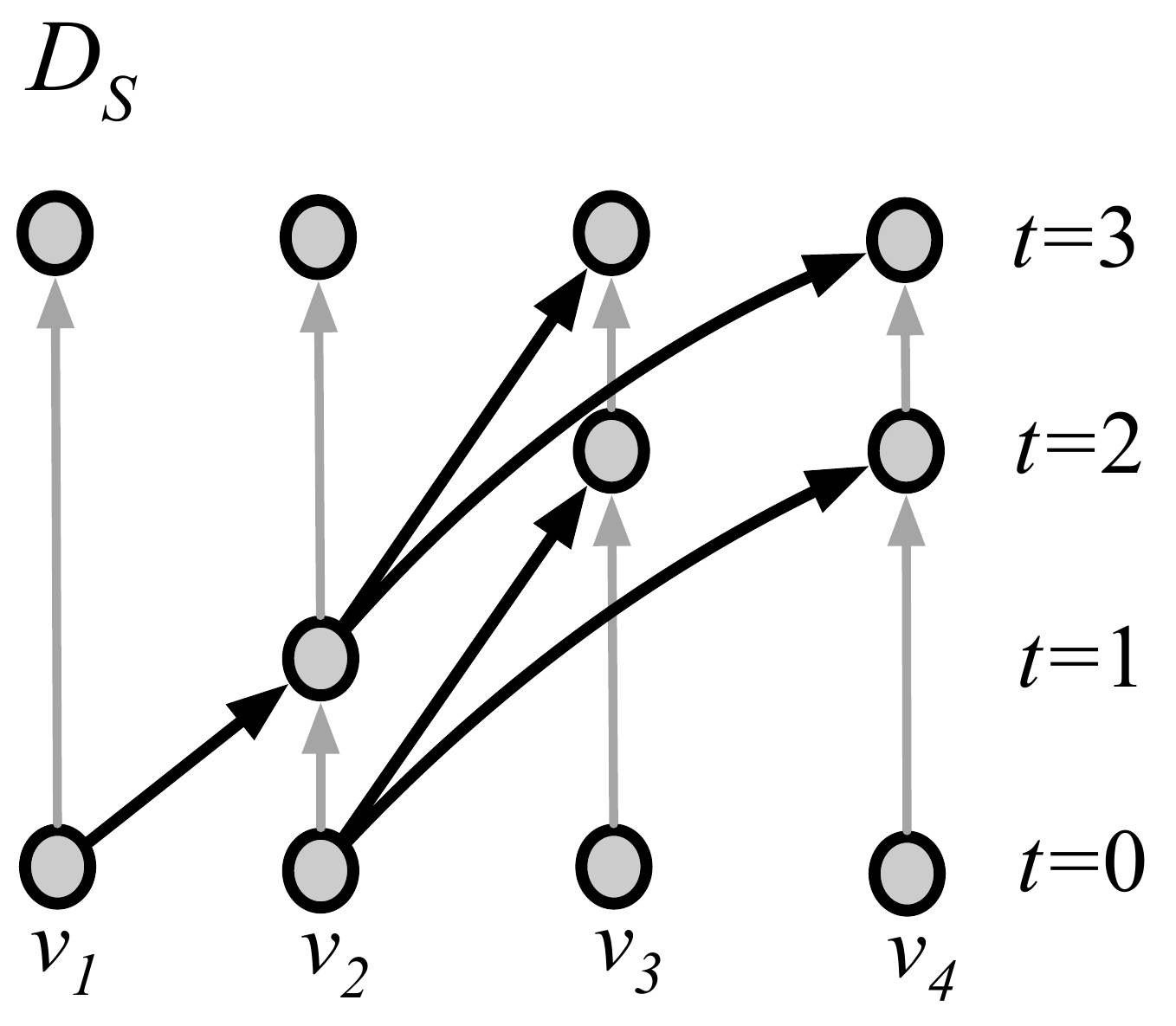}
        \caption{A partial network $D_S$.}
        \label{fig:P2B}
    \end{subfigure}
    \begin{subfigure}[b]{0.3\textwidth}
        \includegraphics[width=.9\textwidth]{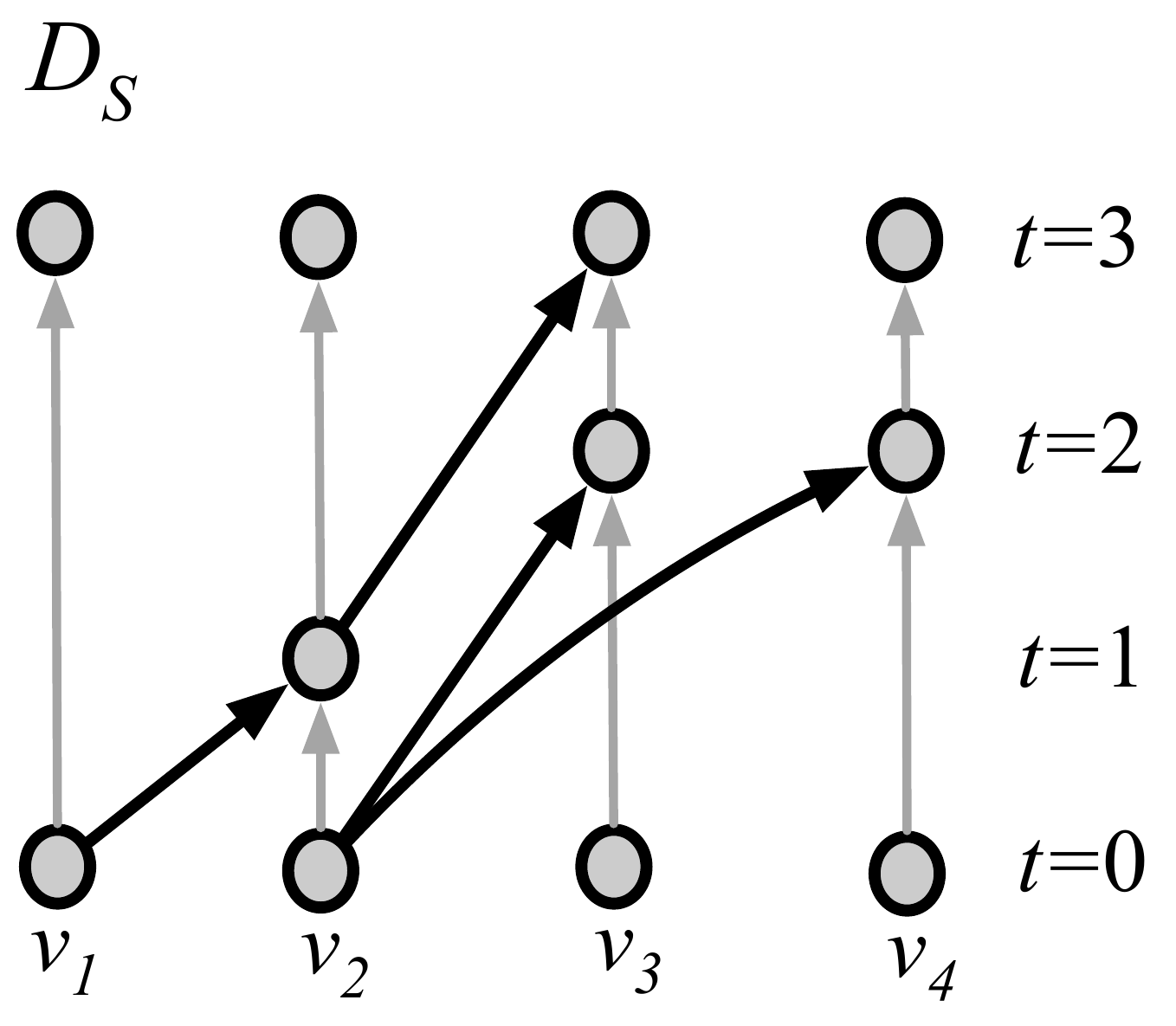}
        \caption{$D_S$ after removing a timed arc.}
        \label{fig:P2C}
    \end{subfigure}
    \caption{}
    \label{fig:P2}
\end{figure}
\FloatBarrier

If we replaced the timed arc {\color{myred}$a = ((v_1, 0), (v_2, 1))$} with {\color{myblue}$a' = ((v_1, 0), (v_2, 0))$} to form $D_S$, applying the standard map $\mu$ would certify that the optimal routing through $D_S$ provides a lower bound on the min-cost routing in $D_T$. However, this adjustment also leads to issues, as we demonstrate in the following example. In Figure \ref{fig:P2_description} we see that the timed arc $((v_2, 1), (v_3, 3))$ becomes redundant for most commodities: any trajectory using $((v_2, 1), (v_3, 3))$ (not originating at $v_2$) must arrive at $v_2$ at time 0, and so such a trajectory includes timed arcs $((v_2, 0), (v_2, 1))$ and $((v_2, 1), (v_3, 3))$. This pair of timed arcs can be replaced with $((v_2, 0), (v_3, 2))$ and $((v_3, 2), (v_3, 3))$, without incurring any additional cost. Thus, this arc shortening strategy may cause all arcs departing a fixed node to have the same set of non-redundant departure times. 

\begin{figure}[htbp]
    \centering
    \begin{subfigure}[b]{0.49\textwidth}
	\centering
        \includegraphics[width=.55\textwidth]{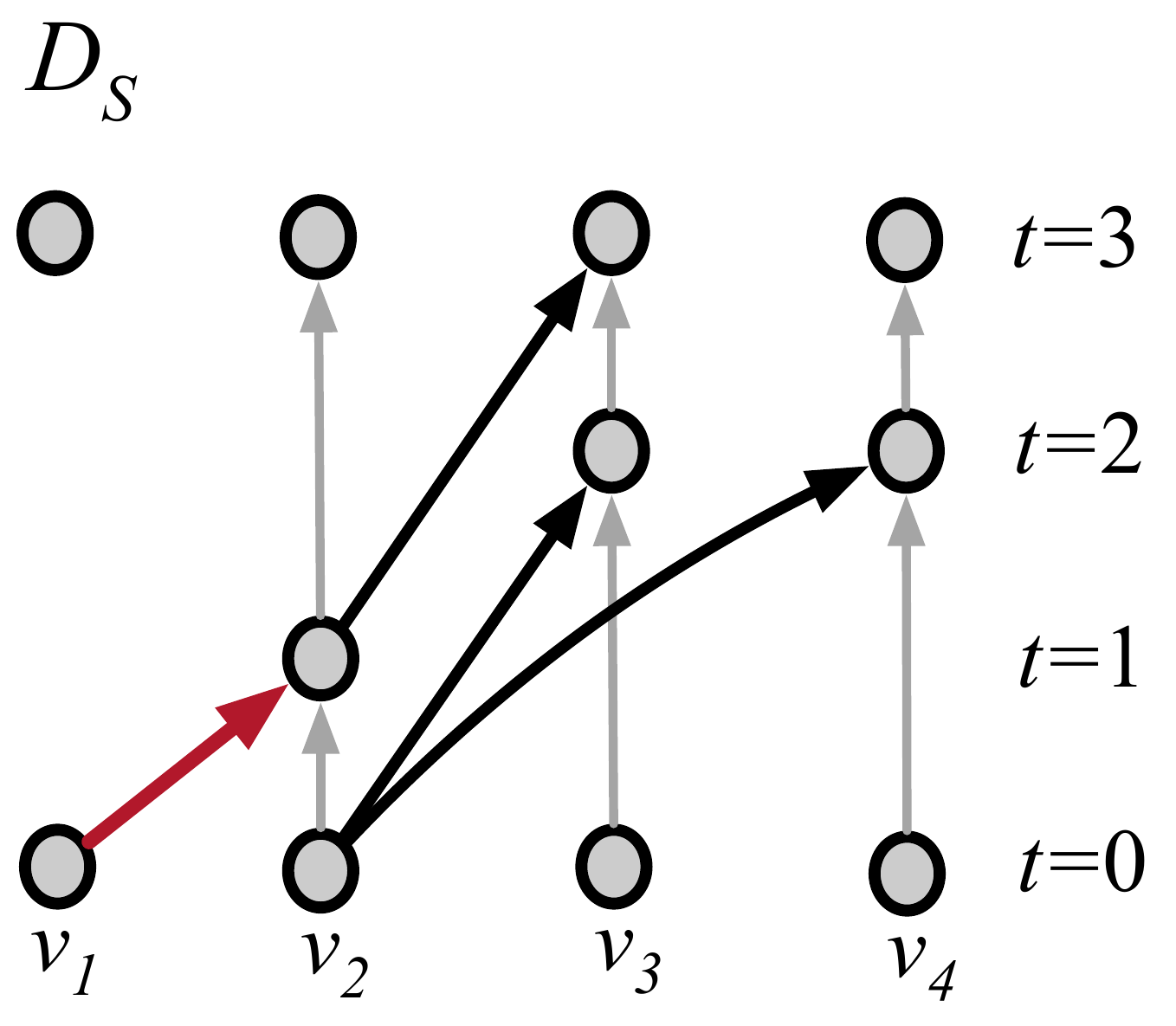}
        \caption{The $v_1v_2$ arc is {\color{myred}$a=((v_1, 0), (v_2, 1))$}.}
        \label{fig:P2_descriptionA}
    \end{subfigure}
    \begin{subfigure}[b]{0.49\textwidth}
	\centering
        \includegraphics[width=.55\textwidth]{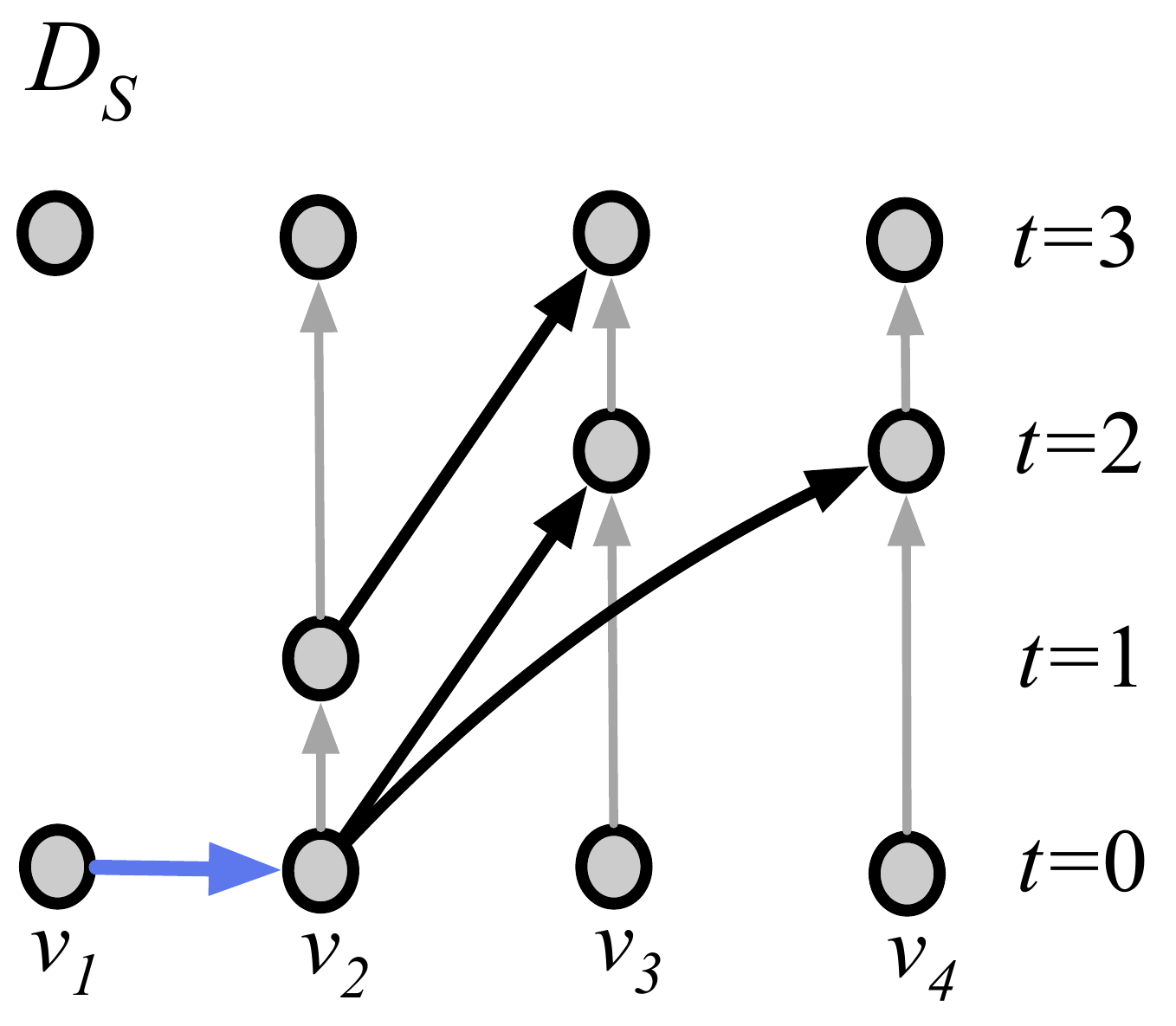}
        \caption{The $v_1v_2$ arc is {\color{myblue}$a' = ((v_1, 0), (v_2, 0))$}.}
        \label{fig:P2_descriptionB}
    \end{subfigure}
    \caption{}
    \label{fig:P2_description}
\end{figure}
\FloatBarrier

To mitigate the issue demonstrated in Figures \ref{fig:P2_descriptionA} and \ref{fig:P2_descriptionB}, we propose allowing a commodity to be routed in $D_S$ via {\color{myred}$a$} if the subsequent timed arc is $((v_2, 1), (v_3, 3))$, and {\color{myblue}$a'$} if the subsequent timed arc is $((v_2, 0), (v_4, 2))$. Note that this approach has potential issues:
\begin{enumerate}[noitemsep]
\item[(I1)] Allowing different sets of departure times for arcs in $\delta^+(v)$ introduces additional timed copies of arcs in $\delta^-(v)$. For example, allowing different departure times for arcs $v_2v_3$ and $v_2v_4$ introduces two copies of arc $v_1v_2$ departing $(v_1, 0)$.
\item[(I2)] The savings from consolidating flows is no longer captured with the existing formulation. Two trajectories departing arc $uv$ at the same time may not travel on the same timed arc in $D_S$.
\end{enumerate}

The former issue is the main hurdle, since the objective in moving to a discretization for each arc was to decrease the number of variables in the lower bound formulations. However, recall that in a time-indexed formulation on $D_S$, there is a flow variable for the pair $(k, ((v,t), (w,t')))$ only when $vw \in D^k$. Returning to the running example in Figure \ref{fig:P2_description}, if for commodity $k$ we have $v_2 v_3 \in D^k$ and $v_2 v_4 \notin D^k$, then there must be a flow variable for the pair $(k, {\color{myred}a})$, but not for $(k, {\color{myblue}a'})$ since {\color{myblue}$a'$} is only permitted if the subsequent arc taken is $v_2v_4$. As a result, while the partial network includes additional timed arcs incoming to $v_2$ this \emph{does not} increase the number of variables for commodity $k$. 

This observation motivates the partitioning of the arc set presented in Section \ref{sec:arc-partition}, where for each $k \in \K$, all arcs in $\delta_{D^k}^+(v)$ are in the same part. We then use this arc partition to build an auxiliary flat network, denoted $G$, based on the original flat network $D$. In Section \ref{sec:G_T} we model SND-RR on the corresponding full network of $G$, denoted $G_T$, and modify the consolidation constraints in order to correct for (I2). In Theorem \ref{theorem:arc_based_equiv}, we prove that the resulting formulation is isomorphic to the original formulation \ref{IP:D_T}. In Section \ref{sec:arc-based}, we present a DDD algorithm that uses time-indexed formulations on the auxiliary graph $G$ that avoids issue (I1).

\subsection{Auxiliary graph}\label{sec:arc-partition}
Let $\I$ be an instance of SND-RR with base network $D = (N,A)$, time horizon $T$, and designated flat network $D^k \subseteq D$ for each commodity $k \in \K$. For each node $v \in N$, we partition the set of arcs in $\delta^+_D(v)$ into arc sets $\A_v = \{A_1, A_2, \ldots, A_{r}\}$ so that for each commodity $k \in \K$, $\delta^+_{D^k}(v) \subseteq A_i$ for some $A_i \in \A_v$. That is, the arcs departing $v$ that can be used by a fixed commodity in a feasible solution must be contained in a single set in the partition. Using this arc partition, in Section \ref{sec:arc-based} we define a DDD approach so that arcs in different sets in $\A_v$ can have different sets of available departure times in each iteration of DDD.

\begin{definition}
A partition $\A= \{\A_v\}_{v \in N}$ of the arcs in $D=(N,A)$ is \emph{valid with respect to the instance $\I$} if for each $v \in N$, 
\begin{enumerate}[noitemsep]
\item $\A_v = \{A_{1}, \ldots, A_{r}\}$ is a partition of $\delta^+_D(v)$, and 
\item for each commodity $k \in \K$, $\delta^+_{D^k}(v) \subseteq A_i$ for some $A_i \in \A_v$. 
\end{enumerate} 
\end{definition}
For each node $v \in N$, let $\V(v):= \{v^i: v \in N, i \in \{0, 1, \ldots, |\A_v|\}\}$ denote a set of nodes obtained by creating a copy of $v$ for each set $A_i \in \A_v$, along with a copy $v^0$ which we refer to as the \emph{terminal copy}. Given a valid partition of the arcs, $\A$, and flat network $D$, we construct an auxiliary graph where we create a copy of each node $v$ for each part in the partition of $\delta^+(v)$. The arc set is constructed so that there is a copy of each arc $vw$ departing the node copy representing the part containing $vw$, for each copy of $w$. This construction is stated more precisely in the following definition.

\begin{definition}
Given a flat network $D$ and a valid partition of the arc set $\A$, the \emph{auxiliary flat network $G(D, \A)$} has node set 
\[V :=  \bigcup_{v \in N}\V(v),\]
and arc set 
\[ E = \{v^i w^j: vw \in A_i \in \A,  w^j \in \V(w)\}. \]
\end{definition} 

When the base graph $D$ and arc partition $\A$ are clear from context, we simply write $G$ rather than $G(D, \A)$. In Figure \ref{fig:aux_graphB} we provide the auxiliary flat network when $D$ is the flat network in Figure \ref{fig:aux_graphA} and the arc partition is $\A = \{vw\}_{vw \in A}$ (each arc is in its own part). Observe that for each node $v$ in $D$, the corresponding copy $v^0$ in $G$ has no outbound arcs, and will only be used by commodities with destination at $v$. 
\begin{figure}[htbp]
    \centering
    \begin{subfigure}[b]{0.49\textwidth}
	\centering
        \includegraphics[width=.45\textwidth]{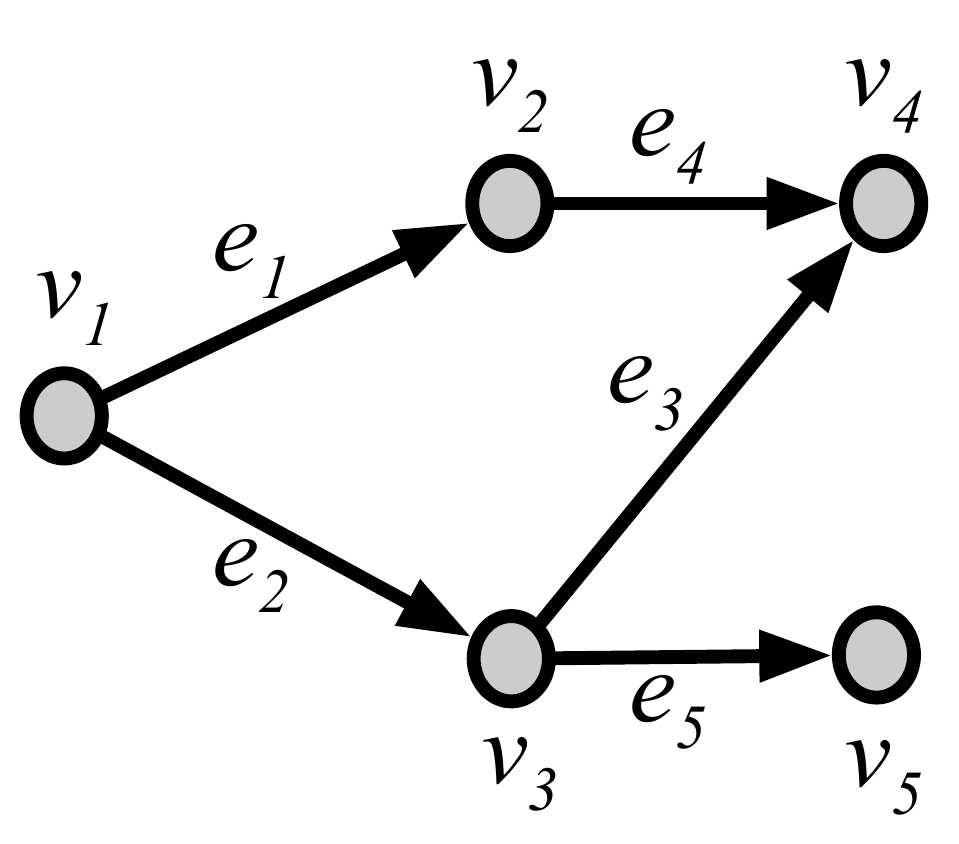}
        \caption{A flat network $D$.}
        \label{fig:aux_graphA}
    \end{subfigure}
    \begin{subfigure}[b]{0.49\textwidth}
	\centering
        \includegraphics[width=.6\textwidth]{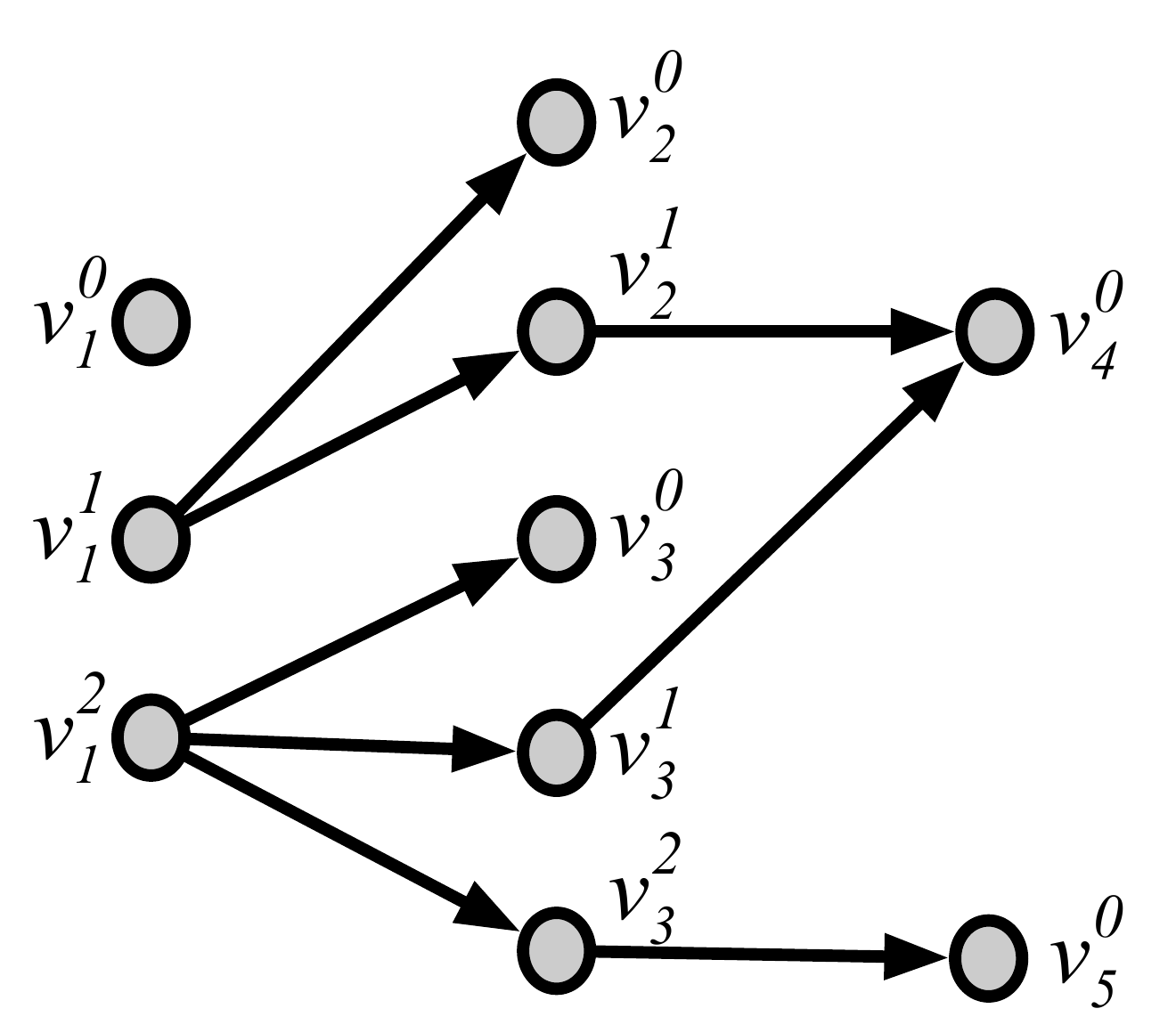}
        \caption{$G(D, \A)$ when $\A = \{vw\}_{vw \in A}$}
        \label{fig:aux_graphB}
    \end{subfigure}
    \caption{Construction of the auxiliary flat network}
    \label{fig:aux_D}
\end{figure}
\FloatBarrier

 We assign the fixed and variable costs as well as the transit time for each arc copy in $E$ to be the same as its underlying arc in $A$.  The format of the auxiliary network allows us to store the set of available departure times at the node copy $v^i$ for each set of arcs in part $A_i \in \A_v$, rather than forcing all arcs departing a fixed vertex to have the same set of departure times in each iteration of DDD. 

\subsection{Modeling SND-RR on $G_T$}\label{sec:G_T}
Recall that in the problem of SND-RR, each commodity has a designated feasible subgraph $D^k$ of $D$ through which it must be routed. That is, the physical route of commodity $k$ must be a path from $o_k$ to $d_k$ in $D^k$. In order to create a time-indexed formulation where the flat network is the auxiliary graph $G$, we first define the designated subgraph $G^k$ in $G$ for each commodity $k \in \K$. We then assign the origin and destination for each commodity in $G$. 

\subsubsection*{Mapping $D^k$ to $G^k$} 
Let $\A = \{\A_v\}_{v \in N}$ be the valid arc partition, and for each $k \in \K$ let $D^k = (N^k, A^k)$ denote the designated subgraph for commodity $k$. For each $k \in \K$, we may assume that $\delta^+_{D^k}(d_k) = \emptyset$, and for all $v \in N^k \setminus d_k$, $\delta^+_{D^k}(v) \neq \emptyset$ as otherwise $D^k$ could be reduced without losing any optimal solutions. 
 
For each set $A_i \in \A$, let $\K(A_i)$ denote the set of commodities that could traverse an arc in $A_i$ in a feasible solution. That is, for each $v \in N$ and each set of arcs $A_i \in \A_v$, let 
\[\K(A_i):= \{k \in \K: \emptyset \neq \delta_{D^k}^+(v) \subseteq A_i\}.\] 
By definition of a valid partition, for each commodity $k \in \K$, at each node $v \in N^k \setminus \{d_k\}$ there is a single copy of node $v$ in $G$ that has outgoing arcs that could be used by commodity $k$. More precisely, there is a single set $A_i \in \A_v$ such that $k \in \K(A_i)$. For the destination $d_k$, we assign the corresponding node to be $d_k^0$ in $G$.

For each commodity $k \in \K$ we define the designated subgraph $G^k$ in $G$ as the graph with node set
\[ V^k := \{v^i: v \in N^k, A_i \in \A_v, k \in \K(A_i)\} \cup \{d_k^0\}, \]
and arc set 
\[ E^k := \{v^i w^j: vw \in A^k, v^i \in V^k, w^j \in V^k\}. \]
Observe that there is a single copy of the origin and destination nodes, $o_k$ and $d_k$, in $V^k$. Let $o_k'$ and $d_k^0$ denote these node copies respectively.

We provide an example of this construction in Figure 
\ref{fig:aux3_D}, where we continue with the flat network $D$ and auxiliary network $G(D, \A)$ from Figure \ref{fig:aux_D}. In Figure \ref{fig:aux_graph3A} we show the subgraph $D^k$ in purple dashed lines, for a commodity with origin $v_1$ and destination $v_4$. In 
Figure \ref{fig:aux_graph3B} the subgraph $G^k$ is presented in purple dashed lines, and we see the origin and destination in $G^k$ are $o'_k = v_1^2$ and $d^0_k = v_4^0$ respectively. 

\begin{figure}[htbp]
    \centering
    \begin{subfigure}[b]{0.49\textwidth}
	\centering
        \includegraphics[width=.45\textwidth]{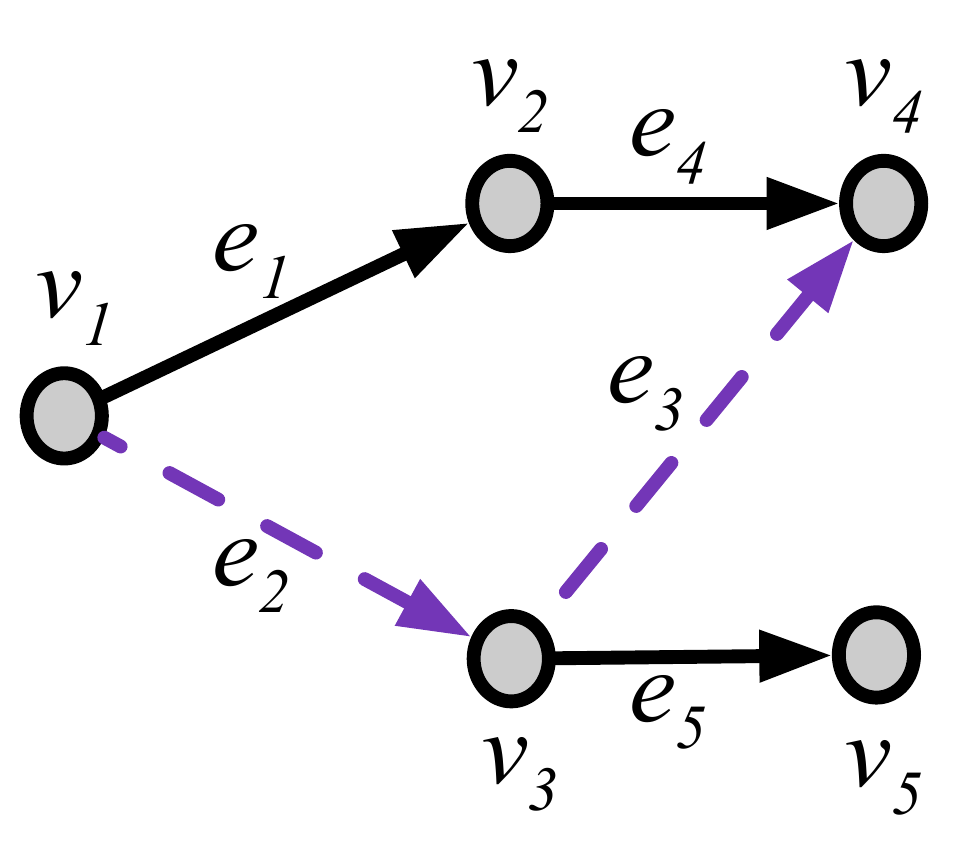}
        \caption{A flat network $D$ and feasible subgraph $D^k$.}
        \label{fig:aux_graph3A}
    \end{subfigure}
    \begin{subfigure}[b]{0.49\textwidth}
	\centering
        \includegraphics[width=.6\textwidth]{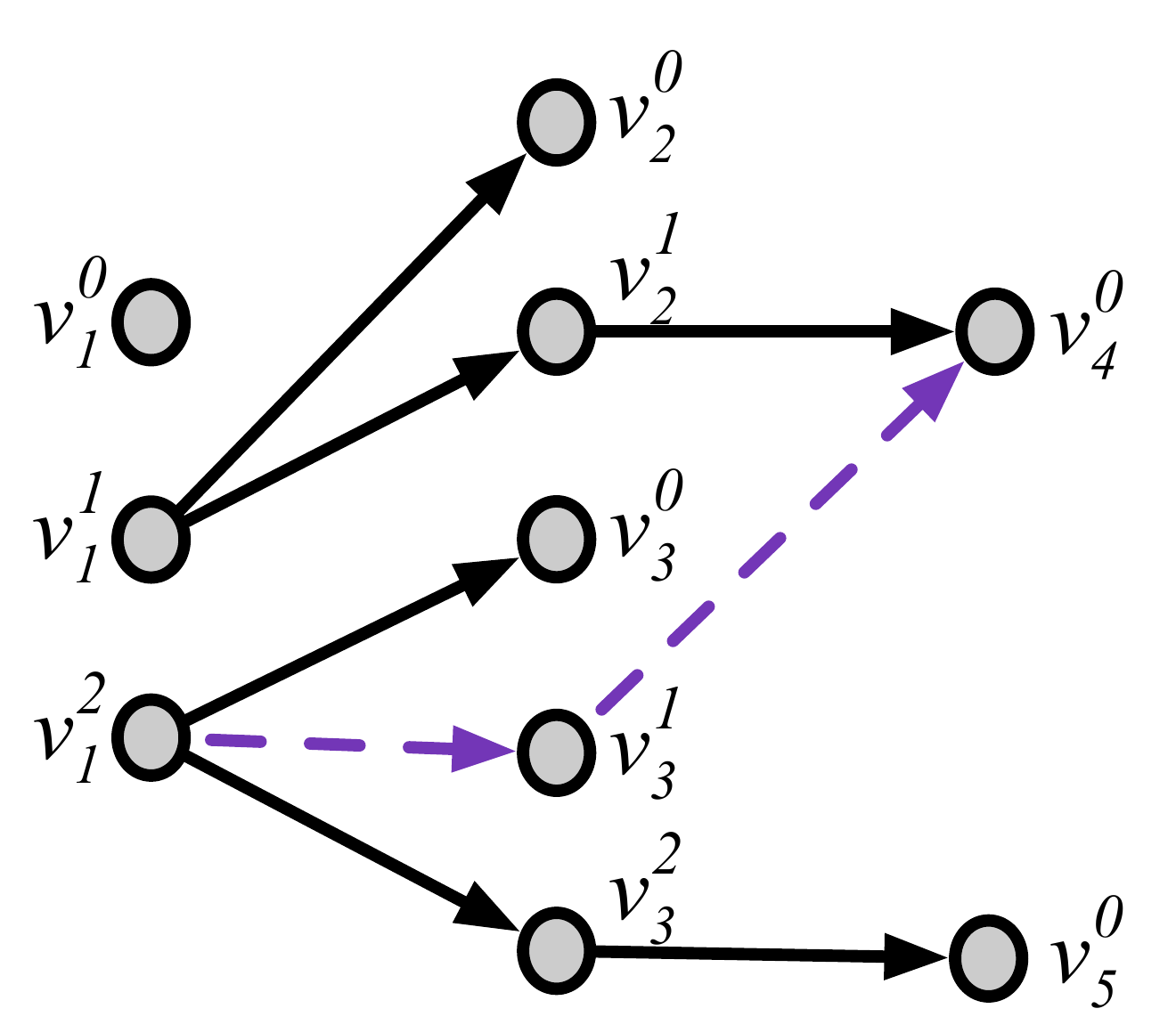}
        \caption{The corresponding subgraph $G^k$ in $G$.}
        \label{fig:aux_graph3B}
    \end{subfigure}
    \caption{Map from $D^k$ to $G^k$ when $D^k$ is a single path.}
    \label{fig:aux3_D}
\end{figure}
\FloatBarrier

\begin{lemma}\label{lemma:G_K_size}
For each commodity $k \in \K$, $G^k$ has the same number of nodes and arcs as $D^k$. 
\end{lemma}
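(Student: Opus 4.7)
The plan is to construct explicit bijections between $N^k$ and $V^k$, and between $A^k$ and $E^k$. The existence of these bijections follows almost directly from the definition of a valid arc partition, and the main work is really just unpacking the definitions and verifying uniqueness.

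For the node bijection, I would define $\varphi \colon N^k \to V^k$ by $\varphi(d_k) = d_k^0$ and, for every $v \in N^k \setminus \{d_k\}$, $\varphi(v) = v^i$ where $A_i \in \A_v$ is the part containing $\delta^+_{D^k}(v)$. The first step is to check that $\varphi$ is well defined: since $v \neq d_k$, the earlier reduction gives $\delta^+_{D^k}(v) \neq \emptyset$, and validity of $\A$ gives a (necessarily unique, because $\A_v$ is a partition) set $A_i$ with $\delta^+_{D^k}(v) \subseteq A_i$; this $i$ is the unique value with $k \in \K(A_i)$, so $v^i \in V^k$. The second step is surjectivity and injectivity. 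Any node copy of the form $v^i$ in $V^k$ has $v \in N^k$ and $k \in \K(A_i)$, and the latter forces $A_i$ to be the unique part containing $\delta^+_{D^k}(v)$, matching $\varphi(v)$; thus every element of $V^k$ is hit exactly once, and the extra $d_k^0$ is hit only by $d_k$. Since $\delta^+_{D^k}(d_k) = \emptyset$, no part $A_i \in \A_{d_k}$ has $k \in \K(A_i)$, so $d_k^0$ is the only copy of $d_k$ in $V^k$, confirming injectivity.

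For the arc bijection, I would define $\psi \colon A^k \to E^k$ by $\psi(vw) = \varphi(v)\,\varphi(w)$. I first verify $\psi(vw) \in E^k$. Writing $\varphi(v) = v^i$, the part $A_i$ contains $\delta^+_{D^k}(v)$ and in particular contains $vw$; since $\varphi(w) \in \V(w)$, the pair $v^i\,\varphi(w)$ lies in $E$, and both endpoints lie in $V^k$, so $\psi(vw) \in E^k$. Injectivity is immediate because $\varphi$ is injective on endpoints. For surjectivity, given any $v^i w^j \in E^k$, the underlying arc $vw$ lies in $A^k$ (as it lies in $A_i$ and in $\delta^+_{D^k}(v)$ is forced by $v^i \in V^k$), and $v^i, w^j$ are the unique copies of $v, w$ in $V^k$, hence equal $\varphi(v), \varphi(w)$, so $\psi(vw) = v^i w^j$.

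I expect no real obstacle; the only subtle point is that the uniqueness of $v^i$ and $w^j$ in $V^k$ crucially uses both conditions of a valid partition (that $\A_v$ is a partition, so at most one part contains $\delta^+_{D^k}(v)$, and that validity guarantees at least one such part when $\delta^+_{D^k}(v) \neq \emptyset$). The destination node $d_k$ requires a separate argument since it has empty out-degree in $D^k$, but this is handled by the explicit addition of $d_k^0$ to $V^k$. Everything else is bookkeeping, so the proof should be short.
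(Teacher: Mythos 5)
Your proof is correct and follows essentially the same route as the paper's: both arguments establish that each node of $N^k$ has exactly one copy in $V^k$ (the unique $v^i$ with $k \in \K(A_i)$ for non-destination nodes, and $d_k^0$ for the destination), and then deduce the arc count from the uniqueness of the endpoint copies. You simply make the bijections explicit and verify injectivity/surjectivity in more detail than the paper does.
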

\begin{proof}
Let $k \in \K$. First, we show that there is a one-to-one correspondence between the vertices in $D^k$ and the vertices in $G^k$. Let $v \in N^k$, and suppose $v \neq d_k$. Since $\A$ is a valid partition of the arcs and $\delta_{D^k}^+(v) \neq \emptyset$, $k \in \K(A_i)$ for a single set $A_i \in \A_v$. By definition of $V^k$ there is a single copy, $v^i$, of $v$ in $V^k$. Suppose instead that $v = d_k$. Then $\delta_{D^k}^+(v) = \emptyset$ and so $k \notin \K(A)$ for all $A \in \A_v$. As a result, the only copy of $v$ in $G$ is $d_k^0$. Thus, the number of nodes is the same in $D^k$ and $G^k$.

Since there is a single node in $V^k$ for each node in $N^k$, there is one copy $v^i w^j$ in $E^k$ of each arc $vw$ in $A^k$. Therefore the number of arcs in each graph is also the same.
\end{proof}

Furthermore, there is a natural bijection between the set of dipaths in the flat networks $D$ and $G(D, \A)$. Let $\P_D$ denote the set of dipaths in $D$, and let $\P_G$ denote the set of dipaths in $G$ which end at a terminal node. We define a map $\phi: \P_D \rightarrow \P_G$ so that for any dipath $P \in \P_D$ with node sequence $\{v_1, v_2, \ldots, v_{j+1}\}$ and arc sequence $\{a_1, a_2, \ldots, a_{j}\}$, then $\phi(P)$ is the unique dipath in $G$ with the node sequence $\{v^{i_1}_1, v^{i_2}_2, \ldots, v^{i_j}_j, v_{j+1}^0\}$, where for each $\ell \in [j]$, $a_l$ is in $A_{i_l} \in \A$. We think of $\phi$ as both as a map of the vertices and arcs of a trajectory. Let $\phi^{-1}$ denote the inverse of $\phi$. That is, $\phi^{-1}$ takes a path $P$ in $G$ and projects each arc (node) copy in $G$ down to its corresponding arc (node) in $D$.

\begin{fact}\label{fact:bijection_D}
$|\P_D| = |\P_G|$, and  $\phi$ is a bijection between $\P_D$ and $\P_G$. 
\end{fact}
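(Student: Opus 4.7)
The plan is to prove the fact by exhibiting an explicit two-sided inverse for $\phi$, which simultaneously establishes that $\phi$ is a bijection and that $|\P_D|=|\P_G|$. The candidate inverse is the projection map $\phi^{-1}$ already alluded to in the setup: given a dipath $Q \in \P_G$ with node sequence $\{v_1^{i_1}, v_2^{i_2}, \ldots, v_j^{i_j}, v_{j+1}^0\}$, let $\phi^{-1}(Q)$ be the sequence $\{v_1, v_2, \ldots, v_{j+1}\}$ in $D$ together with the underlying arcs.

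First I would verify that $\phi$ is well-defined, i.e., that for each $P \in \P_D$ with node sequence $\{v_1,\ldots,v_{j+1}\}$ and arc sequence $\{a_1,\ldots,a_j\}$, the sequence $\{v_1^{i_1},\ldots,v_j^{i_j},v_{j+1}^0\}$ is actually a dipath in $G$ ending at a terminal node. The index $i_\ell$ is uniquely determined for each $\ell \in [j]$ because $\A_{v_\ell}$ is a partition of $\delta^+_D(v_\ell)$, so there is exactly one part $A_{i_\ell} \in \A_{v_\ell}$ containing $a_\ell$. By definition of the arc set $E$, since $a_\ell = v_\ell v_{\ell+1} \in A_{i_\ell}$ and $v_{\ell+1}^{i_{\ell+1}} \in \V(v_{\ell+1})$ (with the convention $i_{j+1}=0$), the arc $v_\ell^{i_\ell} v_{\ell+1}^{i_{\ell+1}}$ lies in $E$. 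Thus $\phi(P) \in \P_G$, and it ends at the terminal copy $v_{j+1}^0$.

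Next I would check that $\phi^{-1}$ is well-defined on $\P_G$. Given $Q \in \P_G$, each arc $v^i w^j \in E$ corresponds by construction to a unique arc $vw \in A$, and consecutive arcs in $Q$ share a common node copy, so the projection produces a consistent sequence of arcs in $D$ sharing the required endpoints; hence $\phi^{-1}(Q) \in \P_D$.

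Finally I would show $\phi^{-1}\circ \phi = \mathrm{id}_{\P_D}$ and $\phi\circ \phi^{-1} = \mathrm{id}_{\P_G}$. The first is immediate: projecting the node sequence $\{v_1^{i_1},\ldots,v_{j+1}^0\}$ back to $D$ recovers $\{v_1,\ldots,v_{j+1}\}$ and the corresponding arcs. For the second direction, the mild subtlety—and the only real content of the proof—is that the index $i_\ell$ assigned by $\phi$ to an internal node $v_\ell$ is forced to equal the original index in $Q$. This holds because in $Q$ the arc $v_\ell^{i_\ell} v_{\ell+1}^{i_{\ell+1}}$ lies in $E$, which by the definition of $E$ requires $a_\ell = v_\ell v_{\ell+1} \in A_{i_\ell}$; since $\A_{v_\ell}$ is a partition, $i_\ell$ is the unique index with this property, matching the one $\phi$ assigns. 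At the terminal node, $Q$ ends at a terminal copy by assumption, so $\phi$ also closes the path with $v_{j+1}^0$. This establishes the bijection and the cardinality equality.
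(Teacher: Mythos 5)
The paper records this as a \emph{Fact} with no accompanying proof, so there is no argument of the authors' to compare against line by line; your strategy of exhibiting the projection as an explicit two-sided inverse is the natural way to substantiate the claim, and your observation that the index $i_\ell$ is forced because $\A_{v_\ell}$ partitions $\delta^+_D(v_\ell)$, together with the fact that $v_\ell^{i_\ell}w^{j}\in E$ for \emph{every} copy $w^j$ of the head, is exactly the content the paper is implicitly relying on. Injectivity and $\phi^{-1}\circ\phi=\mathrm{id}_{\P_D}$ are handled correctly.

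There is, however, one asserted step that does not hold as stated: that $\phi^{-1}(Q)\in\P_D$ for every $Q\in\P_G$. The projection of a simple dipath in $G$ is a directed walk in $D$, but it need not be a simple path, because distinct copies $v^i$ and $v^{i'}$ of the same node $v$ are distinct vertices of $G$ and a dipath in $G$ may visit both. Concretely, in the hub-and-spoke partition with a hub $h$, a spoke $s$ in its region, and a second hub $h'$ (assuming $hs$ and $sh$ are both arcs of $D$), the sequence $h^1, s^1, h^2, h'^0$ is a simple dipath in $G$ ending at a terminal node --- it uses a regional arc out of $h^1$, the single part at $s$, and a national arc out of $h^2$ --- yet it projects to the non-simple walk $h,s,h,h'$. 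Hence $\phi$ is injective but not surjective onto all of $\P_G$ whenever $D$ has a directed cycle through a node whose out-arcs are split across parts. This is really an imprecision in the statement of the Fact rather than in your strategy: the claim becomes correct if ``dipath'' is read as ``directed walk,'' or if $\P_G$ is restricted to dipaths visiting at most one copy of each node of $D$ (which is all the paper ever uses, since the paths $\phi(P_k)$ and the flat projections arising in the algorithm are of this form). If the statement is kept as written, the well-definedness of $\phi^{-1}$ is precisely the step that needs this restriction, and your proof should say so rather than pass over it.
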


Given a time horizon $T$, let $G_T = (V_T, E_T \cup F_T)$, denote the full network with respect to $G$ and $T$, where $E_T$ denotes the movement arcs and $F_T$ denotes the storage arcs. Let $G_T^k$ denote the full network corresponding to $G_k$. Fact \ref{fact:bijection_D} naturally extends to the following analogous statement for $D_T$ and $G_T$. Let $\Q_D$ denote the set of trajectories in $D_T$, and let $\Q_G$ denote the set of trajectories in $G_T$ that end at a timed terminal node. 

We define the map $\phi_T: \Q_D \rightarrow \Q_G$ as follows (again, we will think of $\phi_T$ as mapping both timed arcs and timed nodes). Let $Q \in \Q_D$ be a trajectory in $D_T$ and let $P$ be the corresponding flat path in $D$. Note that any trajectory in $D_T$ is is fully determined by its physical path in $D$ along with a departure time for each arc in the path. We define $\phi_T(Q)$ as the trajectory in $G_T$ defined by the path $\phi(P)$ along with the departure times of $Q$. Let $\phi^{-1}_T$ denote the inverse of $\phi_T$. That is, $\phi^{-1}_T$ takes a trajectory $Q$ in $G_T$ and projects each timed arc (timed node) copy in $G_T$ down to its corresponding timed arc (timed node) in $D_T$.

\begin{fact}\label{fact:bijection_G}
$|\Q_D| = |\Q_G|$, and $\phi_T$ is a bijection between $\Q_D$ and $\Q_G$. 
\end{fact}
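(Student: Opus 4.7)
The plan is to reduce Fact \ref{fact:bijection_G} to Fact \ref{fact:bijection_D} by exploiting the fact that a trajectory is fully determined by its underlying flat path together with a departure time on each movement arc. Since transit times, release times, and deadlines are all preserved in the auxiliary network $G$, the ``temporal data'' of a trajectory is unchanged by $\phi_T$; only the spatial data is affected, and for the spatial data we already have the bijection $\phi$ from Fact \ref{fact:bijection_D}.

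First I would verify that $\phi_T$ is well-defined as a map into $\Q_G$. Given $Q \in \Q_D$ with flat path $P = (v_1, \ldots, v_{j+1})$ and departure times $(t_1, \ldots, t_j)$, the image $\phi_T(Q)$ is built from $\phi(P)$, which by construction ends at the terminal copy $v_{j+1}^0$, together with the same departure times. Since every arc in $\phi(P)$ has the same transit time as its pre-image in $P$ (arc costs/transit times are copied from $D$ to $G$), these departure times form a valid trajectory in $G_T$ ending at a timed terminal node, so $\phi_T(Q) \in \Q_G$.

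Next I would show injectivity: if $\phi_T(Q) = \phi_T(Q')$, then in particular their underlying flat paths in $G$ are equal, so by injectivity of $\phi$ (Fact \ref{fact:bijection_D}) the flat paths of $Q$ and $Q'$ in $D$ agree; and the departure time sequences agree by construction, hence $Q = Q'$. For surjectivity, take $\tilde Q \in \Q_G$ with flat path $\tilde P \in \P_G$ (which ends at a terminal copy by definition of $\Q_G$) and departure times $(t_1, \ldots, t_j)$. Surjectivity of $\phi$ yields $P \in \P_D$ with $\phi(P) = \tilde P$, and defining $Q$ as the trajectory on $P$ with departure times $(t_1, \ldots, t_j)$ gives $\phi_T(Q) = \tilde Q$; here I would note that $(t_1, \ldots, t_j)$ are valid departure times in $D_T$ precisely because they were valid in $G_T$ and transit times are preserved.

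The only mildly delicate point, and what I expect to spend a sentence on, is confirming that $\phi^{-1}$ is the correct inverse at the trajectory level, i.e.\ that projecting timed nodes and timed arcs of a trajectory in $G_T$ down to $D_T$ yields a bona fide trajectory. This reduces to observing that the projection of a dipath in $G$ is a dipath in $D$ (already handled by Fact \ref{fact:bijection_D}) and that the arrival time at the head of a projected arc coincides with the departure time of the next projected arc, which is immediate because both equalities held in $G_T$ and the projection does not alter times.
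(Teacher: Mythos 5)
Your proposal is correct and takes essentially the same route the paper intends: the paper states Fact~\ref{fact:bijection_G} without proof as the ``natural extension'' of Fact~\ref{fact:bijection_D}, relying precisely on the observation that a trajectory is determined by its flat path plus departure times, that $\phi$ bijects the flat paths, and that transit times are copied from $D$ to $G$ so the timing data transfers unchanged in both directions. Your write-up simply makes that implicit argument explicit, including the correct check that projection via $\phi_T^{-1}$ returns a bona fide trajectory.
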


These bijections allow us to argue that the following formulation for SND-RR is correct. 


\subsubsection*{The formulation SND-RR($G_T$)}
In the definition of $G$, there are multiple copies of the same arc $vw \in A$. Suppose two commodities $k$ and $k'$ traverse the arc $vw$ at the same time, but then travel along arcs departing $w$ in different parts according to the partition $\A$. As a result, while it was possible for each commodity to traverse the same timed copy of $vw$ in $D_T$, it is \emph{no longer possible} in $G_T$. To capture the ability to consolidate flow in $G_T$, we define subsets of timed arcs in $G_T$ which have the same departure time and underlying arc in $D$ (as opposed to $G$). That is, for each timed arc $a = ((v,t),(w, t')) \in A_T$, we define the set 
\[ \E_T(a):= \{((v^i, t), (w^j, t')): vw \in A_i \in \A_v, w^j \in \V(w)\}.\] 
We now present a formulation for solving instances of SND-RR, denoted \ref{IP:G_T}, which is defined on the time-expanded network $G_T$. We will continue to use $a$ to denote timed arcs in $D_T$, and will use $e$ to denote timed arcs in $G_T$. Note that there are a few key differences between \ref{IP:D_T} and \ref{IP:G_T}. First, $G_T$ is a larger graph, even in the case where $\A$ is the trivial partition, $\{\delta^+(v)\}_{v \in N}$, in which case $G_T$ is roughly double the size of $D_T$ (there are two copies of each node -- the copy with departing arcs, and the terminal copy). We also restrict commodities to the designated subgraphs $G^k_T$ in $G_T$ rather than in $D_T$. Finally, we capture the consolidation onto different arc copies through modifying constraint (\ref{const:G_Tcapacity}) for all $a \in A_T$ by summing over each timed arc copy in $\E_T(a)$.
\begin{align}\tag{SND-RR$(G_T)$}\label{IP:G_T}
    \min~~ & \sum_{a \in A_T} f_a y_a + \sum_{k \in \K} \sum_{\substack{e \in E_T}}  c^k_e q_k x_{e}^k \\
    \vspace{.5cm}
    \mbox{s.t.} ~~
	& x^k(\delta_{G_T^k}^{+}(v,t)) - x^k(\delta_{G_T^k}^{-}(v,t)) = 	\begin{cases}
        1 ~(v,t) = (o'_k, r_k) \\
        -1 ~(v,t) = (d_k^0, l_k) \\
        0 ~\mbox{otherwise} \\
    \end{cases} \quad \forall k \in \mathcal{K}, (v,t) \in V^k_T \label{const1:flow_routing}\\
     & \sum_{k \in \mathcal{K}} \sum_{e \in \E_T(a)} q_k x_{e}^k \leq u_a y_a \quad \forall a \in A_T \label{const:G_Tcapacity}\\
    & x_e^k \in \{0,1\} \quad \forall k \in \mathcal{K}, e \in E_T^k \cup F_T^k \label{var:G_Tflow} \\
	& y_a \in \mathbb{N}_{\geq 0} \quad \forall k \in \mathcal{K}, a \in A_T \label{var2:G_Tflow}
\end{align}

However, despite working with a larger graph $G_T$, the number of variables and constraints are the same in \ref{IP:D_T} and \ref{IP:G_T}. 

\begin{theorem}\label{theorem:arc_based_equiv}
\ref{IP:D_T} and \ref{IP:G_T} have the same number of variables and constraints, and are equivalent. 
\end{theorem}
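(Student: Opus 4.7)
The plan is to prove the theorem in two parts, counts and equivalence, leveraging Lemma \ref{lemma:G_K_size} and Fact \ref{fact:bijection_G}.

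For the counts, I would first observe that Lemma \ref{lemma:G_K_size} gives $|N^k| = |V^k|$ and $|A^k| = |E^k|$ for every $k \in \K$. Time-expanding both sides preserves these equalities: $|N^k_T| = |V^k_T|$ because each flat node is copied $T+1$ times in either construction, and $|A^k_T \cup H^k_T| = |E^k_T \cup F^k_T|$ because movement arcs are copied once per valid departure time (which depends only on the underlying arc and its transit time, shared by $D$ and $G$) and holdover arcs correspond one-to-one with timed node pairs. Hence the flow variables $x^k$ and flow-conservation constraints (\ref{const1:D_Tflow}) / (\ref{const1:flow_routing}) are in bijection commodity-by-commodity, while the truck variables $y_a$ and capacity constraints (\ref{const1:D_Tcapacity}) / (\ref{const:G_Tcapacity}) are both indexed by $A_T$ in either formulation.

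For equivalence, I would exhibit a cost-preserving bijection between the feasible solution sets. Starting from feasible $(x,y)$ for \ref{IP:D_T}, form the trajectories $\Q = \{Q_k\}_{k \in \K}$ supported by $x$, apply $\phi_T$ from Fact \ref{fact:bijection_G} commodity-wise to obtain trajectories $\{\phi_T(Q_k)\}_{k \in \K}$ in $G_T$, and let $x'^k_e$ be the indicator that $e$ lies on $\phi_T(Q_k)$; keep $y$ unchanged. Because $\phi_T$ sends an $(o_k, r_k)$-$(d_k, l_k)$ trajectory in $D^k_T$ to an $(o'_k, r_k)$-$(d^0_k, l_k)$ trajectory in $G^k_T$, constraint (\ref{const1:flow_routing}) holds. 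The reverse map uses $\phi_T^{-1}$ and is well-defined by Fact \ref{fact:bijection_G}.

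The cost equality follows arc-by-arc: each $e = ((v^i,t),(w^j,t'))$ inherits $c^k_e$ and $\tau_e$ from its underlying arc $a = ((v,t),(w,t')) \in A_T$, and the $y_a$ variables are identical, so the objective values coincide. The only step requiring care, and the main obstacle in the argument, is the capacity constraint (\ref{const:G_Tcapacity}): here one must verify that
\begin{equation*}
\sum_{k \in \K} q_k x^k_a \;=\; \sum_{k \in \K} \sum_{e \in \E_T(a)} q_k x'^k_e \qquad \forall a \in A_T.
\end{equation*}
This is exactly where the design of $\E_T(a)$ pays off: since $\phi_T$ preserves departure times and the underlying flat arc, a commodity $k$ uses $a \in A_T$ in $\Q$ if and only if $\phi_T(Q_k)$ uses a unique timed arc in $\E_T(a)$ (determined by the part of $\A_w$ containing the next flat arc, or the terminal copy $w^0$ if $w = d_k$). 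Thus each unit of flow on $a$ is counted exactly once on the right-hand side, and feasibility of (\ref{const1:D_Tcapacity}) and (\ref{const:G_Tcapacity}) transfer with the same $y_a$. Reversing this identity via $\phi_T^{-1}$ completes the bijection and establishes the equivalence.
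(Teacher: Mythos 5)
Your proposal is correct and follows essentially the same route as the paper's proof: it invokes Lemma \ref{lemma:G_K_size} (plus preservation of transit times) to match variable and constraint counts, and uses the bijection $\phi_T$ to transfer feasible solutions while checking that consolidation onto each $a \in A_T$ is preserved via $\E_T(a)$. Your explicit display of the capacity-sum identity is just a more formal phrasing of the paper's observation that the same set of commodities appears in the capacity constraint for each timed arc $a$ in both formulations.
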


\begin{proof}
By Lemma \ref{lemma:G_K_size}, for each $k \in \K$, $D^k$ and $G^k$ have the same number of nodes and arcs. This implies that $D_T^k$ and $G_T^k$ also have the same number of timed nodes and timed arcs since the transit time of an arc in $G$ is the same as the corresponding arc in $D$. Thus, the number of variables and constraints in \ref{IP:D_T} and \ref{IP:G_T} is the same. 

We now prove that there is a cost-preserving bijection between feasible solutions in \ref{IP:D_T} and feasible solutions in \ref{IP:G_T}. Let $(\bar{x}, \bar{y})$ be a feasible solution in \ref{IP:D_T} with corresponding trajectories $\Q = \{Q_k\}_{k \in \K}$ in $D_T$ and paths $\P = \{P_k\}_{k \in \K}$ in $D$. For each $k \in \K$, consider the set of trajectories $\phi_T(\Q) = \{\phi_T(Q_k)\}_{k \in \K}$. Since the physical path of $Q_k$ is $\phi(P_k)$, and $P_k \subseteq D^k$, it follows from the definition of $G^k$ that $\phi(P_k) \subseteq G^k$. Additionally, $\phi(P_k)$ has origin $o'_k$ and destination $d^0_k$. Finally, since the departure times are the same for arcs in $P_k$ and their corresponding arc copies in $\phi(P_k)$, it follows that $\phi(Q_K)$ is feasible for commodity $k$ in \ref{IP:G_T}. 

It remains to argue that the cost of $\phi_T(\Q)$ in \ref{IP:G_T} is the same as the cost of $\Q$ in \ref{IP:D_T}. Since arc copies in $G$ are assigned the same costs as the underlying arcs in $D$ and $\phi(P)$ simply assigns each arc in $P$ to one of its copies in $G$, the variable cost for each trajectory is the same for $\Q$ and $\phi_T(\Q)$. We now consider the fixed cost for timed arc $a \in A_T$. Since $\Q$ and $\phi_T(\Q)$ have the same set of departure times and $\phi$ maps each arc in the flat path to one of its copies in $G$, it follows that the same set of commodities $k$ appear in the capacity constraint for timed arc $a$ in \ref{IP:D_T} and \ref{IP:G_T}. Therefore the fixed cost of $\phi_T(\Q)$ in \ref{IP:G_T} is also the same as the fixed cost of $\Q$ in \ref{IP:D_T}. The reverse direction is analogous. 
\end{proof}

Thus, in order to solve \ref{IP:D_T}, we can instead solve \ref{IP:G_T}. As previously mentioned, moving to \ref{IP:G_T} has no advantage over \ref{IP:D_T} if we are solving the formulations directly with a MIP solver since they are isomorphic. However, we will show that the DDD paradigm can be improved when applied to the base graph $G$ rather than $D$. Note that even if $D^k = D$ for some (or all) $k \in \K$, the auxiliary network still splits each node in $D$ into two nodes (one for departing arcs, and the other denoting the terminal copy).

\section{An arc-based DDD approach}\label{sec:arc-based}


We now define a new DDD algorithm based on a corresponding lower bound formulation for \ref{IP:G_T}. We describe the lower bound formulation, upper bound, and refinement processes required for a complete DDD algorithm. We will refer to the DDD algorithm based on the auxiliary network as an \emph{arc-based} DDD algorithm, and the original approach as a \emph{node-based} DDD algorithm. The arc-based DDD approach is not simply the standard DDD algorithm applied to this new auxiliary network. Instead, the approach must be modified to ensure costs continue to capture all possible consolidation opportunities. 

The advantage of using the auxiliary network is that in the corresponding DDD algorithm, each set of arcs in the same part of the arc partition has its own set of departure times. As a result, in each iteration fewer variables and constraints need to be added in order to improve the current network while maintaining a guaranteed lower bound. Specifically, when lengthening a short arc $((v,t),(w,t'))$, we can now add the departure time $t + \tau_{vw}$ to a subset of the arcs departing $w$, rather than the entire set. In Section 6, we highlight two particular applications where the arc-based approach has the potential to generate smaller iterations than the node-based approach.

In this section we assume $\A$ is a valid partition of the arc set $A$, and $G = G(D, \A)$ is the corresponding auxiliary network. Furthermore, each commodity $k \in \K$ has a designated network $D^k \subseteq D$, and designated network $G^k \subseteq G$. Additionally, the origin and destination of commodity $k$ in $G$ are denoted $o_k'$ and $d_k^0$ respectively. 

\subsection{Lower bound model}
Just as in the case of applying DDD to the base graph $D=(N,A)$, we require restrictions on the structure of a partial network of $G = (V,E)$, denoted $G_S = (V_S, E_S \cup F_S)$, and also require an accompanying formulation. We first begin with restrictions on $G_S$. 

\subsubsection*{Properties to guarantee a lower bound}
We will prove that when a partial network $G_S = (V_S, E_S \cup F_S)$ satisfies the following two properties, the optimal value of the corresponding formulation \ref{IP:G_S} defined in Section \ref{sec:LB_G_S} gives a lower bound on the value of \ref{IP:G_T}. These properties are analogous to properties (P1) and (P2) introduced in Section \ref{sec:background}. 
\begin{itemize}[noitemsep]
            \item[(P1$'$)]\textbf{Timed nodes}:

\hspace{1cm} For all $k \in \mathcal{K}$, $(o_k', r_k) \in V_S$ and $(d_k^0, l_k) \in V_S$;

\hspace{1cm} For all $v \in N$, $(v^0, T) \in N_S$ and $(u,0) \in N_S$ for all $u \in \V(v)$;
            \item[(P2$'$)]\textbf{Arc copies}: 

\hspace{1cm} For all $vw \in E$, for all $(v,t) \in V_S$ with $t + \tau_{vw} \leq T$, we have $((v,t), (w, t')) \in E_S$ 

\hspace{1cm} where $t' = \max \{r: r \leq t + \tau_{vw}, (w, r) \in V_S\}$;
        \end{itemize}
We define $F_S$ to be the set of holdover arcs connecting $V_S$. Given $G_S = (V_S, E_S \cup F_S)$, for each $k \in \K$ we obtain $G_S^k = (V_S^k, E_S^k \cup F_S^k)$ where $V_S^k = \{(v,t) \in V_S: v \in V^k\}$, $E_S^k = \{((v,t), (w, t')) \in E_S: vw \in E^k\}$, and $F_S^k$ is the corresponding set of holdover arcs.

\subsubsection*{Notation}
Let $D_S(G_S) = (N_S, A_S\cup H_S)$ denote the partial network on $D$ with timed nodes 
\[N_S = \{(v,t): (u, t) \in V_S \mbox{ for some } u \in \V(v)\},\] 
and $A_S$ constructed according to $(P2)$. Similar to the definition of $\E_T$, for each timed arc $a = ((v,t),(w, t'))$ in $A_S$, let $\E_S(a)$ denote the set of timed arc copies of $vw$ with departure time $t$. That is, 
\[ \E_S(a):= \{((v^i, t), (w^j, t'')) \in E_S: w^j \in \V(w), vw \in A_i \in \A_v\}.\] 
We may have $e = ((v^i, t), (w^j, t''))$ where $t'' \neq t'$ due to rounding down transit times in $G_S$. Note that each timed arc $e \in E_S$ is in a set $\E_S(a)$ for exactly one timed arc $a \in A_S$. 

\subsubsection{Lower bound formulation}\label{sec:LB_G_S}
We now state the lower bound formulation defined on valid partial auxiliary networks $G_S$. The set $A_S$ in the formulation is the set of timed movement arcs in $D_S(G_S)$. In the following formulation, for $e = ((v,t), (w,t')) \in E_S$, $\tau_e$ denotes the transit time of arc $vw$ rather than $t' - t$.  

\begin{align}\tag{SND-RR$(G_S)$}\label{IP:G_S}
    \min~~ & \sum_{a \in A_S} f_a y_a + \sum_{k \in \K} \sum_{e \in E_S}  c^k_e q_k x_{e}^k \\
    \vspace{.5cm}
    \mbox{s.t.} ~~
& x^k(\delta_{G_S^k}^{+}(v,t)) - x^k(\delta_{G_S^k}^{-}(v,t)) = 	\begin{cases}
        1 ~(v,t) = (o'_k, r_k) \\
        -1 ~(v,t) = (d_k^0, l_k) \\
        0 ~\mbox{otherwise} \\
    \end{cases} \quad \forall k \in \mathcal{K}, (v,t) \in V^k_S \label{const1:flow_routing_G_S}\\
    & \sum_{k \in \mathcal{K}} \sum_{e \in \E_S(a)} q_k x_{e}^k \leq u_a y_a \quad \forall a \in A_S \label{const:G_Scapacity}\\
	& \sum_{e \in E_S^k} \tau_e x_e^k \leq l_k - r_k \quad \forall k \in \K. \label{const:feasible_G_S} \\
    & x_e^k \in \{0,1\} \quad \forall k \in \mathcal{K}, e \in E^k_S \cup F^k_S \label{varG1:flow} \\
	& y_a \in \mathbb{N}_{\geq 0} \quad \forall k \in \mathcal{K}, a \in A_S \label{varG2:flow}
\end{align}
Constraint (\ref{const1:flow_routing_G_S}) ensures each commodity $k$ is assigned a feasible trajectory in $G_S^k$. Constraint (\ref{const:G_Scapacity}) is similar to the capacity constraint (\ref{const:G_Tcapacity}) in \ref{IP:G_T}, with the additional relaxation that flow can consolidate onto a common truck if it \emph{departs} along some copy of the base arc in $D$ at the same time. Observe that when $G_S = G_T$, it follows that \ref{IP:G_S} is the same as \ref{IP:G_T}.

We now prove that properties $(P1')$ and $(P2')$ are sufficient to ensure that \ref{IP:G_S} is a relaxation of \ref{IP:G_T}. The proof of this result is similar to the proof of Theorem 2 \cite{Boland1}, when given the base graph $G$ instead of $D$. However, careful attention is taken to track the consolidation of flow, which is pointed out near the end of the proof. In the following proof, while we are working with nodes in $V_T$ and $V_S$, we drop the superscripts denoting the copy of the node in $D$ for ease of notation until the end of the proof. 

\begin{theorem}\label{theorem:G_S_LB}
When $G_S$ satisfies properties $(P1')$ and $(P2')$, the optimal value of (\ref{IP:G_S}) provides a lower bound on the optimal value of (\ref{IP:G_T}). 
\end{theorem}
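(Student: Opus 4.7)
The plan is to mimic the proof of Theorem \ref{theorem:LB} (Theorem~2 of \cite{Boland1}) but working on the auxiliary graph $G$ instead of $D$. Specifically, take an arbitrary feasible solution $(\bar{x}, \bar{y})$ to \ref{IP:G_T} with trajectories $\Q = \{Q_k\}_{k \in \K}$ in $G_T$, and construct a feasible solution $(\hat{x}, \hat{y})$ to \ref{IP:G_S} with $\sum_a f_a \hat{y}_a + \sum_{k,e} c_e^k q_k \hat{x}_e^k \leq \sum_a f_a \bar{y}_a + \sum_{k,e} c_e^k q_k \bar{x}_e^k$. Since this holds for any feasible solution (including the optimum), the optimal value of \ref{IP:G_S} is a lower bound on that of \ref{IP:G_T}.

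To define the map, let $\mu_G: E_T \to E_S$ send a timed arc $e = ((v^i,t),(w^j,t'))$ to $\mu_G(e) = ((v^i,\hat{t}),(w^j,\hat{t}'))$, where $\hat{t} = \max\{s : s \leq t,\, (v^i,s) \in V_S\}$ and $\hat{t}' = \max\{s : s \leq \hat{t} + \tau_{vw},\, (w^j,s) \in V_S\}$. Property $(P1')$ guarantees the existence of $(v^i,0)$ in $V_S$ for every copy so $\hat{t}$ is well-defined, and $(P2')$ guarantees that $((v^i,\hat{t}),(w^j,\hat{t}')) \in E_S$. Given $Q_k$, form $\hat{Q}_k$ by mapping each movement arc via $\mu_G$ and padding with holdover arcs in $F_S$ as needed. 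Property $(P1')$ also ensures $(o'_k, r_k)$ and $(d_k^0, l_k)$ are in $V_S$, so $\hat{Q}_k$ is an $(o_k',r_k)$-$(d_k^0,l_k)$ trajectory; the underlying flat path is unchanged and remains in $G^k$, so $\hat{Q}_k$ satisfies constraint (\ref{const1:flow_routing_G_S}). Because $\mu_G$ can only shorten the elapsed time on each movement arc and preserves the true transit time $\tau_e$, the trajectory duration is preserved or shortened and $\sum_{e \in E_S^k} \tau_e \hat{x}_e^k = \sum_{e \in E_T^k} \tau_e \bar{x}_e^k \leq l_k - r_k$, verifying constraint (\ref{const:feasible_G_S}).

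The delicate step is bookkeeping of consolidation for the fixed cost, and this is where the arc-based capacity constraint (\ref{const:G_Scapacity}) earns its structure. For each $a \in A_S$ set $\hat{y}_a$ to satisfy (\ref{const:G_Scapacity}) with equality at the smallest integer value. The key observation is: if two trajectories $Q_{k_1}$ and $Q_{k_2}$ in $G_T$ use timed arcs $e_1, e_2 \in \E_T(a)$ for some $a = ((v,t),(w,t+\tau_{vw})) \in A_T$ (that is, $e_1$ and $e_2$ share the underlying flat arc $vw$ and the common departure time $t$, possibly via different head-copies $w^{j_1}, w^{j_2}$), then both $\mu_G(e_1)$ and $\mu_G(e_2)$ have tail $(v^i, \hat{t})$ where $i$ is determined by $vw$, and so both belong to $\E_S(a')$ for the single arc $a' = ((v,\hat{t}),(w,\hat{t}^*)) \in A_S$. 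Hence any unit of capacity that $\bar{y}_a$ pays for in \ref{IP:G_T} is mirrored by a unit of $\hat{y}_{a'}$ in \ref{IP:G_S}, and the total fixed cost does not increase. Variable cost is immediate since $c^k_e$ equals the cost of the underlying arc in $D$, which is preserved under $\mu_G$.

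The main obstacle is precisely this consolidation argument: in the arc-based setting two trajectories that cohabitate on a single timed arc in $G_T$ may end up on \emph{different} arc copies in $G_S$ because their next-hops fall into distinct parts of $\A_v$ for the downstream node, and one must confirm that the definition of $\E_S(\cdot)$ (which abstracts over the head copy but fixes the tail copy and departure time) absorbs this correctly. The tail copy is forced to coincide because $\A$ is a valid partition --- both trajectories can only traverse $vw$ through the unique $v^i$ with $vw \in A_i$ --- so the argument goes through, and this is essentially the reason for defining $\E_S(a)$ the way we did in Section~\ref{sec:LB_G_S}.
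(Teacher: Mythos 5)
Your proposal is correct and follows essentially the same route as the paper's proof: the rounding map $\mu$ on $E_T$, feasibility of the mapped trajectories via $(P1')$ and $(P2')$, preservation of the true transit times for constraint (\ref{const:feasible_G_S}), and the key observation that two arcs of $\E_T(a)$ share the \emph{tail} copy $v^i$ (forced by validity of $\A$) and hence land in a common $\E_S(a')$, which is exactly how the paper salvages the consolidation bookkeeping. The only step you assert rather than verify is that the mapped movement arcs can be joined by forward holdover arcs (i.e.\ that the mapped arrival time at each intermediate node is at most the mapped departure time); the paper proves this explicitly as Claim~1, and you should include that short calculation.
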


\begin{proof}
Let $(\bar{x}, \bar{y})$ be a feasible solution to (\ref{IP:G_T}) with cost $C$, and let $\mathcal{Q} = \{Q_k\}_{k \in \K}$ denote the corresponding set of trajectories. Let $\mu: E_T \rightarrow E_S$ be the map defined so that for each timed arc $e \in E_T$, 
\begin{equation}\label{mu_eq}
    e = ((v,t), (w,t')) \quad \rightarrow  \quad \mu(e) = ((v,\hat{t}), (w, \hat{t}')),
\end{equation}  where $\hat{t} = \max\{s: s \leq t, (v, s) \in V_S\}$, and $\hat{t}' = \max\{s: s \leq \hat{t} + \tau_{vw}, (v, s) \in V_S\}$. Observe that $\hat{t}'$ is dependent on $\hat{t}$ rather than $t'$. Furthermore, $\mu$ is well-defined since $(P1')$ dictates that there is a timed node for each non-terminal node in $V$ with time 0, and $(P2')$ ensures that there is a (unique) copy of each arc in $E$ departing the selected timed node that underestimates the correct transit time. 

We obtain trajectories $\hat{\mathcal{Q}} = \{\hat{Q}_k\}_{k \in \K}$ by mapping each movement arc $e \in E_T$ to $\mu(e) \in E_S$, and forming trajectories by adding holdover arcs. We now describe this process more precisely, and prove that the resulting trajectories are well-defined and feasible in $G_S$. Let $Q_k$ be the trajectory induced by $\bar{x}$ for commodity $k$. Then the set of ordered movement arcs in $Q_k$ is $\{e_1, e_2, \ldots, e_{q_k}\}$, where $e_i = ((v_i, t_i^{out}), (v_{i+1}, t_{i+1}^{in}))$ for each $i \in [q_k - 1]$, and since $Q_k$ is feasible, the following statements are true:
\begin{enumerate}[noitemsep]
\item[(S1)] $v_1 = o'_k$, and $t_1^{out} \geq r_k$
\item[(S2)] $v_{q_k} = d^0_k$, and $t_{q_k}^{in} \leq l_k$
\item[(S3)] $t_{i+1}^{in} = t_i^{out} + \tau_{v_i v_{i+1}}$ for all $i \in [q_k - 1]$
\item[(S4)] $t_i^{in} \leq t_i^{out}$ for all $i \in \{2, 3, \ldots, q_k - 1\}$
\end{enumerate}
We apply the map $\mu$ to each movement arc and obtain the ordered set of movement arcs $\{\mu(e_1), \mu(e_2), \ldots, \mu(e_{q_k})\}$ in $E_S$, where $\mu(e_i) = ((v_i, \hat{t}_i^{out}), (v_{i+1}, \hat{t}_{i+1}^{in}))$ for each $i \in [q_k - 1]$. 

\textbf{Claim 1}: $\hat{t}_i^{in} \leq \hat{t}_i^{out}$ for all $i \in \{2, 3, \ldots, q_k - 1\}$. 

Fix $i \in \{2, 3, \ldots, q_k - 1\}$. By definition, $\hat{t}_{i}^{in} = \max\{t: t \leq \hat{t}_{i-1}^{out} + \tau_{v_{i-1} v_{i}}, (v_i,t) \in V_S\}$ where $\hat{t}_{i-1}^{out} = \max\{t: t \leq t_{i-1}^{out}, (v_{i-1},t) \in V_S\}$. Thus, 
\[ \hat{t}_{i}^{in} \leq  \max\{t: t \leq t_{i-1}^{out} + \tau_{v_{i-1} v_{i}}, (v_i, t) \in V_S\} \]

Similarly, $\hat{t}_{i}^{out} = \max\{t: t \leq t_{i}^{out}, (v_i, t) \in V_S\}$. Since $t_{i}^{in} = t_{i-1}^{out} + \tau_{v_{i-1} v_{i}}$, and $t_{i}^{in} \leq t_{i}^{out}$, it follows that 
\[ \hat{t}_{i}^{out} \geq \max\{t: t \leq t_{i-1}^{out} + \tau_{v_{i-1} v_{i}}, (v_i, t) \in V_S\},\] 
which gives the desired result that $\hat{t}_i^{in} \leq \hat{t}_i^{out}$.

We add holdover arcs to form trajectories for each $k \in \K$, and this process is well-defined by Claim 1. Observe that each $\hat{Q}_k$ is contained in $G^k_S$, and by property $(P1')$ along with (S1) and (S2), we see that $\hat{t}_1^{out} \geq r_k$ and $\hat{t}_{q_k} \leq l_k$. Therefore $\hat{Q}_k$ is feasible in $G_S$ for commodity $k$. 

It remains to show that the cost of $\hat{\Q}$ in $D_S$ is at most the cost of $\Q$. Since the  underlying paths in the flat network for $\hat{\mathcal{Q}}$ and $\mathcal{Q}$ are the same for each commodity, the variable cost of $\hat{\Q}$ and $\Q$ is the same. Additionally, all trajectories in $\hat{\Q}$ have flat paths with feasible total transit time (at most $\ell_k - r_k$ for each $k \in \K$). 

We now prove that the fixed cost of $\hat{\Q}$ is at most the fixed cost of $\Q$. Observe that and any pair of commodities $k, k'$ traversing the same timed arc $e \in E_T$ now traverse the same timed arc $\mu(e) \in E_S$. It remains to prove that if two distinct timed arcs $e$ and $e'$ in $E_T$ are in $\E_T(a)$ for some timed arc $a \in A_T$, then $\mu(e)$ and $\mu(e')$ are in $\E_S(a')$ for some $a' \in A_S$.

Suppose $e$ and $e'$ are in $\E_T(a)$ for some $a = ((v,t), (w, t')) \in A_T$, and $vw \in A_i \in \A_v$. Then $e = ((v^i, t), (w^{j}, t'))$ and $e' = ((v^i, t), (w^{p}, t'))$, where $w^j, w^p \in \V(w)$. It follows that $\mu(e) = ((v^i, \hat{t}), (w^{p}, \hat{t}'))$ and $\mu(e') = ((v^i, \hat{t}), (w^{p}, \hat{t}''))$. Therefore, $\mu(e)$ and $\mu(e')$ are in the set $\E_S(a')$, where $a' = ((v,\hat{t}), (w, \hat{t} + \tau_{vw}))$. Thus, the fixed cost of $\hat{\mathcal{Q}}$ is at most the fixed cost of $\Q$. 
\end{proof}

Observe that when $G_S = G_T$, \ref{IP:G_S} is equal to \ref{IP:G_T}. To fully define the DDD approach, it remains to outline the upper bound/termination procedure and the refinement procedure.


\subsection{Upper bound and refinement}\label{sec:arc-based_UB}

Let $(\hat{x}, \hat{y})$ be an optimal solution to \ref{IP:G_S} with corresponding trajectories $\Q$ in $G_S$. In each iteration, we provide a feasible solution to \ref{IP:D_T} as well as a set of timed nodes to add to $G_S$ if the feasible solution is not optimal. As in the case of the original node-based DDD approach, the generation of the feasible solution in each iteration is found by solving a continuous formulation whose input is defined by the solution $(\hat{x}, \hat{y})$ to the lower bound model. Both the arc-based DDD and node-based DDD approaches solve the same continuous formulation, \ref{LP:UB}, and the methods only diverge in the definition of the inputs to the formulation. We want to define the input $(\P, \J, \K^F, \T^F)$ so that when \ref{LP:UB} has an optimal solution $(\bar{t}, \bar{\delta})$ with value 0, then $(\bar{t}, \P)$ defines an optimal solution to \ref{IP:D_T}. 

The first difference from the node-based approach is that the physical paths for $\Q$ are in $G$ rather $D$. In order to apply the continuous formulation from the node-based approach, the paths in $G$ are projected down to the original flat network $D$ via the map $\phi$. The second difference is in the definition of the sets in $\J = \{J_{vw}\}_{{vw} \in A}$, which capture the savings in fixed costs due to consolidation. In the original node-based DDD approach, 
the set of commodities that contribute to the same (fixed charge) capacity constraint of timed arc $a \in A_S$ in \ref{IP:D_S} are all commodities that traverse $a$. In \ref{IP:G_S}, commodities contribute to the same capacity constraint for timed arc $a = ((v,t), (w,t')) \in D_S(G_S)$ if they depart along any \emph{copy} of arc $vw$ at the same time. The construction of the input $(\P, \J, \K^F, \T^F)$ in each iteration is presented in Algorithm \ref{alg:UB_input_G_S}.

\begin{algorithm}
\DontPrintSemicolon
\KwIn{flat network $D=(N, A)$, partial (auxiliary) network $G_S = (V_S, E_S \cup F_S)$, corresponding partial network $D_S(G_S) = (N_S, A_S \cup H_S)$, and optimal solution $(\hat{x}, \hat{y})$ to \ref{IP:G_S}}
Let $\Q = \{Q_k\}_{k \in \mathcal{K}}$ denote the set of trajectories in $G_S$ given by $\hat{x}$\\
$\P \leftarrow \{P_k\}_{k \in \K}$, where $P_k = \phi(P'_k) \in D$ and $P'_k$ is the underlying path of $Q_k$ in $G$ \\
\For{$vw \in A$}{
	$J_{vw} := \{(k_1, k_2): \exists a = ((v,t),(w, t')) \in A_S, \exists f_1, f_2 \in \E_S(a)$ such that $\hat{x}^{k_1}_{f_1} = \hat{x}^{k_2}_{f_2}\}.$}
	$\J \leftarrow \{J_{vw}\}_{vw \in A}$\\
	$\K^F:= \{k \in \K: \forall f = ((v,t), (w,t')) \in A_S$ such that $\hat{x}_f^k > 0$, $t' = t + \tau_{vw}\}$\\
\For{$k \in \K^F$}{ 
$\T_k = \{\bar{t}_{v_p}^k\}_{p \in [|P_k|-1]}$, where $ \bar{t}_{v_p}^k := \{t: \exists u \in \V(v_p), \hat{x}^k(\delta_{E_S}^+(u, t)) > 0\}$. ie, the time commodity $k$ leaves a copy of $v_p$ in $\hat{x}$.
} 
$\T^F \leftarrow \{\T_k\}_{k \in \K^F}$\\
\Return $(\P, \J, \K^F, \T^F)$  
	\caption{$\mathtt{UB\mbox{-}input}(D, G_S, D_S(G_S), (\hat{x}, \hat{y})$)}
	\label{alg:UB_input_G_S}
\end{algorithm}	
\FloatBarrier

As in the case of the node-based approach, there is an optimal solution, $(\bar{t}, \bar{\delta})$, to \ref{LP:UB} in each iteration and so a feasible solution $(\bar{x}, \bar{y})$ to \ref{IP:D_T} is obtained in each iteration from $\bar{t}$ and $\P$. When $\bar{\delta} = \mathbf{0}$, we see that $(\bar{x}, \bar{y})$ is an optimal solution for \ref{IP:D_T}, since it has the same variable and fixed cost as $(\hat{x}, \hat{y})$. When the optimal value to \ref{LP:UB} is not equal to 0, we need to refine the partial network $G_S$. The refinement process is again analogous to the node-based approach. When the optimal $(\bar{t}, \bar{\delta})$ solution has non-zero value, $\bar{\delta}$ defines a set of commodities, $\C$, with infeasible trajectories. This process is presented in Algorithm \ref{alg:UB_G_S}, and we formally state this result in Theorem \ref{theorem:G_T_UB}. 

\begin{algorithm}
\DontPrintSemicolon
\KwIn{flat network $D=(N, A)$, partial (auxiliary) network $G_S$, partial network $D_S(G_S)$, and optimal solution $(\hat{x}, \hat{y})$ to \ref{IP:G_S}}
$(\P, \J, \K^F, \T^F) \leftarrow \mathtt{UB\mbox{-}input}(D, G_S, D_S(G_S), (\hat{x}, \hat{y}))$\\
Let $(\bar{t}, \bar{\delta})$ denote an optimal solution to \ref{LP:UB}\\
Let $\bar{Q}$ be the trajectories in $G_T$ defined by $\P, \bar{t}$ and let $(\bar{x}, \bar{y})$ be the corresponding solution to \ref{IP:G_T}.\\
	$\C = \{k_1 \in \K \setminus \K^F: \exists vw \in A, k_2 \in \K, \bar{\delta}_{vw}^{k_1 k_2} > 0\}$\\
\Return $(\bar{x}, \bar{y}), \C$ 
\caption{$\mathtt{UB}(D, G_S, D_S(G_S), (\hat{x}, \hat{y})$)}
\label{alg:UB_G_S}
\end{algorithm}	
\FloatBarrier

\begin{theorem}\label{theorem:G_T_UB}
Given an optimal solution $(\hat{x}, \hat{y})$  to \ref{IP:G_S} with corresponding trajectories $\Q$ in $G_S$, Algorithm \ref{alg:UB_G_S} either finds an optimal solution to \ref{IP:D_T}, or provides a feasible solution to \ref{IP:D_T} along with a nonempty set of infeasible trajectories $\C \subseteq \Q$ in $G_S$.
\end{theorem}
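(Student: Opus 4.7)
My plan is to break the statement into three checks: feasibility of \ref{LP:UB} (hence existence of $(\bar t,\bar\delta)$), feasibility of the induced $(\bar x,\bar y)$ for \ref{IP:D_T}, and the dichotomy that either $\bar\delta=\mathbf{0}$ (yielding optimality) or $\C$ is non-empty and consists of commodities whose $\Q$-trajectories are infeasible in $D_T$.

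First I would verify that \ref{LP:UB} is feasible. The key ingredient is constraint \ref{const:feasible_G_S} in \ref{IP:G_S}, which forces the sum of true transit times along each flat path $P_k$ to be at most $l_k-r_k$; this guarantees joint satisfiability of the time-feasibility constraints \ref{const:cont1A}--\ref{const:cont3A}. The fixing constraints \ref{const:cont4A} on $\K^F$ are consistent because, by definition, $\K^F$ selects exactly those commodities whose trajectories in $G_S$ use only arcs with their real transit times, so the departure times in $\T^F$ already respect \ref{const:cont1A}--\ref{const:cont3A}. An optimal $(\bar t,\bar\delta)$ therefore exists. The induced trajectories $\bar Q$ in $G_T$ are feasible since the flat paths $P_k$ lie in $D^k$ by construction of $G^k$ and the bijection $\phi$ preserves physical routes; projecting through Fact \ref{fact:bijection_G} together with Theorem \ref{theorem:arc_based_equiv} gives a cost-equal feasible solution $(\bar x,\bar y)$ of \ref{IP:D_T}.

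Next I would handle the dichotomy on $\bar\delta$. When $\bar\delta=\mathbf{0}$, the variable cost of $(\bar x,\bar y)$ equals that of $(\hat x,\hat y)$ since the flat paths are unchanged, and I would argue that every pair of commodities consolidated by $\hat x$ on a common capacity constraint \ref{const:G_Scapacity} appears in $J_{vw}$ and hence shares a departure time on $vw$ in $\bar x$, so the fixed cost $\sum_a f_a\bar y_a$ does not exceed $\sum_a f_a\hat y_a$. The total cost of $(\bar x,\bar y)$ is then at most the optimum of \ref{IP:G_S}, which by Theorem \ref{theorem:G_S_LB} lower-bounds \ref{IP:G_T} and, via Theorem \ref{theorem:arc_based_equiv}, also \ref{IP:D_T}; optimality follows. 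When instead $\bar\delta\neq\mathbf{0}$, pick $(k_1,k_2)\in J_{vw}$ with $\bar\delta_{vw}^{k_1k_2}>0$: if both were in $\K^F$ the pair would have been routed by $\hat x$ through copies in a common $\E_S(a)$ with $a=((v,t),(w,t'))$, which via \ref{const:cont4A} forces $\bar t_v^{k_1}=t=\bar t_v^{k_2}$ and contradicts $\bar\delta_{vw}^{k_1k_2}>0$. By the symmetry of \ref{const:cont5A}--\ref{const:cont6A} in $k_1,k_2$, at least one of the two belongs to $\K\setminus\K^F$ and hence enters $\C$, so $\C\neq\emptyset$.

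The main obstacle I anticipate is the fixed-cost accounting in the $\bar\delta=\mathbf{0}$ case. Because \ref{IP:G_S} relaxes consolidation via $\E_S(a)$ across multiple $G$-copies of the same $D$-arc, I have to check that $J_{vw}$ actually records every cross-copy pair used by $\hat x$, and then confirm that once those pairs share a departure time in $\bar x$ they really do collapse into a common capacity constraint of \ref{IP:D_T} after projection. This is the step where the design of the auxiliary graph and the arc partition $\A$ from Section \ref{sec:auxiliary} is essential; the remaining book-keeping parallels the node-based argument for Theorem 2 in \cite{Boland1}.
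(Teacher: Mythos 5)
Your proposal is correct and follows essentially the same route the paper takes (the paper only sketches this argument in the prose of Section~\ref{sec:arc-based_UB}, mirroring the node-based case): feasibility of \ref{LP:UB} via constraint (\ref{const:feasible_G_S}) and the consistency of the fixing constraints on $\K^F$, cost equality when $\bar\delta=\mathbf{0}$ using the fact that $J_{vw}$ collects all cross-copy pairs sharing a set $\E_S(a)$, and non-emptiness of $\C$ because a positive $\bar\delta_{vw}^{k_1k_2}$ with both commodities in $\K^F$ would contradict constraint (\ref{const:cont4A}). The obstacle you flag about fixed-cost accounting is exactly the point the paper addresses by defining $J_{vw}$ over $\E_S(a)$ rather than over individual timed arcs of $G_S$, so no gap remains.
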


Similar to the node-based approach, in the refinement step, a timed node is added to $V_S$ to lengthen the earliest-departing short timed arc in each trajectory $Q_k$ where $k \in \C$. Since the timed nodes are added to $V_S$ rather than $N_S$, the increase in size of the subsequent lower bound formulation is typically less than the increase in size if the timed nodes were added to $N_S$.


\subsubsection{Pseudocode for the arc-based DDD approach}
The following algorithm states the overall arc-based DDD approach for solving SND-RR. 
\begin{algorithm}
	\DontPrintSemicolon
	\KwIn{Base network $D = (N,A)$, commodity set $\mathcal{K}$ with time horizon $T$}
	\textbf{Generate auxiliary network}: Let $G=(V, E)$ be the auxiliary network of $D$\\
\underline{\textbf{Initialization}}\\ 
$V_S \leftarrow V_0$, where $V_0$ is the minimal set of timed nodes satisfying $(P1')$\\
\While{not solved}{
\underline{\textbf{Lower bound}}\\ 
	Construct $G_S$ given updated $V_S$ using property $(P2')$.\\ 
	Solve \ref{IP:G_S} and obtain a solution $(\hat{x}, \hat{y})$ to \ref{IP:G_S}. \\
\underline{\textbf{Upper bound/termination}}\\
	Solve $\mathtt{UB}(D, G_S, D_S(G_S), (\hat{x}, \hat{y})$) for the set $\C$ and upper bound solution $(\bar{x}, \bar{y})$.\\
	\If{the cost of $(\bar{x}, \bar{y})$ is equal to the cost of $(\hat{x}, \hat{y})$ } {
            Stop. $(\bar{x}, \bar{y})$ is an optimal solution to SND($D_T$).}
\underline{\textbf{Refinement}}\\ For each $k \in \C$, lengthen the short timed arc $((v,t),(w, t')) \in Q_k$ with the earliest departure time $(t)$, by adding $(w, t+\tau_{vw})$ to $V_S$.\\
	}
	\caption{$\mathtt{SND\mbox{-}RR\mbox{-}arc\mbox{-}disc}$($D, \mathcal{K}$)}
	\label{alg:ddd_outline_aux}
\end{algorithm}
\FloatBarrier

\begin{theorem}
The arc-based DDD algorithm, Algorithm \ref{alg:ddd_outline_aux}, terminates with an optimal solution. 
\end{theorem}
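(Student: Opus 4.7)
The plan is to establish two things: (i) Algorithm~\ref{alg:ddd_outline_aux} terminates after finitely many iterations, and (ii) when it terminates, the returned solution $(\bar{x},\bar{y})$ is optimal for \ref{IP:D_T}. Both will chain together results already proved in this section together with the isomorphism in Theorem~\ref{theorem:arc_based_equiv}.

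For termination, the key observation is that $V_T$ is finite and every non-terminating iteration strictly grows $V_S$. If the termination check fails, the cost of $(\bar{x},\bar{y})$ exceeds the cost of $(\hat{x},\hat{y})$, which by Theorem~\ref{theorem:G_T_UB} forces $\C\neq\emptyset$, so refinement adds at least one new candidate node $(w,t+\tau_{vw})$. To see that this node is genuinely new, note that the short arc $((v,t),(w,t'))\in Q_k$ that triggered the addition satisfies $t'<t+\tau_{vw}$ by definition of ``short''; property~$(P2')$ would have given $t'\geq t+\tau_{vw}$ had $(w,t+\tau_{vw})$ already belonged to $V_S$. Thus $|V_S|$ strictly increases each iteration and is bounded by $|V_T|$, so the algorithm terminates in at most $|V_T|-|V_0|$ iterations. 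In the worst case $V_S=V_T$, at which point $G_S=G_T$ and \ref{IP:G_S} coincides with \ref{IP:G_T}, so the upper and lower bounds must agree.

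For correctness, I would first check that the initial $V_0$ is chosen so that $(P1')$ holds and that $(P2')$ is preserved throughout, since $G_S$ is explicitly rebuilt from $V_S$ using $(P2')$ at the start of each iteration. Under these invariants, Theorem~\ref{theorem:G_S_LB} ensures that $\mathrm{val}(\hat{x},\hat{y})\leq \mathrm{OPT}(\ref{IP:G_T})$ in every iteration. By Theorem~\ref{theorem:arc_based_equiv}, $\mathrm{OPT}(\ref{IP:G_T})=\mathrm{OPT}(\ref{IP:D_T})$, so $(\hat{x},\hat{y})$ always provides a lower bound on the true optimum. Meanwhile, Theorem~\ref{theorem:G_T_UB} guarantees that $\mathtt{UB}$ outputs a feasible solution $(\bar{x},\bar{y})$ to \ref{IP:D_T} in every iteration. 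When the algorithm terminates, the costs of $(\hat{x},\hat{y})$ and $(\bar{x},\bar{y})$ coincide, sandwiching the optimum and certifying that $(\bar{x},\bar{y})$ is optimal.

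The main subtlety is the ``new node'' argument in the termination bound: one must be careful that the short arc identified during refinement really does witness $t'<t+\tau_{vw}$, and that this in turn rules out $(w,t+\tau_{vw})\in V_S$ via $(P2')$. Everything else is essentially bookkeeping that leverages the three theorems already established, namely Theorems~\ref{theorem:arc_based_equiv}, \ref{theorem:G_S_LB}, and \ref{theorem:G_T_UB}. Once the monotone growth of $V_S$ is pinned down, the rest of the proof is a short chain of inequalities and a reference to the equivalence of the two full formulations.
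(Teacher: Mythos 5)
Your proposal is correct and follows essentially the same route as the paper's proof: optimality at termination follows from sandwiching the optimum between the lower bound of Theorem~\ref{theorem:G_S_LB} and the feasible upper bound of Theorem~\ref{theorem:G_T_UB}, while termination follows because each non-terminating iteration has $\C \neq \emptyset$ and hence adds at least one timed node to the finite set $V_S \subseteq V_T$. Your explicit argument that the added node $(w, t+\tau_{vw})$ is genuinely new --- via the definition of a short arc together with property~$(P2')$ --- is a detail the paper leaves implicit, but it is a correct and worthwhile elaboration rather than a departure.
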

\begin{proof}
The algorithm terminates when line 11 is executed. Since the cost of $(\hat{x}, \hat{y})$ gives a lower bound on the optimal value of \ref{IP:G_T} (Theorem \ref{theorem:G_S_LB}) and since $(\bar{x}, \bar{y})$ is feasible for \ref{IP:G_T} (Theorem \ref{theorem:G_T_UB}), if line 11 is executed then $(\bar{x}, \bar{y})$ is an optimal solution.

It remains to bound the number of iterations until line 11 is reached. If in an iteration line 11 is not reached, then the set $\C$ is nonempty (Theorem \ref{theorem:G_T_UB}) and consists of commodities with trajectories in $G_S$ with at least one short timed arc. As a result, in each iteration at least one timed (auxiliary) node is added to $V_S$. Thus, the algorithm terminates with an optimal solution in at most $|V|\cdot T$ iterations, where $|V| \leq 2 |A|$. 
\end{proof}

\section{Applications}\label{sec:applications}

\subsection{SND with designated paths}
First, we consider the problem of SND-RR where the designated flat network $D^k$ for commodity $k \in \K$ is a single path $P_k$ in $D$. This structure allows each arc to have an independent set of departure times in each iteration, as stated in the following Theorem. 

\begin{theorem}
For all instances $\I$ of SND-RR where $D^k$ is a single path for each commodity $k \in \K$, the arc partition $\A$ where each set $A \in \A$ is a single arc is valid for $\I$. 
\end{theorem}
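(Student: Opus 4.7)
The plan is to verify the two defining conditions of a valid partition directly against the singleton partition $\A = \{\{a\} : a \in A\}$, i.e. $\A_v = \{\{a\} : a \in \delta^+_D(v)\}$ for each $v \in N$.

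First I would observe that condition (1) is immediate: for each $v \in N$, the collection $\{\{a\} : a \in \delta^+_D(v)\}$ partitions $\delta^+_D(v)$ into singletons. The work is entirely in condition (2). For this, I would fix an arbitrary commodity $k \in \K$ and an arbitrary node $v \in N$, and argue that $\delta^+_{D^k}(v)$ contains \emph{at most one} arc. Since $D^k$ consists of the arcs on a single $o_k$--$d_k$ path $P_k$, there are three cases for $v$: either $v$ does not lie on $P_k$, in which case $\delta^+_{D^k}(v) = \emptyset$; or $v = d_k$, in which case $\delta^+_{D^k}(v) = \emptyset$ as well; or $v$ is an internal node of $P_k$ or $v = o_k$, in which case $P_k$ leaves $v$ along exactly one arc and hence $|\delta^+_{D^k}(v)| = 1$.

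In each of these cases, $\delta^+_{D^k}(v)$ is contained in some singleton element of $\A_v$ (choosing any singleton when the set is empty, or the singleton containing the unique outgoing arc of $P_k$ otherwise). Since $k$ and $v$ were arbitrary, condition (2) holds, and hence $\A$ is valid for $\I$.

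I do not foresee a serious obstacle here; the statement is essentially a direct consequence of the fact that a path has out-degree at most one at every vertex. The only subtlety worth mentioning is that in the empty case one must confirm that the inclusion $\emptyset \subseteq A_i$ trivially holds for any part $A_i$, so condition (2) is still satisfied; this is benign but worth making explicit in the writeup.
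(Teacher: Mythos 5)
Your proof is correct and follows essentially the same route as the paper's: both arguments reduce to the observation that a designated path leaves each node along at most one arc, so $|\delta^+_{D^k}(v)| \leq 1$ and the containment in a singleton part is immediate (the paper states this in one line, noting that consequently \emph{any} partition of $A$ is valid). Your explicit case analysis and the remark about $\emptyset \subseteq A_i$ are just a more careful writeup of the same idea.
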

\begin{proof}
Since each commodity has a designated path, for each node $v \in N$ and each $k \in \K$, $|\delta_{D^k}^+(v) \cap \delta_D^+(v)| \leq 1$. Thus, \emph{any} partition of the arc set $A$ is valid. 
\end{proof}

The partition of the arc set into singletons is the ideal scenario, since it allows each arc to have its own set of departure times in each iteration of arc-based DDD. The advantage of this fact can be seen already in the initial partial network. Consider the graph structure in Figure \ref{fig:star}, considered previously in Section \ref{sec:auxiliary}. For each $i \in [m]$, suppose there is a commodity $k$ with origin $v$, destination $v_i$, and release time $i$. We previously showed that the initial partial network for the node-based DDD approach would have a copy of node $v$ at all times in $[m]$. As a result, $D_0$ would have a copy of each arc for each departure time in $[m]$. Thus, we would have $A_0 \approx A_T$, and so the size of \ref{IP:D_S} is approximately the same as \ref{IP:D_T}. However, in the arc-based approach, $G_0$ only requires a single copy of arc $v v_i$ departing $v$ at time $i$ for each $i \in [m]$. Thus, the resulting formulation \ref{IP:G_S} is a factor $m$ smaller than \ref{IP:D_S}.

\begin{figure}[!htb]
\centering
\vspace{0cm}
\includegraphics[width=.22\textwidth]{images/star.pdf}
\caption{}
\label{fig:star}
\end{figure}
\FloatBarrier

In Section \ref{sec:computational_results}, we compare the performance of the arc-based and node-based DDD algorithms on a  class SND-RR instances where each commodity has a designated path.

\subsection{Hub-and-spoke networks}
We now consider hub-and-spoke, a popular region-based construction used in fulfillment and airline networks including FedEx and UPS \cite{bowen}. In a hub-and-spoke network, locations are divided into regions where each region is represented by a hub. Packages that have origins and destinations in two separate regions must travel between regions via the hubs. In Figure \ref{fig:Hub-and-spoke}, hubs are indicated as gray squares and only arcs for inter-region shipments are shown. 
\begin{figure}[h!]
\begin{center}
\includegraphics[scale=.4]{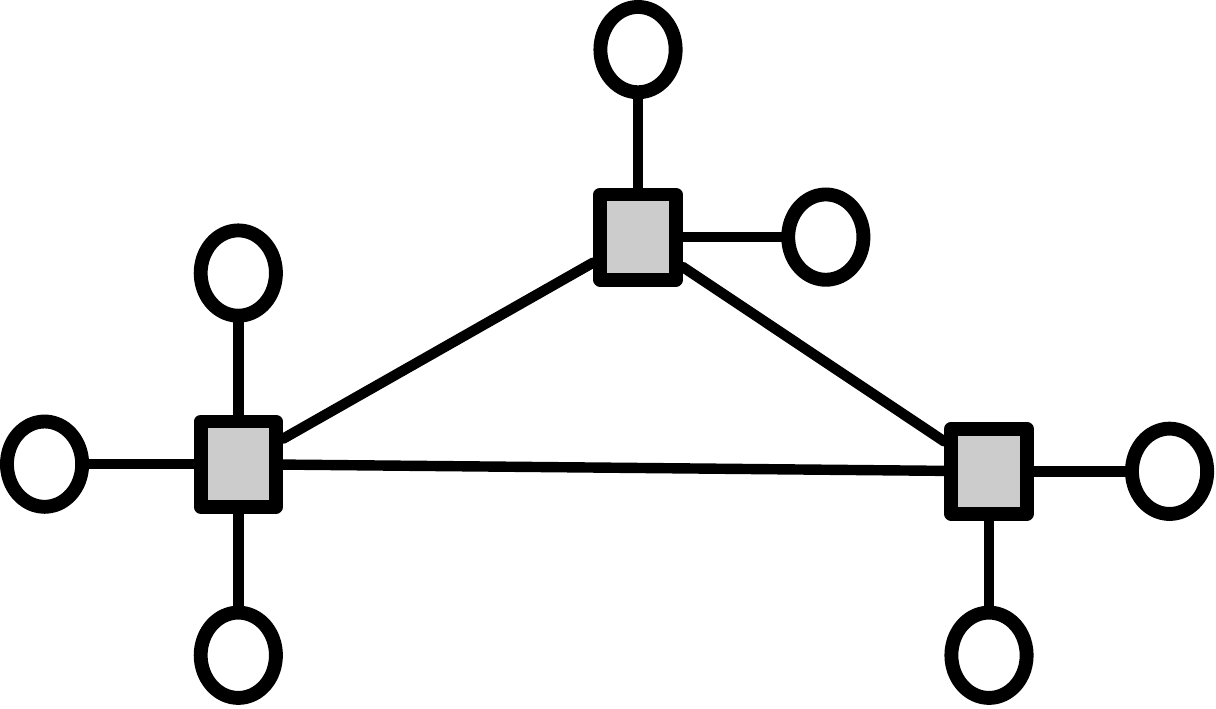}
\caption{Hub-and-spoke network}
\label{fig:Hub-and-spoke}
\end{center}
\end{figure}
\FloatBarrier

In fulfillment networks, it is often the case that packages have cut-off times in order to reach their destination warehouse for the next day. As a result, many intra-region arcs leaving a hub will only require a subset of the departure times. Namely, those departure times closely preceding the cut-off time. This points to the utility of allowing different departure times for arcs leaving hub nodes. To demonstrate that the arc-based DDD approach has advantages over the traditional node-based DDD approach, we generate a valid partition $\A$ of the arc set. First, we require additional notation. 

In a hub-and-spoke network, the node set $N$ is partitioned into a set of regions, $\R = \{R_1, R_2, \ldots, R_{\ell}\}$, with corresponding hubs $\H = \{h_1, h_2, \ldots, h_{\ell}\}$ where hub $h_i$ is in region $R_i$ for each $i \in [\ell]$. 
We refer to arcs as \emph{regional} if they connect two vertices within the same region, and \emph{national} if they connect two nodes in different regions (which must be hub nodes). Let $RA$ and $NA$ denote the set of regional and national arcs in $A$ respectively. Let $\R(v)$ denote the region of the node $v$, and let $\H(v)$ denote the hub node of the region containing $v$. 

For each commodity $k \in \K$, there is a natural restriction of the flat network when variable and fixed costs are nonnegative. Most notably, commodity $k$ will only potentially use regional arcs in $\R(o_k)$ and $\R(d_k)$. Furthermore, commodity $k$ will not use any national arc departing $\H(d_k)$ or entering $\H(o_k)$, or any regional arc departing $\H(o_k)$. Let $D^k \subseteq D$ consist of the regional arcs in $\R(o_k)$ and $\R(d_k)$, as well as all national arcs except those in $\delta_D^+(\H(d_k))$ and $\delta_D^-(\H(o_k))$. As discussed, restricting the route of commodity $k$ to $D^k$ does not increase the cost of an optimal solution. 

This definition of $D^k$ for each $k \in \K$ gives a natural partitioning of the arcs in $\delta_D^+(h)$ for each hub $h \in \H$.

\begin{theorem}
Let $\I = (D = (N,A), \K)$ be an instance of SND-RR where $D$ has a hub-and-spoke network structure with hubs $\H$, national arcs $NA$, and regional arcs $RA$. Let $\A = \{\A_v\}_{v \in N}$ be the arc partition where 
\begin{enumerate}[noitemsep]
\item $\A_h = \{A^h_{1}, A^h_{2}\}$ where $A^h_{1} = \delta_D^+(h) \cap RA$ and $A^h_{2} = \delta_D^+(h) \cap NA$ for all $h \in \H$
\item $\A_v = \{\delta^+_D(v)\}$ for all $v \in N \setminus \H$,
\end{enumerate} 
Then $\A$ is a valid partition of $A$ for $\I$. 
\end{theorem}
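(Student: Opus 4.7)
The plan is to verify the two defining conditions of a valid partition directly. First, I would check that $\A_v$ is indeed a partition of $\delta^+_D(v)$ for every $v \in N$. For any non-hub $v$, $\A_v=\{\delta^+_D(v)\}$ is a trivial singleton partition. For a hub $h \in \H$, I use the fact that in a hub-and-spoke network every arc in $A$ is by definition either regional (both endpoints in the same region) or national (endpoints in different regions), and no arc is both. Hence the sets $A^h_1 = \delta^+_D(h)\cap RA$ and $A^h_2 = \delta^+_D(h)\cap NA$ are disjoint and their union equals $\delta^+_D(h)$, giving the required partition. (Note also that national arcs only connect hubs, so the construction $\A_v=\{\delta^+_D(v)\}$ at non-hub nodes is consistent, since those outgoing arc sets are entirely regional.)

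Next I would fix an arbitrary commodity $k \in \K$ and an arbitrary node $v \in N$ and show that $\delta^+_{D^k}(v)$ is contained in some single part of $\A_v$. When $v$ is not a hub, this is immediate from $\A_v$ being a singleton. The substantive work is a case analysis over the three possible relationships between a hub $h$ and the designated hubs $\H(o_k),\H(d_k)$:
\begin{itemize}
\item If $h=\H(d_k)$, then by construction $D^k$ excludes every national arc in $\delta^+_D(\H(d_k))$, forcing $\delta^+_{D^k}(h) \subseteq A^h_1$.
\item If $h\ne\H(o_k),\H(d_k)$, then $\R(h)$ is neither $\R(o_k)$ nor $\R(d_k)$, so $D^k$ contains no regional arcs of $\R(h)$; thus $\delta^+_{D^k}(h) \subseteq A^h_2$.
\item If $h=\H(o_k)$ (and $h\ne \H(d_k)$), I appeal to the preceding discussion that no regional arc departing $\H(o_k)$ can be used by $k$, so these arcs are excluded from $D^k$; the remaining arcs in $\delta^+_{D^k}(h)$ are national, giving $\delta^+_{D^k}(h) \subseteq A^h_2$.
\end{itemize}
The degenerate sub-case $h=\H(o_k)=\H(d_k)$ (intra-regional commodity) is handled as in the first bullet and gives containment in $A^h_1$ (vacuously, as $\delta^+_{D^k}(h)=\emptyset$ after the exclusions).

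The main subtlety, and the step I expect to be the principal obstacle, is the $h=\H(o_k)$ case: the formal definition of $D^k$ given in the paragraph above the theorem only explicitly removes certain national arcs, but the validity of the partition genuinely requires also removing regional arcs departing $\H(o_k)$. I would therefore make a brief remark reading this omission into the formal definition of $D^k$, justified by the paper's own observation that such arcs are never used in any feasible (cost-minimising) trajectory, so their removal does not affect the problem. Once that is settled, putting the two partition-axiom verifications together immediately yields that $\A$ is a valid partition of $A$ for $\I$, completing the proof.
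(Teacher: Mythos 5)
Your proof is correct and follows essentially the same route as the paper's: a case analysis on the relationship between a hub $h$ and $\H(o_k),\H(d_k)$, concluding $\delta^+_{D^k}(h)\subseteq A^h_1$ when $h=\H(d_k)$ (including the intra-regional case) and $\delta^+_{D^k}(h)\subseteq A^h_2$ otherwise. The subtlety you flag about regional arcs departing $\H(o_k)$ is handled in the paper by the blanket assumption that $D^k$ is minimal, i.e.\ that all arcs unusable by $k$ in an optimal solution have been removed, which is exactly the reading you propose.
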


\begin{proof}
Let $k \in \K$, and consider a hub $h \in \H$. We may assume that $D^k$ is minimal in the sense that all arcs that cannot be used by commodity $k$ in an optimal solution have been removed. First suppose $\H(o_k) = \H(d_k) = h$. In this case, commodity $k$ will not use any national arcs departing $h$, and so $\delta_{D^k}^+(h) \subseteq A^h_1$ and $\delta_{D^k}^+(h') = \emptyset$ for all $h' \in \H \setminus \{h\}$. 

Suppose instead that $\H(o_k) \neq \H(d_k)$. For any hub $h \in \H \setminus \H(d_k)$, $\delta_{D^k}^+(h) \subseteq A^h_2$ since commodity $k$ will not use any regional arc departing $h$. If $h = \H(d_k)$, then similarly commodity $k$ will not use any national arc departing $h$, so $\delta_{D^k}^+(h) \subseteq A^h_1$. 
\end{proof}

We demonstrate the impact of this arc-discretization approach in Section \ref{sec:computational_results}.

\subsection{General SND instances}
In general, the best possible arc partition may be simply the partition $\A = \{\A_v\}_{v \in V}$, where $\A_v = \delta_D^+(v)$. However, the modified DDD approach still has potential to offer speed-up over the original implementation, since each node will still have two copies in the auxiliary network -- one representing the departing arcs, and the other representing the destination node. For settings where nodes serve as destinations for some commodities, and intermediate nodes for others, this still allows additional freedom in the selection of departure times, reducing the size of the partial networks in the DDD algorithm. 

We also note that given sets $D^k \subseteq D$ for each commodity $k \in \K$, the finest valid partition of the arc set can be computed efficiently. The following algorithm gives the pseudocode for this process. 

\begin{algorithm}
	\DontPrintSemicolon
	\KwIn{Base network $D = (N,A)$, and subgraph $D^k$ for each commodity $k \in \K$}
	$\A_v \leftarrow \{\{a\}: a \in \delta_D^+(v)\}$ for all $v \in N$\\
	\For{$k \in \K$}{
		\For{$v \in N$}{
			$\A_v' \leftarrow \{A_{i} \in \A_v: \delta_{D^k}^+(v) \cap A_{i} \neq \emptyset\}$\\
			Merge all sets in $\A'_v$ and update $\A_v$}
		}
	$\A \leftarrow \{\A_v\}_{v \in N}$ \\
	\caption{$\mathtt{arc\mbox{-}partition}(D, \{D^k\}_{k \in \K}$)}
	\label{alg:ddd_outline}
\end{algorithm}	
\FloatBarrier

Standard preprocessing approaches can be used to generate the subgraph $D^k$ for each $k \in \K$. For example, all arcs in $\delta_D^+(d_k)$ can be removed from $D$ for commodity $k$. Similarly, every arc $vw$ can be removed from $D$ for commodity $k$ if $D$ contains no $w, d_k$-dipath. In the same vein, arcs $vw$ can be removed from $D$ for commodity $k$ if $D$ contains no $o_k, v$-dipath.

\section{Computational Results}\label{sec:computational_results}

In this section we compare the performance of the novel arc-based DDD approach to the original node-based DDD approach. The two algorithms are applied to a variety of SND-RR instances in order to gain a better understanding of the factors that impact the comparative performance. The instances considered fall into three classes: SND-RR where each commodity has a designated flat path (Section \ref{sec:SND_paths}), SND where the flat network is a hub-and-spoke network (Section \ref{sec:SND_regional}), and SND on random flat networks where release times and deadlines are restricted to a set of critical times (Section \ref{sec:SND_CPT}). The latter two settings are instances of SND-RR where there are no restrictions on the physical route taken by a commodity.  

In Section \ref{sec:SND} we provide an overview of the construction of the SND instances used in \cite{Boland1}, which we use for our instances of SND with designated paths, and instances of SND with critical times. The instances on hub-and-spoke networks require a different construction of the flat network which we present in Section \ref{sec:SND_regional}. Each algorithm was coded in Python 3.6.9 with Gurobi 8.1.1 \cite{gurobi}  as the optimization solver. The running time limit was set to 10,800 seconds (three hours) using the deterministic option of the solver and the instances were solved on three cores to within 1\% of optimality. The instances were run in a 64 cores 2.6GHz Xeon Gold 6142 Processor with 256GB RAM, running a Linux operating system. The generated instances as well as the implementation of the arc-based and node-based DDD approaches can be found at \href{https://github.com/madisonvandyk/SND-RR}{\textsf{https://github.com/madisonvandyk/SND-RR}}.


\subsection{SND baseline instances}\label{sec:SND}
We generate the instances as constructed by Boland et al.~\cite{Boland1}. These temporal instances were modified from the flat instances of Crainic et al.~\cite{Crainic2, Crainic1} that are frequently used as benchmarking instances for static SND. In this construction, arcs ($A$) and origin-destination pairs forming commodities ($\K$) are chosen randomly. The capacities ($u$), demands ($q$), and fixed and variable costs ($f$ and $c$) are then chosen independently at random from uniform distributions. Afterwards, these values are scaled so that the specified \emph{cost ratio}, $F$, and \emph{capacity ratio}, $C$, are achieved where $Q$ is the total demand and 
\[ F := |\K| \sum_{ij \in A} \frac{f_{ij}}{Q \sum_{k \in \K} \sum_{ij \in A} c_{ij}^k} 
\quad \mbox{and} \quad 
C := \frac{|A| Q}{ \sum_{ij \in A} u_{ij}}. \]
The values of these ratios considered in \cite{Crainic2, Crainic1} are $F \in \{0.01, 0.05, 0.1\}$ and $C \in \{1, 2, 8\}$. The parameters chosen to generate flat instances in our experiments are presented in Sections \ref{sec:SND_paths}, \ref{sec:SND_regional}, and \ref{sec:SND_CPT}, and differ due to the varying difficulty of each family of instances. For example, when a physical path is designated for each commodity, this significantly decreases the number of variables, and so the number of nodes, arcs, and commodities are increased in order to obtain instances that are sufficiently large. For the precise construction of the flat instances see \cite{Crainic2}.

Boland et al.~\cite{Boland1} create timed instances by assigning transit times $\tau$ to arcs, and generating release times and deadlines for the commodities. The arc transit times are simply a scaling of the fixed costs. Given the transit times $\tau$, they compute the average transit time of the shortest commodity path, $\L$. The release times are then chosen according to a normal distribution with mean $\L$ and standard deviation $\sigma_r$. The deadline is set to be $r_k + \L + p_k$, where $p_k$ is the level of flexibility and is chosen from another normal distribution with mean $\mu_p$ and standard deviation $\frac16 \mu_p$. While various discretization levels, $\Delta$, are studied in \cite{Boland1}, we use only $\Delta = 1$ since our computational study compares runtime and iteration sizes rather than the degradation of the optimal value in coarser discretizations. Again, the parameters considered vary for each instance family, and so they are presented in the beginning of each subsection. 

\subsection{SND with designated paths}\label{sec:SND_paths}
For each set of parameters in Table \ref{tab:flat_DP} we generate a flat instance. Then, for each of the 32 flat instances, we create 6 SND-RR instances according to the parameters in Table \ref{tab:timed_DP}, for a total of 192 instances. To create instances of SND-RR where each designated subgraph $D^k$ is a single path $P_k$, for each commodity $k \in \K$ we assign the path $P_k$ to be the a shortest $o_k, d_k$-dipath by transit time. We note that while variable costs could be removed from the model when paths are fixed, higher variable costs relative to fixed costs allow the algorithms to terminate earlier, and so this comparison is still valuable.

\begin{table}[h!]
\parbox{.45\linewidth}{
\centering
\resizebox{.4\textwidth}{!}{\begin{tabular}{c|c|c|c|c}
\toprule
$|N|$ &  $|A|$ & $|\K|$ & $F$ & $C$ \\
\hline 
20 & $230, 300$ & $150, 200$ & 0.05, 0.1 & 1, 8 \\
25 & $360, 480$ & $250, 300$ & 0.05, 0.1 & 1, 8 \\
\bottomrule
\end{tabular}}%
\caption{Flat instance parameters.}
\label{tab:flat_DP}
}
\hfill
\parbox{.5\linewidth}{
\centering
\resizebox{.4\textwidth}{!}{\begin{tabular}{c||c|c}
\toprule
Normal distribution &  $\mu$ & $\sigma$ \\
\hline 
For generating $r_k$ & $\L$ & $\frac13\L, \frac16\L, \frac19\L $\\
For generating $p_k$ & $\frac14 \L, \frac38\L$ & $\frac16 \mu$\\
\bottomrule
\end{tabular}}%
\caption{Timed instance parameters.}
\label{tab:timed_DP}
}
\end{table}
\FloatBarrier

\subsubsection*{Computational results}
In Table \ref{tab:deciles_DP} we present the deciles for the runtime of each algorithm, and for deciles where the algorithms did not terminate within the time limit, we instead compare the optimality gap found after terminating the algorithms after three hours. For each decile we see that the arc-based DDD approach offers a significant improvement over the traditional node-based DDD approach. The majority of the instances (182 out of 192) were solved to within $1\%$ of optimality within three hours by both the arc-based and node-based DDD algorithms. Among these instances, the arc-based DDD approach completed in 42.6\% of the time of the node-based approach, representing an improvement of 57.4\% in runtime.

\begin{table}[h!]
\centering
\resizebox{.8\textwidth}{!}{
    \begin{tabular}{c|ccc|ccc}
\toprule
 & \multicolumn{3}{c}{ave. total runtime (s)} & \multicolumn{3}{c}{ave. optimality gap} \\
\midrule
Decile &  arc-based &  node-based  &  $\%$ improvement &  arc-based &  node-based  &  $\%$ improvement \\ \hline
0.1 &  42  &  138   &  69.2\% &  $<1\%$   &  $<1\%$  &  N/A   \\
0.2 &  75 &  239   &  68.5\% &  $<1\%$ &  $<1\%$ &  N/A \\
0.3 &  128 &  437   &  70.6\% &  $<1\%$ &  $<1\%$ &  N/A \\
0.4 &  173 &  629   &  72.4\% &  $<1\%$ &  $<1\%$ &  N/A \\
0.5 &  250 &  833   &  70.0\% &  $<1\%$ &  $<1\%$ &  N/A \\
0.6 &  451 &  1,188   &  62.0\% &  $<1\%$ &  $<1\%$ &  N/A \\
0.7 &  727 &  1,832   &  60.3\% &  $<1\%$ &  $<1\%$ &  N/A \\
0.8 &  1,252 &  3,328   & 62.4\% & $<1\%$ &  $<1\%$ &  N/A \\
0.9 &  3,142 &  7,247   &  56.6\% &  $<1\%$ &  $<1\%$ &  N/A \\
1 &  10,800 &  10,800   &  N/A &  5.87\% &  8.49\% &  30.8\% \\
\bottomrule
\end{tabular}
}
\vspace{-.25cm}
\caption{Runtime and optimality gap comparison for designated path instances.}
\label{tab:deciles_DP}%
\end{table}%
\FloatBarrier
The fraction of the instances solved over time is presented in Figure \ref{fig:cumulative_DP}, and we again observe a consistent improvement of the arc-based approach over the node-based approach. In Table \ref{tab:iteration_DP} we present the iteration statistics of each algorithm considering only the 182 instances which were solved within the time limit for both algorithms. The table reports the average number of iterations until termination, and the average number of variables and constraints in the \emph{final iteration}. The ratio for each field is the arc-based value to the node-based value. While the arc-based approach terminates after on average 18\% additional iterations of DDD, the runtime decreases since the formulation solved in each iteration in the arc-based approach is smaller; on average there are only 55\% the number of variables and 60\% the number of constraints in the final iteration compared to the node-based DDD approach.

\begin{figure}[h!]
\centering
\begin{minipage}{0.5\textwidth}
\centering
\includegraphics[width=\textwidth]{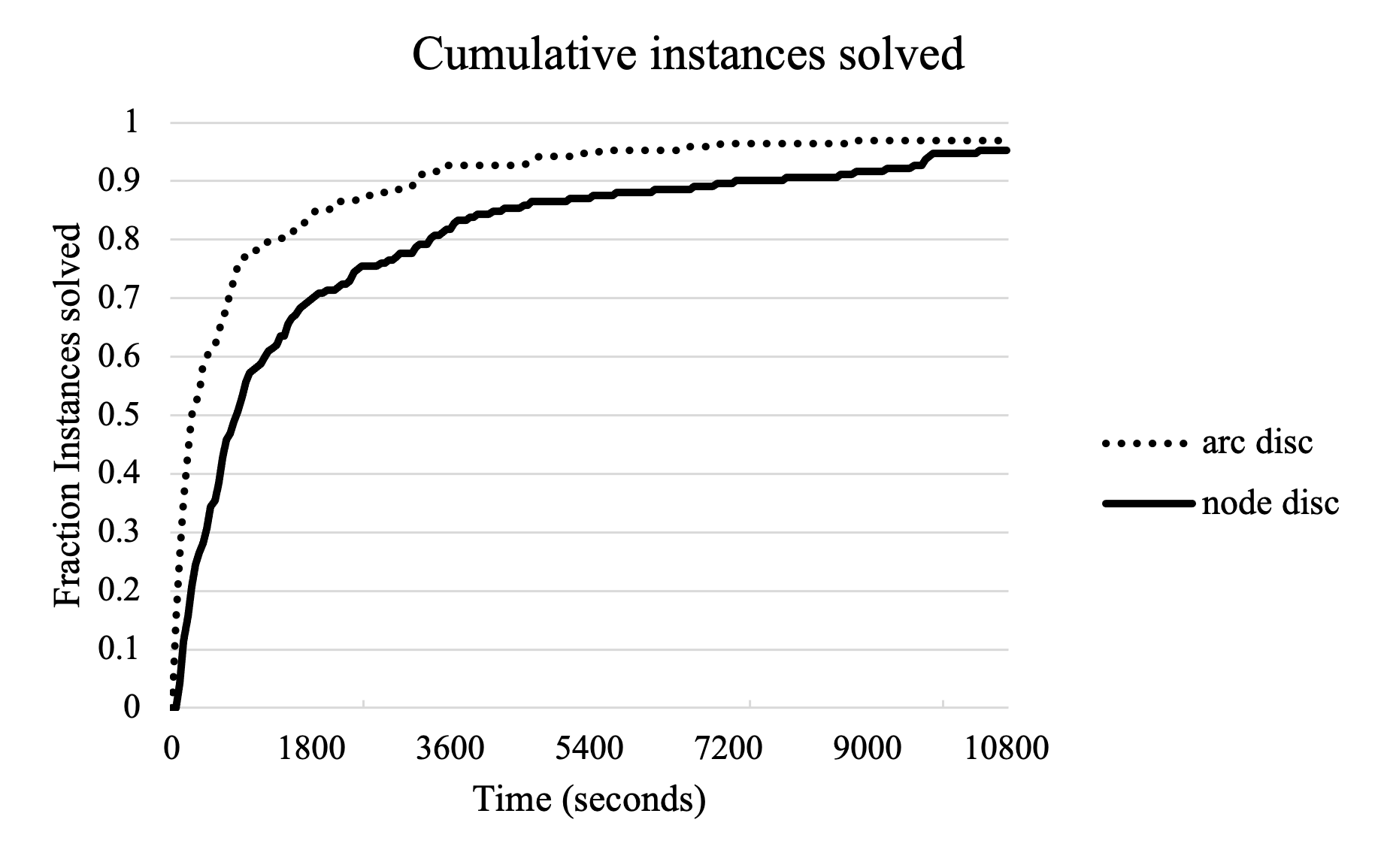}
\caption{Fraction of instances solved over time.}
\label{fig:cumulative_DP}
\end{minipage}
\begin{minipage}{0.49\textwidth}
\captionsetup{type=table}
\vspace{2cm}
\centering
\resizebox{.8\textwidth}{!}{
\begin{tabular}{l|ccc}
\toprule
 &  arc-based &  node-based  &  ratio \\ \hline
\# iterations &  24.1 &  20.4   &  1.18  \\ \hline
\# variables &  41,640 &  75,112   &  0.55 \\ \hline
\# constraints &  28,375 &  47,567  &  0.60 \\
\bottomrule
\end{tabular}}
\vspace{1.4cm}
\caption{Iteration comparison.}
\label{tab:iteration_DP}%
\end{minipage}
\end{figure}
\FloatBarrier

\subsection{SND with hub-and-spoke networks}\label{sec:SND_regional}
In hub-and-spoke networks, regions are typically formed by selecting a hub for nearby nodes \cite{bowen}. The flat networks constructed by Crainic et al. do not allow for consistent distance-based assignment of nodes to hubs to form regions since the arc distances are selected uniformly at random. For this reason, we instead form flat networks using a geometric approach. A similar method was used to study the relaxation of node storage constraints in DDD in \cite{VanDyk}. 

In this geometric approach, $n$ nodes are randomly chosen from an $l \times l$ grid ($l = 30$). Then from this set of nodes, $h$ nodes are randomly selected and designated as hubs. Let $\H$ denote the set of hubs. Regions are formed by assigning each node to the nearest hub by L1-norm distance. We add all arcs between hub nodes and the remaining $|A| - |\H|(|\H| - 1)$ regional arcs are selected at random. We assign the transit time of the arcs in the network to be equal to the L1-norm distance between the endpoints. Note that the number of possible arcs decreases as we move to a hub-and-spoke network since all arcs are either regional or core arcs. In Section \ref{sec:SND}, the values for $|A|$ were chosen so that instances had either approximately 60$\%$ or $80\%$ of the total possible arcs. In keeping with this density, the following arc counts are chosen so that the number of arcs chosen as either 60$\%$ or $80\%$ of the expected total possible arcs in the hub-and-spoke network. The fixed costs, variable costs, and commodities are chosen as in Section \ref{sec:SND} to form flat instances. For each of the 32 flat instances with parameters in Table \ref{tab:flat_regional}, we create 6 instances of SND-RR with the parameters in Table \ref{tab:timed_regional} for a total of 192 instances. 

\begin{table}[h]
\parbox{.45\linewidth}{
\centering
\resizebox{.4\textwidth}{!}{\begin{tabular}{c|c|c|c|c|c}
\toprule
$|N|$ & $|\H|$ & $|A|$ & $|\K|$ & $F$ & $C$ \\
\hline 
20 & 3 & $70, 95$ & $100$ & 0.05, 0.1 & 1, 8\\
20 & 4 & $55, 75$ & $100$ & 0.05, 0.1 & 1, 8\\
20 & 5 & $50, 65$ & $100$ & 0.05, 0.1 & 1, 8\\
20 & 6 & $45, 60$ & $100$ & 0.05, 0.1 & 1, 8\\
\bottomrule
\end{tabular}}%
\caption{Flat instance parameters.}
\label{tab:flat_regional}
}
\hfill
\parbox{.5\linewidth}{
\centering
\vspace{.5cm}
\resizebox{.4\textwidth}{!}{\begin{tabular}{c||c|c}
\toprule
Normal distribution &  $\mu$ & $\sigma$ \\
\hline 
For generating $r_k$ & $\L$ & $\frac13\L, \frac16\L, \frac19\L $\\
For generating $p_k$ & $\frac14 \L, \frac38\L$ & $\frac16 \mu$\\
\bottomrule
\end{tabular}}%
\vspace{.5cm}
\caption{Timed instance parameters.}
\label{tab:timed_regional}
}
\end{table}

\subsubsection*{Computational results}
For hub-and-spoke instances we also observe a significant improvement in runtime when using the arc-based DDD approach. In Table \ref{tab:deciles_regional} we present the deciles for the runtime and optimality gap of each algorithm after three hours. The majority of the instances (139 out of 192) were solved to within $1\%$ of optimality within three hours by both the arc-based and node-based DDD algorithms. Among these instances, the arc-based DDD approach completed in 58.2\% of the time of the node-based approach, representing an improvement of 41.8\% in runtime. 
\begin{table}[h!]
\centering
\resizebox{.8\textwidth}{!}{
    \begin{tabular}{c|ccc|ccc}
\toprule
 & \multicolumn{3}{c}{ave. total runtime (s)} & \multicolumn{3}{c}{ave. optimality gap} \\
\midrule
Decile &  arc-based &  node-based  &  $\%$ improvement &  arc-based &  node-based  &  $\%$ improvement \\ \hline
0.1 &  29  &  62   &  54.0\% &  $<1\%$   &  $<1\%$  &  N/A   \\
0.2 &  83 &  232   &  64.2\% &  $<1\%$ &  $<1\%$ &  N/A \\
0.3 &  154 &  404   &  61.8\% &  $<1\%$ &  $<1\%$ &  N/A \\
0.4 &  307 &  832   &  63.1\% &  $<1\%$ &  $<1\%$ &  N/A \\
0.5 &  463 &  1,514   &  69.4\% &  $<1\%$ &  $<1\%$ &  N/A \\
0.6 &  1,310 &  4,096   &  57.3\% &  $<1\%$ &  $<1\%$ &  N/A \\
0.7 &  4,219 &  7,249    &  41.8\% &  $<1\%$ &  $<1\%$ &  N/A \\
0.8 &  10,727  &  10,800   & N/A  & 1.03\% &  11.0\% &  90.6\% \\
0.9 &  10,800 &  10,800   &  N/A &  19.4\% &  42.1\% &  53.9\% \\
1 &  10,800 &  10,800   &  N/A &  50.4\% &  62.6\% &  19.4\% \\
\bottomrule
\end{tabular}
}
\vspace{-.25cm}
\caption{Runtime and optimality gap comparison for hub-and-spoke instances.}
\label{tab:deciles_regional}%
\end{table}%
\FloatBarrier

The fraction of the instances solved over time is presented in Figure \ref{fig:cumulative_DP}, and we again observe a consistent improvement of the arc-based approach over the node-based approach. In Table \ref{tab:iteration_DP} we present the iteration statistics of each algorithm considering only the 139 instances which were solved within the time limit for both algorithms. The arc-based approach terminates after on average 2\% more iterations, and on average there are only 66\% the number of variables and 69\% the number of constraints in the final iteration compared to the node-based DDD approach.

\begin{figure}[h!]
\centering
\begin{minipage}{0.5\textwidth}
\centering
\includegraphics[width=\textwidth]{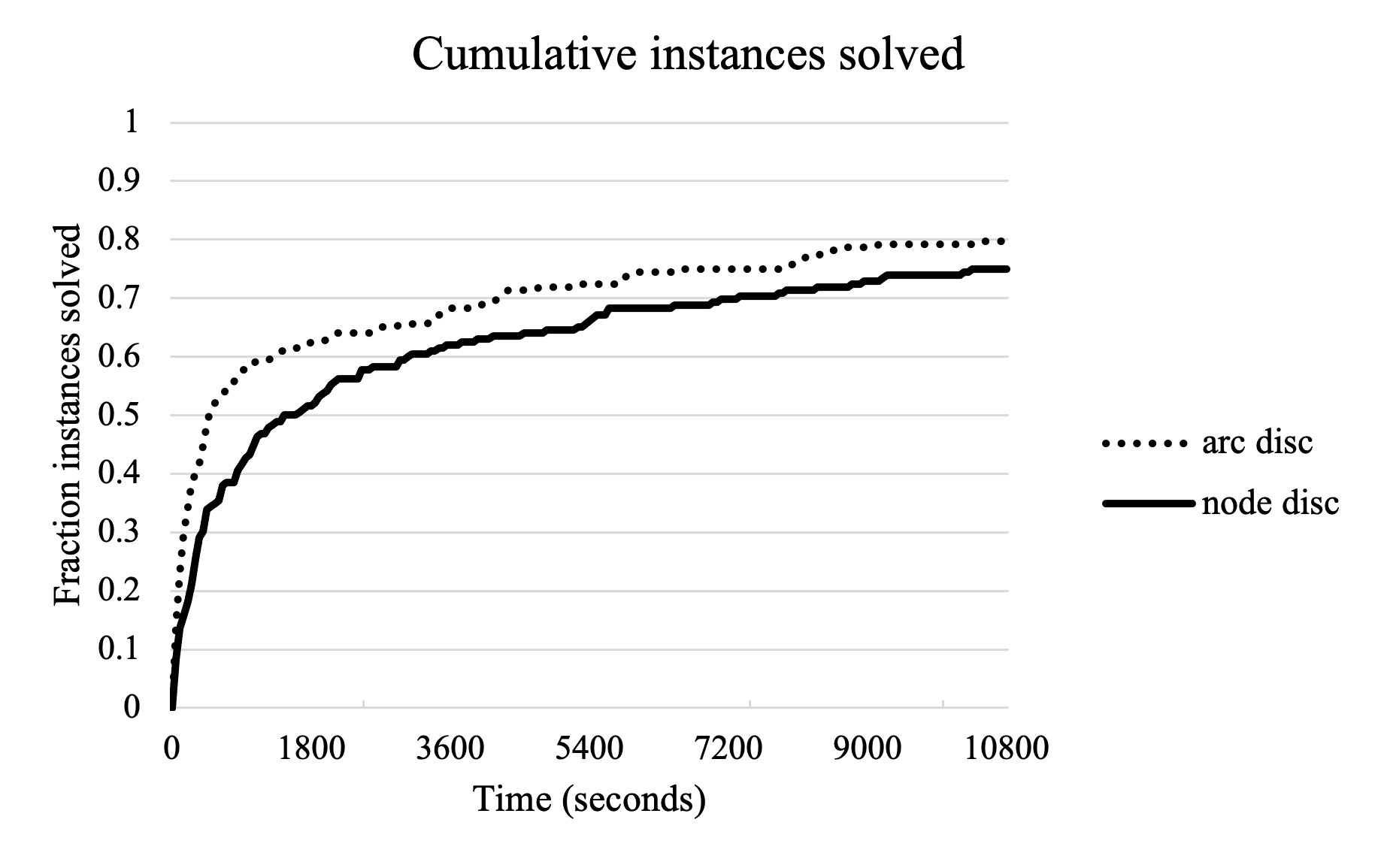}
\caption{Fraction of instances solved.}
\label{fig:figure-label1}
\end{minipage}
\begin{minipage}{0.49\textwidth}
\captionsetup{type=table} 
\vspace{2cm}
\centering
\resizebox{.8\textwidth}{!}{
\begin{tabular}{l|ccc}
\toprule
 &  arc-based &  node-based  &  ratio \\ \hline
\# iterations &  23.4 &  22.9   &  1.02  \\ \hline
\# variables &  17,102 &  26,088   &  0.66 \\ \hline
\# constraints &  8,835 &  12,715   &  0.69 \\
\bottomrule
\end{tabular}}
\vspace{1.4cm}
\caption{Iteration comparison.}
\label{tab:addlabel1}%
\end{minipage}
\end{figure}
\FloatBarrier

\subsection{SND with critical times}\label{sec:SND_CPT}
To create instances with critical times, we introduce a parameter, $\alpha$, which is the number of intervals the time horizon is split into evenly. Then, for each node, a critical time is chosen uniformly at random from each of the intervals. We then modify the release times and deadlines of each commodity so that release times are rounded down to the closest critical time at that node, and deadlines are rounded up to the nearest critical time at that node. This construction allows us to capture node-dependent critical times which add additional structure to the SND instances. In total, we create 16 flat instances according to Table \ref{tab:flat_CPT} and for each flat instance we create 12 SND instances according to the parameters in Table \ref{tab:timed_CPT} and $\alpha \in \{5, 10\}$. 

\begin{table}[h!]
\parbox{.45\linewidth}{
\centering
\vspace{.25cm}
\resizebox{.4\textwidth}{!}{\begin{tabular}{c|c|c|c|c}
\toprule
$|N|$ &  $|A|$ & $|\K|$ & $F$ & $C$ \\
\hline 
20 & $230, 300$ & $150, 200$ & 0.05, 0.1 & 1, 8 \\
\bottomrule
\end{tabular}}%
\vspace{.2cm}
\caption{Flat instance parameters.}
\label{tab:flat_CPT}
}
\hfill
\parbox{.5\linewidth}{
\centering
\resizebox{.4\textwidth}{!}{\begin{tabular}{c||c|c}
\toprule
Normal distribution &  $\mu$ & $\sigma$ \\
\hline 
For generating $r_k$ & $\L$ & $\frac13\L, \frac16\L, \frac19\L $\\
For generating $p_k$ & $\frac14 \L, \frac38\L$ & $\frac16 \mu$\\
\bottomrule
\end{tabular}}%
\caption{Timed instance parameters.}
\label{tab:timed_CPT}
}
\end{table}
\FloatBarrier

\subsubsection*{Computational results}
For instances with critical times we observe a moderate improvement in runtime when using the arc-based DDD approach. In Table \ref{tab:deciles_regional} we present the deciles for the runtime and optimality gap of each algorithm after three hours. 174 out of 192 instances were solved to within $1\%$ of optimality within three hours by both the arc-based and node-based DDD algorithms. Among these instances, the arc-based DDD approach completed in 87.6\% of the time of the node-based approach, representing an improvement of 12.4\% in runtime. 
While the gap in the final row is larger in the arc-based approach, this decile only captures the maximum value. Since all the previous deciles consistently show that the arc-based approach improves over the node-based approach, we do not find this entry a convincing argument against the arc-based approach. 
\begin{table}[h!]
\centering
\resizebox{.8\textwidth}{!}{
    \begin{tabular}{c|ccc|ccc}
\toprule
 & \multicolumn{3}{c}{ave. total runtime (s)} & \multicolumn{3}{c}{ave. optimality gap} \\
\midrule
Decile &  arc-based &  node-based  &  $\%$ improvement &  arc-based &  node-based  &  $\%$ improvement \\ \hline
0.1 &  72  &  91   &  20.6\% &  $<1\%$   &  $<1\%$  &  N/A   \\
0.2 &  117 &  153   &  23.6\% &  $<1\%$ &  $<1\%$ &  N/A \\
0.3 &  192 &  247   &  20.0\% &  $<1\%$ &  $<1\%$ &  N/A \\
0.4 &  312 &  375 &  16.7\% &  $<1\%$&  $<1\%$ &  N/A \\
0.5 &  420 &  485 &  13.4\% &  $<1\%$ &  $<1\%$ &  N/A \\
0.6 &  701 &  867 &  19.2\% &  $<1\%$ &  $<1\%$ &  N/A \\
0.7 &  1,131 &  1,256 &  10.0\% &  $<1\%$ &  $<1\%$ &  N/A \\
0.8 &  2,330 &  2,775 &  16.0\% &  $<1\%$ &  $<1\%$ &  N/A \\
0.9 &  7,328 &  7,469 &  1.9\% &  $<1\%$ &  $<1\%$ &  N/A \\
1 &  10,800 &  10,800   &  N/A &  44.6\% &  36.3\% &  -22.9\% \\
\bottomrule
\end{tabular}
}
\vspace{-.25cm}
\caption{Runtime and optimality gap comparison for instances with critical times.}
\label{tab:addlabel}%
\end{table}%
\FloatBarrier 

In Table \ref{tab:iteration_CPT} we present the iteration statistics of each algorithm considering only the 174 instances which were solved within the time limit for both algorithms. The arc-based approach terminates after on average 17\% more iterations, and on average there are only 80\% the number of variables and 81\% the number of constraints in the final iteration compared to the node-based DDD approach.

\begin{figure}[h!]
\centering
\begin{minipage}{0.5\textwidth}
\centering
\includegraphics[width=\textwidth]{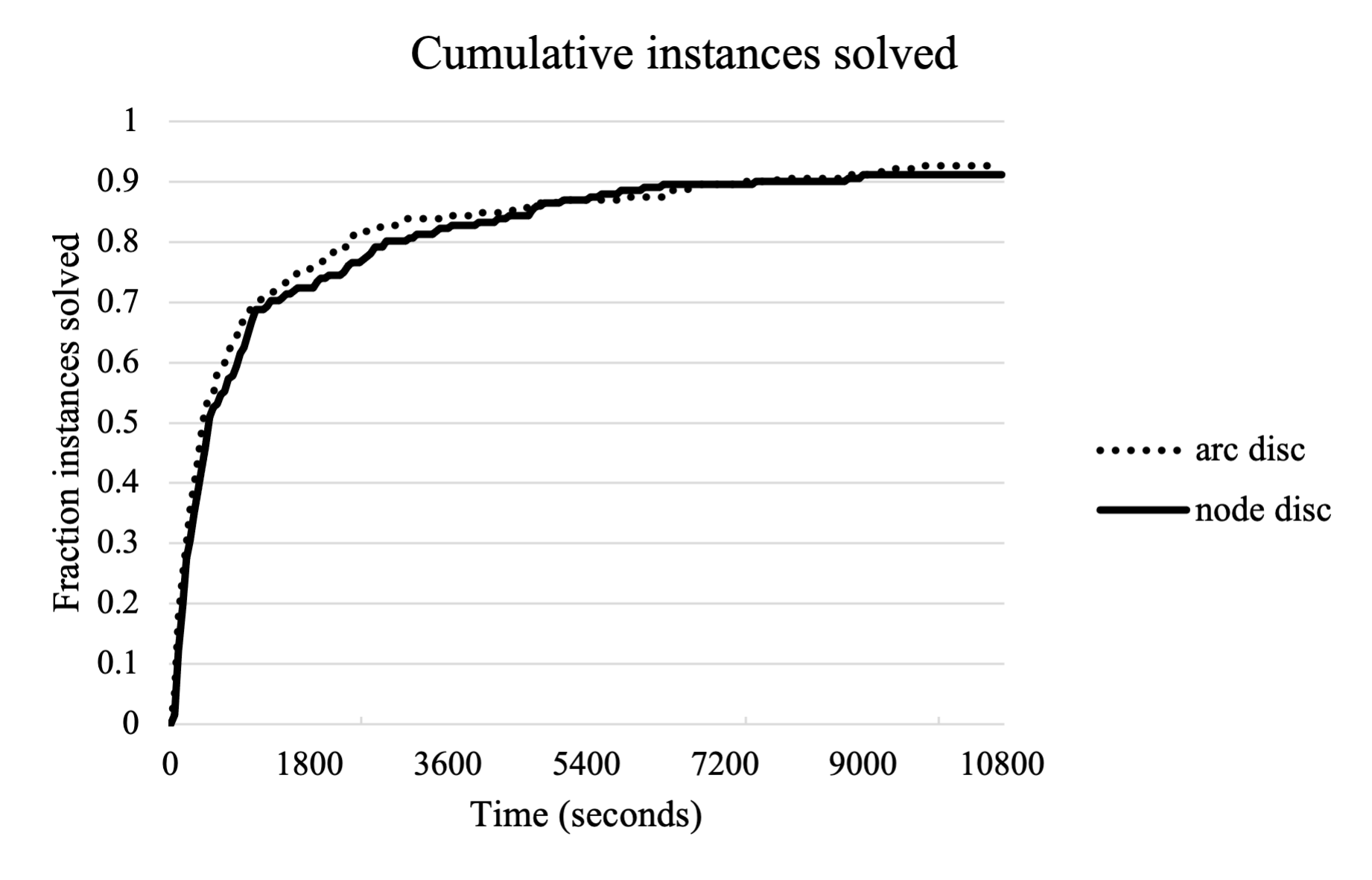}
\caption{Fraction of instances solved.}
\label{fig:figure-label}
\end{minipage}
\begin{minipage}{0.49\textwidth}
\captionsetup{type=table}
\vspace{2cm}
\centering
\resizebox{.8\textwidth}{!}{
\begin{tabular}{l|ccc}
\toprule
 &  arc-based &  node-based  &  ratio \\ \hline
\# iterations &  18.4 &  15.7  &  1.17  \\ \hline
\# variables &  34,794 &  43,697   &  0.80 \\ \hline
\# constraints &  17,696 &  21,934  &  0.81 \\
\bottomrule
\end{tabular}}
\vspace{1.4cm}
\caption{Iteration comparison.}
\label{tab:iteration_CPT}%
\end{minipage}
\end{figure}
\FloatBarrier

\section{Conclusion}\label{sec:conclusion}
Prior to our work, DDD was by design ineffective for solving problems where all near-optimal solutions require a large number of departure times at many nodes in the network. In this work, we present a novel algorithmic approach for generating smaller lower bound formulations in each iteration of DDD. This reduction in iteration size is accomplished by allowing an arc-dependent set of departure times, rather than a fixed set of departure times for each node. Moreover, the approach presented builds upon the original DDD paradigm so that the modifications required to build the arc-based DDD algorithm from a node-based DDD algorithm are minimal. The experimental results highlight the effectiveness of this DDD approach, as well as introduce important design considerations for future benchmarking instances such as regional networks and grouping release times and deadlines using critical times at nodes, which model real-world instances more closely.

The arc-based DDD approach presented in this paper offers the most improvement over the original node-based DDD approach when the arc partitioning is fine. However, for certain problems with unrestricted routes, such a fine partition of the arc set is unlikely. The discussion in the opening of Section \ref{sec:auxiliary} points to tradeoffs in allowing independent departure times for arcs in $\delta^+(v)$ in the flat network. Namely, without allowing independence of arc departure times through a valid arc partition, this splitting would result in additional timed copies of arcs in $\delta^-(v)$. When $\delta^-(v) = \emptyset$, there is no downside and the definition of a valid arc partition can be relaxed for these nodes. It is possible that when the in-degree at a node $v$ is smaller than the out-degree, allowing arc-dependent departure times is advantageous even when it results in commodities with variables for multiple copies of the same arc. Moreover, it is possible that an arc partition that changes over each iteration of DDD improves the performance of the algorithm. Characterizing when these design choices should be made is an interesting research direction. 

Finally, there are many problems for which all near-optimal solutions require a large number of departure times for the majority of the arcs. That is, solutions can only be expressed on large subgraphs of the full time-expanded network. For such problems, even the arc-based DDD approach presented here is unlikely to be find an optimal solution more quickly than solving the full time-indexed formulation. However, many temporal problems exhibit cyclic and repetitive behaviour. An interesting direction of research would be to incorporate repetitive flows into the DDD framework, without leading to an increased formulation size.

\bibliographystyle{abbrv}
\bibliography{references.bib}

\begin{thebibliography}{10}

\bibitem{Boland1}
N.~Boland, M.~Hewitt, L.~Marshall, and M.~Savelsbergh.
\newblock The continuous-time service network design problem.
\newblock {\em Operations Research}, 65(5):1303--1321, 2017.

\bibitem{DDD_survey}
N.~Boland and M.~Savelsbergh.
\newblock Perspectives on integer programming for time-dependent models.
\newblock {\em TOP - invited paper}, 27:147--173, 2019.

\bibitem{bowen}
J.~T. Bowen.
\newblock A spatial analysis of fedex and ups: hubs, spokes, and network
  structure.
\newblock {\em Journal of Transport Geography}, 24:419--431, 2012.

\bibitem{Crainic2}
T.~G. Crainic.
\newblock Service network design in freight transportation.
\newblock {\em European Journal of Operations Research}, 122(2):272--288, 2000.

\bibitem{Crainic1}
T.~G. Crainic, A.~Frangioni, and B.~Gendron.
\newblock Bundle-based relaxation methods for multicommodity capacitated fixed
  charge network design.
\newblock {\em Discrete Applied Mathematics}, 112(1):73--99, 2001.

\bibitem{Dash}
S.~Dash, O.~Gunluk, A.~Lodi, and A.~Tramontani.
\newblock A time bucket formulation for the traveling salesman problem with
  time windows.
\newblock {\em INFORMS Journal on Computing}, 24(1):132--147, 2012.

\bibitem{Fleischer2003}
L.~Fleischer and M.~Skutella.
\newblock {Minimum cost flows over time without intermediate storage}.
\newblock {\em Proceedings of the Annual ACM-SIAM Symposium on Discrete
  Algorithms}, pages 66--75, 2003.

\bibitem{FleischerSkutella2007}
L.~Fleischer and M.~Skutella.
\newblock Quickest flows over time.
\newblock {\em SIAM Journal on Computing}, 36(6):1600--1630, 2007.

\bibitem{FordFulkerson58}
L.~Ford and D.~Fulkerson.
\newblock Constructing maximal dynamic flows from static flows.
\newblock {\em Operations Research}, pages 419--433, 1958.

\bibitem{FordFulkerson59}
L.~Ford and D.~Fulkerson.
\newblock Flows in networks.
\newblock {\em Princeton University Press}, 1962.

\bibitem{gurobi}
{Gurobi Optimization, LLC}.
\newblock {Gurobi Optimizer Reference Manual}, 2023.

\bibitem{Hall2003}
A.~Hall, S.~Hippler, and M.~Skutella.
\newblock {Multicommodity flows over time: Efficient algorithms and
  complexity}.
\newblock {\em Lecture Notes in Computer Science (including subseries Lecture
  Notes in Artificial Intelligence and Lecture Notes in Bioinformatics)},
  2719:397--409, 2003.

\bibitem{DDDpaths}
E.~He, N.~Boland, G.~Nemhauser, and M.~Savelsbergh.
\newblock A dynamic discretization discovery algorithm for the minimum duration
  time-dependent shortest path problem.
\newblock {\em International conference on the integration of constraint
  programming, artificial intelligence, and operations research (CPAIOR)},
  pages 289--297, 2018.

\bibitem{enhanced}
M.~Hewitt.
\newblock Enhanced dynamic discretization discover for the continuous time load
  design problem.
\newblock {\em Transportation Science}, 53:1731--1750, 2019.

\bibitem{flexible}
M.~Hewitt.
\newblock The flexible scheduled service network design problem.
\newblock {\em Transportation Science}, 56:1000--1021, 2022.

\bibitem{amazon_regional}
A.~Jassy.
\newblock 2022 letter to shareholders,
  \url{https://www.aboutamazon.com/news/company-news/amazon-ceo-andy-jassy-2022-letter-to-shareholders},
  2023.

\bibitem{LagosDDD}
F.~Lagos, N.~Boland, and M.~Savelsbergh.
\newblock Dynamic discretization discovery for solving the continuous time
  inventory routing problem with out-and-back routes.
\newblock {\em Computers \& Operations Research}, 141, 2022.

\bibitem{amazon}
C.~L. Lara, J.~Koenemann, Y.~Nie, and C.~C. de~Souza.
\newblock Scalable timing-aware network design via lagrangian decomposition.
\newblock {\em European Journal of Operational Research}, 2023.

\bibitem{Interval_DDD}
M.~Marshall, M.~Hewitt, N.~Boland, and M.~Savelsbergh.
\newblock Interval-based dynamic discretization discovery for solving the
  continuous-time service network design problem.
\newblock {\em Transportation Science}, 55 (1), 2020.

\bibitem{Pottel}
S.~Pottel and A.~Goel.
\newblock Scheduling activities with time-dependent durations and resource
  consumptions.
\newblock {\em European Journal of Operations Research}, 2021.

\bibitem{DDDAuto}
Y.~O. Scherr, M.~Hewitt, B.~A. Neumann~Saavedra, and D.~C. Mattfeld.
\newblock Dynamic discretization discovery for the service network design
  problem with mixed autonomous fleets.
\newblock {\em Transportation Research Part B}, 141:164--195, 2020.

\bibitem{SkutellaSurvey}
M.~Skutella.
\newblock {An introduction to network flows over time}.
\newblock {\em Research Trends in Combinatorial Optimization: Bonn 2008}, pages
  451--482, 2009.

\bibitem{VanDyk}
M.~Van~Dyk and J.~Koenemann.
\newblock Dynamic discretization discovery under hard node storage constraints,
  \url{https://arxiv.org/abs/2303.01419}, 2023.

\bibitem{DDD_TSP}
D.~M. Vu, M.~Hewitt, N.~Boland, and M.~Savelsbergh.
\newblock Dynamic discretization discover for solving the time dependent
  traveling salesman problem with time windows.
\newblock {\em Transportation Science}, 54(3):703--720, 2019.

\bibitem{Vu_hewitt}
D.~M. Vu, M.~Hewitt, and D.~D. Vu.
\newblock Solving the time dependent minimum tour duration and delivery man
  problems with dynamic discretization discovery.
\newblock {\em European Journal of Operational Research}, 2022.

\bibitem{WangRegan2}
X.~Wang and A.~Regan.
\newblock On the convergence of a new time window discretization method for the
  traveling salesman problem with time window constraints.
\newblock {\em Computers and Industrial Engineering}, 56(1):161--164, 2009.

\bibitem{wieberneit}
N.~Wieberneit.
\newblock Service network design for freight transportation: a review.
\newblock {\em OR Spectrum}, 30(1), 2008.

\end{thebibliography}
\appendix

\section{Lower bound correctness for SND-RR($D_S$)}\label{app:lower_bound}
Let $\I$ be an instance of SND-RR with base graph $D = (N,A)$, commodity set $\K$, and designated subgraph $D^k = (N^k, A^k) \subseteq D$ for each $k \in \K$. Let $D_S = (N_S, A_S \cup H_S)$ be a partial network, and for each $k \in \K$ let $D_S^k$ be subgraph of $D_S$ with node set 
$N^k_S := \{(v,t) \in N_S: v \in N^k\}$
and arc set $A^k_S \cup H^k_S$ where 
$A^k_S := \{((v,t), (w,t')) \in A_S: vw \in A^k\}$ and $H^k_S := \{((v,t), (v,t'))\in H_S: v \in N^k\}$.
The following formulation is analogous to the lower bound formulation presented in Boland et al.~\cite{Boland1}.
\begin{align}\tag{SND-RR($D_S$)}\label{IP:D_S}
    \min~~ & \sum_{a \in A_S} f_a y_a + \sum_{a \in A_S} \sum_{k \in \K} c^k_a  q_k x_a^k \\
    \vspace{.5cm}
    \mbox{s.t.} ~~
	& x^k(\delta_{D_S^k}^{+}(v,t)) - x^k(\delta_{D_S^k}^{-}(v,t)) = 	\begin{cases}
        1 ~(v,t) = (o_k, r_k) \\
        -1 ~(v,t) = (d_k, l_k) \\
        0 ~\mbox{otherwise} \\
    \end{cases} \quad \forall k \in \mathcal{K}, (v,t) \in N^k_S 		\label{const1:flow}\\    
	\vspace{.5cm}
    & \sum_{k \in \mathcal{K}} q_k x_a^k \leq u_a y_a \quad \forall a \in A_S \label{const1:capacity}\\
& \sum_{a \in A_S^k} \tau_a x_a^k \leq l_k - r_k, \quad \forall k \in \K. \label{const1:feasible_D_TB}\\
    & x_a^k \in \{0,1\} \quad \forall k \in \mathcal{K}, a \in A_S^k \cup H_S^k \label{var_app1:flow} \\
	& y_a \in \mathbb{N}_{\geq 0} \quad \forall k \in \mathcal{K}, a \in A_S \label{var_app2:flow}
\end{align}
We restate properties (P1) and (P2) here for use in the following proof. 

\begin{itemize}[noitemsep]
            \item[(P1)]\textbf{Timed nodes}: For all $k \in \mathcal{K}$, $(o_k, r_k)$ and $(d_k, l_k)$ are in $N_S$;

\hspace{2.42cm} For all $v \in N$, $(v,0)$, and $(v,T)$ are in $N_S$;
            \item[(P2)]\textbf{Arc copies}: For all $a = vw \in A$, for all $(v,t) \in N_S$ with $t + \tau_{vw} \leq T$, we have
            $((v,t), (w, t')) \in A_S$ where \[t' = \max \{r: r \leq t + \tau_{vw}, (w, r) \in N_S\}.\]
        \end{itemize}

The proof of the following theorem follows from the proof of Theorem 2 in \cite{Boland1}, where we observe that the map $\mu$ defined below maps timed arcs in $D^k_T$ to timed arcs in $D^k_S$. We include the proof for completeness.
\begin{theorem}
Let $\I$ be an instance of SND-RR with base graph $D = (N,A)$, commodity set $\K$, and designated subgraph $D^k \subseteq D$ for each $k \in \K$. Let $D_S=(N_S, A_S \cup H_S)$ be a partial network that satisfies properties $(P1)$ and $(P2)$. Then an optimal solution to \ref{IP:D_S} provides a lower bound on the optimal value of \ref{IP:D_T}. 
\end{theorem}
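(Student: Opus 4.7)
The plan is to follow essentially the same template as the proof sketch already given for Theorem \ref{theorem:LB}, adapted to the SND-RR setting. I will take an arbitrary feasible solution $(\bar{x}, \bar{y})$ to \ref{IP:D_T} with corresponding trajectories $\Q = \{Q_k\}_{k \in \K}$ and underlying flat paths $\P = \{P_k\}_{k \in \K}$, and construct a feasible solution $(\hat{x}, \hat{y})$ to \ref{IP:D_S} of no greater cost by applying the round-down map $\mu: A_T \to A_S$ defined by $\mu((v,t),(w,t')) = ((v,\hat t),(w,\hat t'))$ where $\hat t = \max\{s \le t : (v,s) \in N_S\}$ and $\hat t' = \max\{s \le \hat t + \tau_{vw} : (w,s) \in N_S\}$. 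Property (P1) guarantees $(v,0) \in N_S$ so that $\hat t$ is well-defined, and (P2) guarantees that $((v,\hat t),(w,\hat t')) \in A_S$.

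Given a trajectory $Q_k$, applying $\mu$ to each movement arc and filling in holdover arcs yields a walk $\hat Q_k$ in $D_S$. I will verify (exactly as in the sketch of Theorem \ref{theorem:LB}) that the in-times do not exceed the out-times at each intermediate node, so the holdover arcs exist and $\hat Q_k$ is indeed a trajectory. The key new observation for SND-RR is that the underlying flat path of $\hat Q_k$ coincides with $P_k$, and since $P_k \subseteq D^k$ for $Q_k$ to have been feasible in $D_T^k$, the mapped trajectory $\hat Q_k$ uses only arcs of $D^k$ and hence lies in $D_S^k$. Combined with (P1), which places $(o_k, r_k)$ and $(d_k, l_k)$ in $N_S$, this yields feasibility of constraint (\ref{const1:flow}). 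Constraint (\ref{const1:feasible_D_TB}) follows because the sum $\sum_{a \in A_S^k} \tau_a \hat x_a^k$ equals the total real transit time along $P_k$, which the original trajectory $Q_k$ already fit inside the window $[r_k,\ell_k]$.

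For the cost comparison, the variable cost portion is unchanged since $\hat Q_k$ and $Q_k$ share the same flat path and $c^k$ is independent of time. For the fixed cost, I need to argue that $\mu$ preserves consolidation: if two movement arcs in $A_T$ are the same timed arc $a = ((v,t),(w,t'))$ carrying commodities $k_1$ and $k_2$, then $\mu(a)$ is a single timed arc in $A_S$ used by both $k_1$ and $k_2$. Since $\mu$ is a function, this is immediate. Consequently $\sum_k q_k \hat x_{\mu(a)}^k \ge \sum_k q_k \bar x_a^k$ for each $a \in A_T$ mapping into a fixed $\mu(a) \in A_S$, but more importantly, we can set $\hat y_{a'} := \sum_{a \in \mu^{-1}(a')} \bar y_a$ for each $a' \in A_S$ to satisfy the capacity constraint (\ref{const1:capacity}), and this choice keeps $\sum_{a'} f_{a'} \hat y_{a'} = \sum_a f_a \bar y_a$ since arc copies in $A_T$ and $A_S$ that correspond to the same underlying flat arc share the same fixed cost.

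The only step requiring care, and the one I expect to be the main obstacle, is the verification that the mapped walk $\hat Q_k$ is truly a trajectory, i.e., that successive movement arcs can be connected by holdover arcs. This comes down to the inequality $\hat t_i^{in} \le \hat t_i^{out}$ at each internal node, which is proved by chaining the definitions of $\mu$ and the original feasibility $t_i^{in} \le t_i^{out}$; the argument is identical to Claim 1 in the proof sketch of Theorem \ref{theorem:G_S_LB} and transfers verbatim to the $D_S$ setting.
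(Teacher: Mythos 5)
Your proposal is correct and follows essentially the same route as the paper's proof in Appendix A: the round-down map $\mu$, the verification via the in-time/out-time inequality that holdover arcs can stitch the mapped movement arcs into trajectories, the observation that $\mu$ preserves the flat path so $\hat{Q}_k$ stays inside $D_S^k$, and the consolidation-preservation argument for the fixed cost. Your treatment is in fact slightly more explicit than the paper's in two places --- you verify the transit-time constraint of the relaxed formulation directly and you give an explicit assignment $\hat{y}_{a'} = \sum_{a \in \mu^{-1}(a')} \bar{y}_a$ for the design variables --- but these are refinements of, not departures from, the paper's argument.
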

\begin{proof}
Let $\mu: A_T \rightarrow A_S$ be the map defined so that for each timed arc $a \in A_T$, 
\begin{equation}
    a = ((v,t), (w,t')) \quad \rightarrow  \quad \mu(a) = ((v,\hat{t}), (w, \hat{t}')),
\end{equation}  where $\hat{t} = \max\{s: s \leq t, (v, s) \in N_S\}$, and $\hat{t}' = \max\{s: s \leq \hat{t} + \tau_{vw}, (v, s) \in N_S\}$. Note that $\mu$ is well-defined since $(P1)$ ensures that there is a timed copy of each node in $N$ at time 0, and $(P2)$ ensures that there is a (unique) copy of each arc in $A$ departing the selected timed node. Moreover, this timed arc underestimates the correct transit time.

Let $(\bar{x}, \bar{y})$ be a feasible solution to (\ref{IP:D_T}) with cost $C$, and let $\mathcal{Q} = \{Q_k\}_{k \in \K}$ denote the corresponding set of trajectories. Fix $k \in \K$. The set of ordered movement arcs in $Q_k$ is $\{a_1, a_2, \ldots, a_{q_k}\}$, where each $a_i = ((v_i, t_i^{out}), (v_{i+1}, t_{i+1}^{in}))$ for each $i \in [q_k - 1]$. Since $Q_k$ is feasible, the following statements are true:
\begin{enumerate}[noitemsep]
\item[(S1)] $v_1 = o_k$, and $t_1^{out} \geq r_k$
\item[(S2)] $v_{q_k} = d_k$, and $t_{q_k}^{in} \leq l_k$
\item[(S3)] $t_{i+1}^{in} = t_i^{out} + \tau_{v_i v_{i+1}}$ for all $i \in [q_k - 1]$
\item[(S4)] $t_i^{in} \leq t_i^{out}$ for all $i \in \{2, 3, \ldots, q_k - 1\}$
\end{enumerate}

Consider the ordered set of timed arcs $\{\mu(a_1), \mu(a_2), \ldots, \mu(a_{q_k})\}$ in $A_S$, where $\mu(a_i) = ((v_i, \hat{t}_i^{out}), (v_{i+1}, \hat{t}_{i+1}^{in}))$ for each $i \in [q_k - 1]$. By an analogous argument to the proof of Claim 1 in section \ref{sec:arc-based}, $\hat{t}_i^{in} \leq \hat{t}_i^{out}$ for all $i \in \{2, 3, \ldots, q_k - 1\}$. Thus, we can obtain trajectories $\hat{\mathcal{Q}} = \{\hat{Q}_k\}_{k \in \K}$ in $D_S$ by mapping each movement arc $a \in A_T$ to $\mu(a) \in A_S$, and forming trajectories by adding holdover arcs. 

Furthermore, the map $\mu$ preserves the underlying arc, and so since $Q_k$ is a trajectory in $D_T^k$, it follows that $\hat{Q}_k$ is a trajectory in $D_S^k$. By (S1) and (S2), it follows that $\hat{t}_1^{out} \geq r_k$ and $\hat{t}_{q_k} \leq l_k$. Therefore $\hat{Q}_k$ is feasible in $D_S$ for commodity $k$. 

Finally, the  underlying paths in the flat network for $\hat{\mathcal{Q}}$ and $\mathcal{Q}$ are the same for each commodity, and any pair of commodities $k, j$ traversing the same timed arc $a \in A_T$ now traverse the same timed arc $\mu(a) \in A_S$. Therefore the cost incurred when routing $\hat{\mathcal{Q}}$ in $D_S$ by for the corresponding base arc in $A$ is at most the cost incurred when routing $\mathcal{Q}$ in $D_T$, and so the cost of $\hat{\mathcal{Q}}$ is at most $C$. 
\end{proof}

\end{document}